\documentclass{llncs}

\usepackage{amsmath,amsfonts,amssymb}
\usepackage[Symbol]{upgreek}
\usepackage[normalem]{ulem}
\usepackage{verbatim}
\usepackage{tikz}
\usepackage{wrapfig}
\usepackage{paralist}

\usepackage{cite}
\usepackage{hyperref}
\usepackage{thm-restate}

\pagestyle{plain}

\let\doendproof\endproof
\renewcommand\endproof{~\hfill\qed\doendproof}

\makeatletter
\def\infrule#1#2#3{\@ifnextchar[{\@infrule{#1}{#2}{#3}}{\@infrule{#1}{#2}{#3}[*]}}

\def\@infrule#1#2#3[#4]{
 \par\bigbreak
 \vtop{
  \hangindent3em\hangafter2\leavevmode\null
  \textsf{#1:}\kern.5em\textbf{#2}\\[\smallskipamount]
  \ifx*#4 
   \null\qquad$#3$
  \else 
   \setbox30=\hbox{\qquad$#3$\qquad#4}
   \ifdim\wd30>\hsize
    \null\qquad$#3$\par\kern-\parskip\smallskip#4
   \else
    \null\qquad$#3$\qquad#4
   \fi
  \fi
 }%
} \makeatother

\allowdisplaybreaks
\newcommand{\moregen}{\mathrel{\leq\kern-2.9pt\raise0.8pt\hbox{\llap{$\cdot$}}}}

 \newcommand{\dom}{\mathrm{Dom}}
 \newcommand{\ran}{\mathrm{Ran}}

 \newcommand{\Lra}{\Longrightarrow}
 
 \newcommand{\vars}{\mathrm{Vars}}

 \newcommand{\frN}{\mathfrak N}

 \newcommand{\id}{\mathit{Id}}

\newcommand{\swap}[2]{\ensuremath{(#1\, #2)}}

\newcommand{\idp}{\mathit{Id}}
\newcommand{\head}{\mathrm{Head}}
\newcommand{\perm}{\approx}
\newcommand{\permap}{{\mkern 1mu\cdot \mkern 1mu}}
\newcommand{\permef}{\bullet}

\newcommand{\ids}{\varepsilon}

\newcommand\nomh{{\sf h}}
\newcommand\pos{{\sf p}}

\newcommand\ppi{\uppi}
\newcommand\prho{\uprho}
\newcommand\ptau{\uptau}
\newcommand\pmu{\upmu}

\newcommand{\dotcup}{\ensuremath{\mathop{\mathaccent\cdot\cup}}}
\newcommand{\np}[2]{\langle #1, \allowbreak #2\rangle}

\newcommand{\msys}{\mathfrak{E}}

\newcommand{\mran}{A}

\newcommand{\mproblem}{E}

\newcommand{\mnabla}{\nabla}

\newcommand{\msusp}{\ppi}

\newcommand{\fresh}[1]{{\acute{#1}}}
\newcommand{\len}[1]{{\|#1\|}}

\newcommand{\vcard}[1]{\|#1\|_{_{\vars}}}

\newcommand{\Atoms}{A}
\newcommand{\atoms}{\mathrm{Atoms}}

\newcommand{\nablai}{{\nabla}}
\newcommand{\nablag}{\Gamma}

\newcommand{\fe}{{\sf FC}}

\newcommand{\abscard}[1]{\|#1\|_{_{\sf Abs}}}

\newcommand{\unused}{\mathit{fresh}}

\newcommand{\fatoms}{\mathrm{FA^{\text{-s}}}}
\newcommand{\fatom}{\mathrm{FA}}

\hyphenation{op-tical net-works semi-conduc-tor}

\title{Nominal Anti-Unification}

\titlerunning{Nominal Anti-Unification}

\author{Alexander Baumgartner\inst{1} \and Temur Kutsia\inst{1} \and Jordi Levy\inst{2} \and Mateu Villaret\inst{3}}

\institute{Research Institute for Symbolic Computation\\
Johannes Kepler University Linz, Austria\\
\email{\{abaumgar,kutsia\}@risc.jku.at} \and
Artificial Intelligence Research Institute (IIIA)\\
  Spanish Council for Scientific Research (CSIC), 
  Barcelona, Spain\\
  \email{levy@iiia.csic.es} \and
Departament d'Inform\`{a}tica i Matem\`{a}tica Aplicada (IMA)\\
  Universitat de Girona (UdG),
  Girona, Spain.\\
  \email{villaret@ima.udg.edu}
}

\begin{document}
\maketitle

\begin{abstract}
We study nominal anti-unification, which is concerned with computing
least general generalizations for given terms-in-context. In general,
the problem does not have a least general solution, but if the set of
atoms permitted in generalizations is finite, then there exists a
least general generalization which is unique modulo variable renaming
and $\alpha$-equivalence. We present an algorithm that computes
it. The algorithm relies on a subalgorithm that constructively decides
equivariance between two terms-in-context. We prove soundness and
completeness properties of both algorithms and analyze their
complexity.  Nominal anti-unification can be applied to problems were
generalization of first-order terms is needed (inductive learning,
clone detection, etc.), but bindings are involved.
\end{abstract}

\section{Introduction}

Binders are very common in computer science, logic, mathematics,
linguistics. Functional abstraction $\lambda$, universal quantifier
$\forall$, limit $\lim$, integral $\int$ are some well-known examples
of binders. To formally represent and study systems with binding,
Pitts and
Gabbay~\cite{DBLP:conf/lics/GabbayP99,DBLP:journals/fac/GabbayP02,GabbayThesis}
introduced nominal techniques, based on the idea to give explicit
names to bound entities. It makes a syntactic distinction between
\emph{atoms,} which can be bound, and \emph{variables,} which can be
substituted. This approach led to the development of the theory of
nominal sets, nominal logic, nominal algebra, nominal rewriting,
nominal logic programming, etc.

Equation solving between nominal terms (maybe together with freshness
constraints) has been investigated by several authors, who designed
and analyzed algorithms for nominal
unification~\cite{DBLP:journals/tcs/UrbanPG04,
  DBLP:conf/rta/LevyV08, DBLP:conf/rta/LevyV10,
  DBLP:journals/tocl/LevyV12, DBLP:journals/tcs/CalvesF08,
  CalvesThesis}, nominal matching~\cite{DBLP:journals/jcss/CalvesF10},
equivariant unification~\cite{DBLP:journals/jar/Cheney10}, and permissive
nominal unification~\cite{dowek09,DBLP:journals/igpl/DowekGM10}. However, in
contrast to unification, its dual problem, anti-unification, has not
been studied for nominal terms previously.

The anti-unification problem for two terms $t_1$ and $t_2$ is
concerned with finding a term $t$ that is more general than the
original ones, i.e., $t_1$ and $t_2$ should be substitutive instances
of $t$. The interesting generalizations are the least general ones,
which retain the common structure of $t_1$ and $t_2$ as much as
possible. Plotkin~\cite{Plotkin70} and Reynolds~\cite{Reynolds70}
initiated research on anti-unification in the 1970s, developing
generalization algorithms for first-order terms. Since then,
anti-unification has been studied in various theories, including some
of those with binding constructs: calculus of
constructions~\cite{DBLP:conf/lics/Pfenning91},
$M\lambda$~\cite{DBLP:conf/icml/FengM92}, second-order lambda calculus
with type variables~\cite{DBLP:journals/amai/LuMHH00}, simply-typed
lambda calculus where generalizations are higher-order
patterns~\cite{DBLP:conf/rta/BaumgartnerKLV13}, just to name a few.

The problem we address in this paper is to compute generalizations for
nominal terms. More precisely, we consider this problem for nominal
\emph{terms-in-context}, which are pairs of a freshness context and a
nominal term, aiming at computing their least general generalizations
(lgg). However, it turned out that without a restriction, there is no
lgg for terms-in-context, in general. Even more, a \emph{minimal}
complete set of generalizations does not exist. This is in sharp
contrast with the related problem of anti-unification for higher-order
patterns, which always have a single
lgg~\cite{DBLP:conf/rta/BaumgartnerKLV13}. The reason is one can make
terms-in-context less and less general by adding freshness constraints
for the available (infinitely many) atoms, see
Example~\ref{exmp:nullary}.  Therefore, we restrict the set of atoms
which are permitted in generalizations to be fixed and finite. In this
case, there exists a single lgg (modulo $\alpha$-equivalence and
variable renaming) for terms-in-context and we design an algorithm to
compute it in $O(n^5)$ time.

There is a close relation between nominal and higher-order pattern
unification: One can be translated into the other by the
solution-preserving translation defined in \cite{cheney2005,
  DBLP:conf/rta/LevyV08, DBLP:journals/tocl/LevyV12} or the
translation defined for permisive terms
in~\cite{dowek09,DBLP:journals/igpl/DowekGM10}. We show that for
anti-unification, this method, in general, is not applicable.
Even if one finds conditions under which such a translation-based
approach to anti-unification works, due to complexity reasons it is
still better to use the direct nominal anti-unification algorithm
developed in this paper.

Computation of nominal lgg's requires to solve the equivariance
problem: Given two terms $s_1$ and $s_2$, find a permutation of atoms
which, when applied to $s_1$, makes it $\alpha$-equivalent to $s_2$
(under the given freshness context). This is necessary to guarantee
that the computed generalization is \emph{least} general. For
instance, if the given terms are $s_1=f(a,b)$ and $s_2=f(b,a)$, where
$a,b$ are atoms, the freshness context is empty, and the atoms
permitted in the generalization are $a,b,$ and $c$, then the
term-in-context $\np{\{c\#X, c\#Y\}}{f(X,Y)}$ generalizes
$\np{\emptyset}{s_1}$ and $\np{\emptyset}{s_2}$, but it is not their
lgg. To compute the latter, we need to reflect the fact that
generalizations of the atoms are related to each other: One can be
obtained from the other by swapping $a$ and $b$. This leads to an lgg
$\np{\{c\#X\}}{f(X,\swap{a}{b}\permap X)}$. To compute the permutation
$\swap{a}{b}$, an equivariance problem should be solved. Equivariance is
already present in $\alpha$-Prolog~\cite{alphaprolog} and 
Isabelle~\cite{isabelle}.  We develop a rule-based algorithm
for equivariance problems, which computes in quadratic time the
justifying permutation if the input terms are equivariant, and fails
otherwise. 

Both anti-unification and equivariance algorithms are implemented in the anti-unification algorithm library~\cite{DBLP:conf/jelia/BaumgartnerK14} and can be accessed from

\url{http://www.risc.jku.at/projects/stout/software/}.

Various variants of anti-unification, such as first-order,
higher-order, or equational anti-unification have been used in
inductive logic programming, logical and relational
learning~\cite{DeRaedt2008}, reasoning by
analogy~\cite{DBLP:conf/ausai/KrumnackSGK07}, program
synthesis~\cite{DBLP:books/sp/Schmid03}, program
verification~\cite{DBLP:journals/amai/LuMHH00}, etc. Nominal
anti-unification can, hopefully, contribute in solving similar
problems in nominal setting or in first-order settings where bindings play
an important role.

In this paper, we mainly follow the notation
from~\cite{DBLP:journals/tocl/LevyV12}.

\section{Nominal Terms}
\label{sect:nominal}
In \emph{nominal signatures}
we have \emph{sorts of atoms} (typically $\nu$) and \emph{sorts of
  data} (typically $\delta$) as disjoint sets. \emph{Atoms} (typically
$a, b, \ldots$) have one of the sorts of atoms. \emph{Variables}
(typically $ X, Y, \ldots$) have a sort of atom or a sort of data,
i.e. of the form $\nu \mid \delta$. In nominal terms, variables can
be instantiated and atoms can be bound. Nominal function symbols
(typically $ f, g, \ldots$) have an arity of the form $\tau_1 \times
\cdots\times \tau_n \rightarrow \delta$, where $\delta$ is a sort of
data and $\tau_i$ are sorts given by the grammar $\tau ::= \nu \mid
\delta \mid \langle \nu\rangle \tau$ . Abstractions have sorts of the
form $\langle \nu\rangle \tau.$

A \emph{swapping} $\swap{a}{b}$ is a pair of atoms of the same
sort. A \emph{permutation} is a (possibly empty) sequence of swappings. We use upright Greek letters (e.g., $\ppi, \prho$) to denote permutations.
\emph{Nominal terms} (typically $ t, s, u, r, q, \ldots$) are given by the
grammar:
\[ t ::=  f(t_1 , \ldots , t_n ) \mid a \mid a.t \mid \ppi \permap  X\]
where $ f$ is an $n$-ary function symbol, $a$ is an atom, $\ppi$ is a
permutation, and $ X$ is a variable. They are called respectively
\emph{application, atom, abstraction,} and \emph{suspension}. The
sorts of application and atomic terms are defined as usual, the sort
of $a.t$ is $\langle \nu\rangle \tau$ where $\nu$ is the sort of $a$
and $\tau$ is the sort of $t$, and the sort of $\ppi \permap X$ is the
one of $X$.

The \emph{inverse} of a permutation $\ppi=\swap{a_1}{b_1} \ldots \swap{a_n}{b_n}$ is the permutation $\swap{a_n}{b_n} \ldots \swap{a_1}{b_1}$, denoted by $\ppi^{-1}$. The empty permutation is denoted by $\idp$. 
The \emph{effect of a swapping} over an atom is defined by
$\swap{a}{b} \permef a = b$, $\swap{a}{b}\permef b = a$ and
$\swap{a}{b}\permef c = c$, when $c \notin \{a, b\}$. It is extended
to the rest of terms: $\swap{a}{b}\permef f(t_1,\ldots,t_n) =
f(\swap{a}{b}\permef t_1,\ldots,\swap{a}{b}\permef t_n)$,
$\swap{a}{b}\permef (c.t) = \left ( \swap{a}{b}\permef c \right
).\left (\swap{a}{b} \permef t \right ) $, and $\swap{a}{b}\permef
\ppi \permap X = \swap{a}{b}\ppi \permap X$, where $\swap{a}{b}\ppi$
is the permutation obtained by concatenating $\swap{a}{b}$ and
$\ppi$. The \emph{effect of a permutation} is defined by
$\swap{a_1}{b_1} \ldots \swap{a_n}{b_n}\permef t =
\swap{a_1}{b_1}\permef \left (\swap{a_2}{b_2} \ldots
\swap{a_n}{b_n}\permef t\right )$. The effect of the empty permutation
is $\idp\permef t = t$. We extend it to suspensions and write $X$ as
the shortcut of $\id\permap X$.

The set of variables of a term $t$ is denoted by $\vars(t)$. A term
$t$ is called \emph{ground} if $\vars(t)=\emptyset$. The set of
\emph{atoms} of a term $t$ or a permutation $\ppi$ is the set of all
atoms which appear in it and is denoted by $\atoms(t)$, $\atoms(\ppi)$
respectively. For instance, $\atoms(f(a.g(a), \swap{b}{c}\permap X,
d)=\{a,b,c,d\}$. We write $\atoms(t_1,\dots,t_n)$ for the set
$\atoms(t_1)\cup\dots\cup\atoms(t_n)$.

The \emph{length} of a term $t$ is the number of all appearances of
atoms, variables, and function symbols in it and we denote it by
$\len{t}$, e.g., $\len{f(a.\swap{a}{b} \permap X,X,Y)}=7$. The number
of variable occurrences in a term $t$ is denoted by $\vcard{t}$, e.g.,
$\vcard{f(a. \swap{a}{b} \permap X,X,Y)}=3$.

  \begin{wrapfigure}[12]{r}{0.34\textwidth}
  \begin{tikzpicture}
  \node (f) at (3.7,4) {$f$};
  \node (fpos) at (4.1,4) {$\epsilon$};
  \node (aabs) at (3,3.5) {$a.$};
  \node (aabspos) at (3.4,3.5) {\footnotesize{$1$}};
  \node (babs) at (3,2.8) {$b.$};
  \node (babspos) at (3.55,2.8) {\footnotesize{$1.1$}};
  \node (g) at (3,2.1) {$g$};
  \node (gpos) at (3.65,2.1) {\footnotesize{$1.1.1$}};
  \node (susp) at (2,1.4) {$\swap{a}{b}\permap X$};
  \node (susppos) at (3.09,1.4) {\footnotesize{$1.1.1.1$}};
  \node (a) at (4,1.4) {$a$};
  \node (apos) at (4.7,1.4) {\footnotesize{$1.1.1.2$}};
  \node (h) at (4.4,3.5) {$h$};
  \node (hpos) at (4.8,3.5) {\footnotesize{$2$}};
  \node (c) at (4.4,2.8) {$c$};
  \node (cpos) at (4.9,2.8) {\footnotesize{$2.1$}};
  \draw[thick] (f) -- (aabs);
  \draw[thick] (f) -- (h);
  \draw[thick] (aabs) -- (babs);
  \draw[thick] (babs) -- (g);
  \draw[thick] (g) -- (susp);
  \draw[thick] (g) -- (a);
  \draw[thick] (h) -- (c);
\end{tikzpicture}
\caption{The tree form and positions of the term $f(a.b.g(\swap{a}{b}\permap X,a), h(c))$.}
\label{fig:tree}
\end{wrapfigure}

\emph{Positions} in terms are defined with respect to their tree
representation in the usual way, as strings of integers. However,
suspensions are put in a single leaf node. For instance, the
tree form of the term $f(a.b.g(\swap{a}{b}\permap X,a), h(c))$, and
the corresponding positions are shown in Fig.~\ref{fig:tree}. The
symbol $f$ stands in the position $\epsilon$ (the empty sequence). The
suspension is put in one node of the tree, at the position
$1.1.1.1$. The abstraction operator and the corresponding bound atom
together occupy one node as well. For any term $t$, $t|_\pos$ denotes
the \emph{subterm of} $t$ \emph{at position} $\pos$. For instance,
$f(a.b.g(\swap{a}{b}\permap X,a), h(c))|_{1.1}=b.g(\swap{a}{b}\permap X,a)$.

The \emph{path to a position} in a term is defined as the
sequence of expressions from the root to the node at that position (not
including) in the tree form of the term, e.g., the
path to the position $1.1.1.2$ in $f(a.b.g(\swap{a}{b}\permap X,a), h(c))$ is $f,a.,b.,g$. Since suspensions are always in leaves, they never appear in a path.

Every permutation $\ppi$ naturally defines a bijective function from
the set of atoms to the sets of atoms, that we will also represent as
$\ppi$. Suspensions are uses of variables with a permutation of atoms
waiting to be applied once the variable is instantiated. Occurrences
of an atom $a$ are said to be bound if they are in the scope of an
abstraction of $a$, otherwise are said to be free. We
  denote by $\fatom(t)$ the set of all atoms which occur freely in
  $t$: $\fatom(f(t_1,\ldots,t_n))=\bigcup_{i=1}^{n} \fatom(t_i)$,
  $\fatom(a)=\{a\}$, $\fatom(a.t)=\fatom(t)\setminus\{a\}$, and
  $\fatom(\ppi\permap X)=\atoms(\ppi)$.
  $\fatoms(t)$ is the set of all atoms which occur freely in
  $t$ ignoring suspensions:
  $\fatoms(f(t_1,\ldots,t_n))=\bigcup_{i=1}^{n} \fatoms(t_i)$,
  $\fatoms(a)=\{a\}$, $\fatoms(a.t)=\fatoms(t)\setminus\{a\}$, and
  $\fatoms(\ppi\permap X)=\emptyset$.

The head of a term $t$, denoted $\head(t)$, is defined as: $\head({ f(t_1,\ldots,t_n)})= f$, $\head({ a})= a$, $\head({
  a.t})= .$, and $\head({ \ppi\permap X}) = X$.

Substitutions are defined in the standard way, as a mapping from
variables to terms of the same sort.
We use Greek letters
$\sigma,\vartheta,\varphi$ to denote substitutions. The identity
substitution is denoted by $\ids$. 
Furthermore, we use the postfix notation for substitution applications, i.e.~$t\sigma$ denotes the application of a substitution $\sigma$ to a term $t$, and similarly, the composition of two substitutions $\sigma$ and $\vartheta$ is written as $\sigma\vartheta$. Composition of two substitutions is performed as usual.
Their application allows atom
capture, for instance, $a.X\{X\mapsto a\}= a.a$, and forces the
permutation effect: $\ppi\permap X\{X\mapsto t\}=\ppi\permef t$, for
instance, $\swap{a}{b}\permap X\{X\mapsto f(a,\swap{a}{b}\permap
Y)\}=f(b,\swap{a}{b}\swap{a}{b}\permap Y)$. The notions of substitution
\emph{domain} and \emph{range} are also standard and are denoted,
respectively, by $\dom$ and $\ran$.

A \emph{freshness constraint} is a pair of the form $a \# X$ stating
that the instantiation of $ X$ cannot contain free occurrences of
$a$. A \emph{freshness context} is a finite set of freshness
  constraints. We will use $\nabla$ and $\Gamma$ to denote freshness
  contexts. $\vars(\nabla)$ and $\atoms(\nabla)$ denote respectively
the set of variables and atoms of $\nabla$.

We say that a substitution $\sigma$ \emph{respects} a freshness
context $\nabla$, if for all $X$, $\fatoms(X\sigma)\cap
\{a \mid a\# X \in \nabla\} = \emptyset$.

The predicate $\approx$, which stands for $\alpha$-equivalence between terms, 
and the freshness predicate $ \#$ were defined in
\cite{DBLP:conf/csl/UrbanPG03,DBLP:journals/tcs/UrbanPG04} by the following theory:

\[\frac{}{ \nabla\vdash a\approx a}\text{($\approx$-atom)}\qquad \frac{ \nabla \vdash t \approx t'}{ \nabla\vdash a.t\approx a.t'}\text{($\approx$-abs-1)} \]

\[\frac{ a\neq a'\quad \nabla \vdash t \approx (a\,a')\permef t' \quad \nabla \vdash a\# t'}{ \nabla\vdash a.t\approx a'.t'}\text{($\approx$-abs-2)}\]

\[\frac{{ a\# X}\in \nabla \text{ for all $a$ such that $ \ppi \permef a \neq \ppi'\permef a$ }}{ \nabla \vdash \ppi \permap X \approx \ppi'\permap X}\text{($\approx$-susp.)}\]

\[\frac{ \nabla \vdash t_1\approx t'_1 \quad \cdots \quad \nabla \vdash t_n\approx t'_n}{ \nabla \vdash f(t_1,\ldots t_n)\approx f(t'_1,\ldots,t'_n)}(\text{$\approx$-application})\]
where the freshness predicate $ \#$ is defined by
\[\frac{ a\neq a'}{ \nabla \vdash a\# a'}\text{($\#$-atom)}\qquad \frac{ (\ppi^{-1}\permef a\# X)\in \nabla}{ \nabla \vdash a\#\ppi\permap X}\text{($\#$-susp.)}\]

\[\frac{ \nabla \vdash a\#t_1 \quad \cdots \quad \nabla \vdash a\#t_n}{ \nabla \vdash a\#f(t_1,\ldots t_n)}(\text{$\#$-application})\]

\[\frac{}{ \nabla \vdash a\# a.t}\text{($\#$-abst-1)}\qquad \frac{ a\neq a' \quad \nabla\vdash a\# t}{ \nabla \vdash a\#a'.t}\text{($\#$-abst-2)}\]

Their intended meanings are:
\begin{enumerate}
  \item $ \nabla \vdash a \# t$ holds, if for every substitution
    $\sigma$ such that $t\sigma$ is a ground term and $\sigma$ respects the freshness context $\nabla$, we have $a$ is not
    free in $ t\sigma$;
  \item $ \nabla \vdash t \approx u$ holds, if for every substitution
    $\sigma$ such that $t\sigma$ and $u\sigma$ are ground terms and $\sigma$ respects the freshness context $\nabla$, $t\sigma$ and $u\sigma$ are $\alpha$-equivalent.
\end{enumerate}

Based on the definition of the freshness predicate, we can design an
algorithm which solves the following problem:
\begin{description}
\item[Given:] A set of freshness formulas $\{ a_1\# t_1,\dots,\allowbreak
a_n\# t_n\}$.
\item[Compute:] A \emph{minimal} (with respect to $\subseteq$)
freshness context $\nabla$ such that $ \nabla\vdash
a_1\#t_1,\dots,\nabla\vdash a_n\# t_n$.
\end{description}

Such a $\nabla$ may or may not
exist, and the algorithm should detect it. 

We give a rule-based description of the algorithm, which we call $\fe$
for it is supposed to compute a freshness context. The rules operate
on pairs $F;\nabla$, where $F$ is a set of atomic freshness formulas
of the form $ a\# t$, and $\nabla$ is a freshness context. $\dotcup$
stands for disjoint union. The rules are the following:

\infrule{Del-\fe}{Delete in \fe}
    {\{{ a\# b}\}\dotcup F;\;\nabla \Lra F;\;\nabla,}
    [if $ a\neq b$.]

\infrule{Abs-\fe 1}{Abstraction in {\fe} 1}
    {\{{ a\# a.t}\}\dotcup F;\;\nabla \Lra F;\;\nabla.}

\infrule{Abs-\fe 2}{Abstraction in {\fe} 2}
    {\{{ a\# b.t}\}\dotcup F;\;\nabla \Lra \{{ a\# t}\}\cup F;\;\nabla,}
    [if $ a\neq b$.]

\infrule{Dec-\fe}{Decomposition in \fe}
    {\{{ a\# f(t_1,\ldots,t_n)}\}\dotcup F;\;\nabla \Lra  \\
     \{{ a\# t_1,\ldots,a\# t_n}\}\cup F;\;\nabla.}

\infrule{Sus-\fe}{Suspension in {\fe}}
     {\{{ a\# \ppi \permap X}\}\dotcup F;\;\nabla \Lra F;\;\{{ \ppi^{-1}\permef a\# X }\}\cup \nabla.} \vspace{0.3cm}

To compute a minimal freshness context which justifies the atomic
freshness formulas $ a_1\# t_1,\ldots, a_n\# t_n$, we start with $\{{
  a_1\# t_1,\ldots, a_n\# t_n}\};\emptyset$ and apply the rules of
${\fe}$ as long as possible. It is easy to see that the algorithm
terminates. The state to which no rule applies has either the form
$\emptyset; \nabla$ or $\{{ a\# a}\}\cup F;\nabla$, where $\nabla$ is
a freshness context. In the former case we say that the algorithm
succeeds and computes $\nabla$, writing this fact as $\fe(\{{ a_1\#
  t_1,\ldots, a_n\# t_n}\})=\nabla$. In the latter case we say that
${\fe}$ fails and write $\fe(\{{ a_1\# t_1,\ldots, a_n\#
  t_n}\})=\bot$. 

The following theorem is easy to verify:

\begin{theorem}
  \label{thm:properties:of:FC}
   Let $F$ be a set of freshness formulas and $\nabla$ be a freshness context. Then ${\fe}(F) \subseteq \nabla$ iff $\nabla\vdash a\# t$ for all ${ a\# t}\in F$.
\end{theorem}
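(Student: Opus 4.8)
The plan is to track a semantic invariant along the rewrite sequence computed by $\fe$. For a state $F;\nabla$ call a freshness context $\Gamma$ a \emph{realizer} of $F;\nabla$ if $\nabla\subseteq\Gamma$ and $\Gamma\vdash a\#t$ for every $a\#t\in F$; write $\mathcal{R}(F;\nabla)$ for the set of realizers. The key lemma I would prove is that a single rewrite step preserves this set: if $F;\nabla \Lra F';\nabla'$ then $\mathcal{R}(F;\nabla)=\mathcal{R}(F';\nabla')$. Granting this, since $\fe$ starts from $F;\emptyset$ and, on success, terminates in a normal form $\emptyset;\fe(F)$, an easy induction on the length of the derivation gives $\mathcal{R}(F;\emptyset)=\mathcal{R}(\emptyset;\fe(F))$. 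Unfolding the two sides, $\Gamma\in\mathcal{R}(F;\emptyset)$ iff $\Gamma\vdash a\#t$ for all $a\#t\in F$ (the condition $\emptyset\subseteq\Gamma$ being vacuous), while $\Gamma\in\mathcal{R}(\emptyset;\fe(F))$ iff $\fe(F)\subseteq\Gamma$. Equality of the two realizer sets, instantiated at the specific context $\nabla$, is exactly the claimed biconditional.

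The preservation lemma I would establish by a case analysis on the applied rule, the decisive fact in each case being that the corresponding clause of the $\#$-theory is \emph{invertible}, i.e. it is the only rule whose conclusion matches the rewritten formula. For Del-$\fe$ and Abs-$\fe$1 the removed formula $a\#b$ (with $a\neq b$) resp.\ $a\#a.t$ is derivable in \emph{every} context by $\#$-atom resp.\ $\#$-abst-1, so dropping it changes no realizer. For Abs-$\fe$2 and Dec-$\fe$ I would use that $\#$-abst-2 and $\#$-application are, for the shapes in question, the unique applicable rules, so $\Gamma\vdash a\#b.t$ iff $\Gamma\vdash a\#t$ (when $a\neq b$) and $\Gamma\vdash a\#f(t_1,\dots,t_n)$ iff $\Gamma\vdash a\#t_i$ for all $i$; the remaining formulas of $F$ and the context are untouched, so realizer sets coincide. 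The subtle case is Sus-$\fe$, the only rule that modifies $\nabla$: by invertibility of $\#$-susp, $\Gamma\vdash a\#\ppi\permap X$ holds iff $(\ppi^{-1}\permef a\#X)\in\Gamma$, and this matches precisely the new membership requirement coming from $\{\ppi^{-1}\permef a\#X\}\cup\nabla\subseteq\Gamma$. Verifying this alignment, and checking that the invertibility claims hold literally against the inductive definition of $\#$, is the main technical obstacle.

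Finally I would treat the failure branch. If $\fe$ halts in a state $\{a\#a\}\cup F';\nabla$, then $\mathcal{R}$ of that state is empty, because no rule of the $\#$-theory has a conclusion of the form $\Gamma\vdash a\#a$ (the only atom rule requires distinct atoms), so $\Gamma\vdash a\#a$ is underivable in every $\Gamma$. By the preservation lemma the start state $F;\emptyset$ then has no realizer either, so $\nabla\vdash a\#t$ for all $a\#t\in F$ fails for every $\nabla$; since $\fe(F)=\bot$ makes the left-hand side false as well, the biconditional holds vacuously. The same realizer-equality argument yields, as a by-product, uniqueness of the normal form: if $\emptyset;\nabla_1$ and $\emptyset;\nabla_2$ were both reachable then $\{\Gamma:\nabla_1\subseteq\Gamma\}=\{\Gamma:\nabla_2\subseteq\Gamma\}$, which forces $\nabla_1=\nabla_2$, justifying the notation $\fe(F)$.
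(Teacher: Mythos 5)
The paper offers no proof of this theorem---it is dismissed as ``easy to verify''---so there is no argument of the authors' to compare against; your realizer-invariant proof is correct and is exactly the verification they leave implicit. Each $\fe$ rule mirrors an invertible clause of the $\#$-theory (the inversions you claim do hold literally, since for each syntactic shape of the right-hand term exactly one $\#$-rule has a matching conclusion, and $\nabla\vdash a\#a$ is underivable), so the set of justifying contexts is preserved along the derivation, and evaluating it at the initial and final states yields the stated biconditional; your handling of the failure branch and the resulting uniqueness of the normal form are also sound.
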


\begin{corollary}
  \label{thm:properties:of:FC:2}
   ${\fe}(F)=\bot$ iff there is no freshness context that would justify all formulas in $F$.
\end{corollary}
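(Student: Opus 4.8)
The plan is to derive the corollary from Theorem~\ref{thm:properties:of:FC}, using the fact that the terminating algorithm ${\fe}$ produces exactly one of two outcomes on input $F$: it either succeeds, returning a freshness context ${\fe}(F)$, or it fails, returning $\bot$. I treat the two directions of the biconditional separately, reading ${\fe}(F)\subseteq\nabla$ as \emph{false} for every freshness context $\nabla$ whenever ${\fe}(F)=\bot$.

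For the direction ``${\fe}(F)\neq\bot$ implies a justifying context exists'' (the contrapositive of ``no justifying context $\Rightarrow {\fe}(F)=\bot$''), I would argue as follows. Suppose ${\fe}$ succeeds and returns the freshness context $\nabla_0 := {\fe}(F)$. Since $\nabla_0 \subseteq \nabla_0$ holds trivially, Theorem~\ref{thm:properties:of:FC} applied with $\nabla = \nabla_0$ yields $\nabla_0 \vdash a\# t$ for every $a\# t \in F$. Thus $\nabla_0$ is itself a freshness context justifying all formulas of $F$, which settles this direction.

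For the remaining direction, ``${\fe}(F)=\bot$ implies no justifying context exists'', I would rely on the step-invariant underlying Theorem~\ref{thm:properties:of:FC}: each rule $F;\nabla \Lra F';\nabla'$ preserves, for every freshness context $\Gamma$, the equivalence between ``$\Gamma\vdash a\# t$ for all $a\# t \in F$ and $\nabla\subseteq\Gamma$'' and the corresponding statement for $F',\nabla'$. This is checked rule by rule against the inference rules for $\#$: Del-\fe uses ($\#$-atom), Abs-\fe1 uses ($\#$-abst-1), Abs-\fe2 uses ($\#$-abst-2), Dec-\fe uses ($\#$-application), and Sus-\fe uses ($\#$-susp.), the last merely moving the constraint $\ppi^{-1}\permef a\# X$ from the problem into the context without changing satisfiability. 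Chaining the invariant from the initial state $F;\emptyset$ to the terminal state, ``a context justifies $F$'' becomes equivalent to ``some $\Gamma$ satisfies the terminal condition''. When ${\fe}(F)=\bot$ the terminal state has the form $\{a\# a\}\dotcup F'';\nabla''$, and its condition requires $\Gamma\vdash a\# a$; since no inference rule can derive $\nabla\vdash a\# a$ (the rule ($\#$-atom) applies only to distinct atoms), this holds for no $\Gamma$. Hence no context justifies $F$.

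The main obstacle is precisely this last direction: Theorem~\ref{thm:properties:of:FC} as phrased speaks of ${\fe}(F)\subseteq\nabla$ and so presupposes that ${\fe}(F)$ is a set, i.e.\ that the algorithm succeeds; on its face it says nothing when ${\fe}(F)=\bot$. The real work is therefore to show that a failure state is genuinely unsatisfiable, which reduces to the single observation that the formula $a\# a$ admits no justifying freshness context. Everything else is routine, since termination of ${\fe}$ and the soundness and completeness of each rule are already available (or immediate from the $\#$-rules).
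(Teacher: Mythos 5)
Your proposal is correct and follows essentially the route the paper intends: the paper offers no explicit proof of this corollary (it is presented as an immediate consequence of Theorem~\ref{thm:properties:of:FC} and the rule-level soundness of ${\fe}$), and your argument---taking $\nabla={\fe}(F)$ itself for the success direction, and a per-rule satisfiability invariant culminating in the unsatisfiability of $a\# a$ for the failure direction---is exactly the verification the paper leaves to the reader. Your observation that Theorem~\ref{thm:properties:of:FC} as literally phrased presupposes success, so the failure direction needs the separate invariant argument, is an accurate and worthwhile clarification.
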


Given a freshness context $\nabla$ and a substitution $\sigma$, we
define $\nabla\sigma = {\fe}(\allowbreak \{ a\# X\sigma \mid a\# X\in \nabla\})$. The following lemma is straightforward:

\begin{lemma}
  \label{lemma:fc:respects:context}
  $\sigma$ respects $\nabla$ iff $\nabla\sigma\neq \bot$.
\end{lemma}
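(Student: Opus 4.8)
The plan is to unfold both sides and reduce everything to the already-established properties of $\fe$. By definition $\nabla\sigma=\fe(F)$ with $F=\{a\# X\sigma\mid a\# X\in\nabla\}$, so by Corollary~\ref{thm:properties:of:FC:2} the inequality $\nabla\sigma\neq\bot$ is equivalent to the existence of some freshness context $\Gamma$ with $\Gamma\vdash a\# X\sigma$ for every $a\# X\in\nabla$. On the other side, unfolding the definition of \emph{respects}, the statement ``$\sigma$ respects $\nabla$'' says exactly that for every $a\# X\in\nabla$ we have $a\notin\fatoms(X\sigma)$. Thus it suffices to prove that these two conditions coincide.

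First I would record the obvious \emph{monotonicity} of derivability: if $\Gamma\vdash a\# t$ and $\Gamma\subseteq\Gamma'$ then $\Gamma'\vdash a\# t$. This is an immediate induction on the derivation, since the only rule that inspects the context, ($\#$-susp.), uses membership positively. Monotonicity lets me replace the single simultaneous witness $\Gamma$ by per-formula witnesses: a context justifying all formulas of $F$ exists iff each individual formula $a\# X\sigma$ is justifiable on its own, the nontrivial direction being that the union of the per-formula contexts justifies them all at once. Hence $\nabla\sigma\neq\bot$ iff for each $a\# X\in\nabla$ there is some $\Gamma$ with $\Gamma\vdash a\# X\sigma$.

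The heart of the argument is then the pointwise characterization
\[
\bigl(\exists\,\Gamma.\ \Gamma\vdash a\# t\bigr)\quad\Longleftrightarrow\quad a\notin\fatoms(t),
\]
which I would prove by structural induction on $t$, reading each clause in the definition of $\fatoms$ against the corresponding $\#$-rule (each head symbol forces a unique applicable rule). For an atom $b$, ($\#$-atom) derives $a\# b$ precisely when $a\neq b$, matching $\fatoms(b)=\{b\}$. For an application, ($\#$-application) reduces the goal to all arguments, so by the induction hypothesis and monotonicity (to recombine the witnesses) the condition becomes $a\notin\bigcup_i\fatoms(t_i)$. The abstraction cases use ($\#$-abst-1) and ($\#$-abst-2) and line up with $\fatoms(a.t')=\fatoms(t')\setminus\{a\}$, so capture by the bound atom is handled correctly. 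The decisive case is the suspension: ($\#$-susp.) shows $a\#\ppi\permap X$ is \emph{always} justifiable, by choosing any $\Gamma\supseteq\{\ppi^{-1}\permef a\# X\}$, which is exactly why $\fatoms$ ignores suspensions and sets $\fatoms(\ppi\permap X)=\emptyset$.

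Combining the three steps, $\nabla\sigma\neq\bot$ iff $a\notin\fatoms(X\sigma)$ for every $a\# X\in\nabla$, which is the definition of $\sigma$ respecting $\nabla$. The only place requiring care is the inductive characterization of single-formula justifiability: one must keep the suspension clause of $\fatoms$ synchronized with ($\#$-susp.) and make sure the abstraction clauses track bound-atom capture, so that the atoms which genuinely obstruct freshness are precisely those of $\fatoms(t)$ rather than of $\fatom(t)$.
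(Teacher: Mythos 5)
Your proof is correct. The paper offers no argument for this lemma (it is dismissed as ``straightforward''), so there is nothing to compare against; your reduction via Corollary~\ref{thm:properties:of:FC:2} to the pointwise characterization $(\exists\,\Gamma.\ \Gamma\vdash a\# t)\Leftrightarrow a\notin\fatoms(t)$, established by structural induction with monotonicity of $\vdash$ in the context to recombine witnesses, is a sound and complete way to fill in the details --- in particular you correctly identify the suspension case (always justifiable by enlarging $\Gamma$, matching $\fatoms(\ppi\permap X)=\emptyset$) as the point where $\fatoms$ rather than $\fatom$ is the right notion. An equally direct alternative would be to reason operationally on the $\fe$ rules themselves, observing that a run can only get stuck on $a\# a$ when $a$ occurs free outside suspensions, but that carries the same content as your pointwise lemma.
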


When $\nabla\sigma\neq \bot$, we call $\nabla\sigma$ the
\emph{instance} of $\nabla$ under $\sigma$.

It is not hard to see that (a)  if $\sigma$ respects $\nabla$, then $\sigma$ respects any $\nabla'\subseteq \nabla$, and (b) if $\sigma$ respects $\nabla$ and $\vartheta$ respects $\nabla\sigma$, then $\sigma\vartheta$ respects $\nabla$ and $(\nabla\sigma)\vartheta=\nabla(\sigma\vartheta)$.

\begin{definition}
A \emph{term-in-context} is a pair $\np{\nabla}{t}$ of a freshness
context and a term. 
A term-in-context $\np{\nabla_1}{t_1}$ is \emph{more
  general} than a term-in-context $\np{\nabla_2}{t_2}$, written
$\np{\nabla_1}{t_1} \preceq \np{\nabla_2}{t_2}$, if there exists a
substitution $\sigma$, which respects $\nabla_1$, such that $
\nabla_1\sigma \subseteq \nabla_2$ and $ \nabla_2\vdash t_1\sigma
\approx t_2$. 

We write $\nabla \vdash t_1 \preceq t_2$ if there exists a substitution $\sigma$ such that $\nabla \vdash t_1\sigma \approx t_2$.

Two terms-in-context $p_1$ and $p_2$ are \emph{equivalent} (or
\emph{equi-general}), written $p_1\simeq p_2$, iff $p_1 \preceq p_2$
and $p_2 \preceq p_1$. The strict part of $\preceq$ is denoted by $\prec$, i.e.,
$p_1\prec p_2$ iff $p_1 \preceq p_2$ and not $p_2 \preceq p_1$.
We also write $\nabla \vdash t_1 \simeq t_2$ iff $\nabla \vdash t_1 \preceq t_2$ and $\nabla \vdash t_2 \preceq t_1$.
\end{definition}

\begin{example}
We give some examples to demonstrate the relations we have just defined:
\begin{itemize}
\item $\np{\{a\#X\}}{f(a)} \simeq \np{\emptyset}{f(a)}$. We can use $
  \{X\mapsto b\}$ for the substitution applied to the first pair.
\item $\np{\emptyset}{f(X)} \preceq \np{\{a\#X\}}{f(X)}$ (with
  $\sigma=\ids$), but not $\np{\{a\#X\}}{f(X)} \preceq
  \np{\emptyset}{f(X)}$.
\item $\np{\emptyset}{f(X)}\preceq \np{\{a\#Y\}}{f(Y)}$ with $\sigma= \{X\mapsto Y\}$.
\item $\np{\{a\#X\}}{f(X)}\not \preceq \np{\emptyset}{f(Y)}$, because
  in order to satisfy $ \{a\# X\}\sigma \subseteq \emptyset$, the
  substitution $\sigma$ should map $ X$ to a term $ t$ which contains
  neither $a$ (freely) nor variables. But then $ \emptyset \vdash f(t)
  \approx f(Y)$ does not hold.  Hence, together with the previous
  example, we get $\np{\emptyset}{f(Y)}\prec \np{\{a\#X\}}{f(X)}$.
\item $\np{\{a\#X\}}{f(X)} \not\preceq \np{\{a\#X\}}{f(a)}$.  Notice
  that $\sigma=\{{ X\mapsto a}\}$ does not respect $\{{ a\#X}\}$.
\item $\np{\{b\#X\}}{\swap{a}{b}\permap X} \preceq \np{\{c\#X\}}{\swap{a}{c}\permap
  X} $ with the substitution $\sigma=\{ {X\mapsto} \allowbreak \swap{a}{b}\swap{a}{c}\permap
  X\}$.\, Hence,\, we\, get\, $\np{\{b\#X\}}{\swap{a}{b}\permap X}\, \simeq\,
  \np{\{c\#X\}}{\swap{a}{c}\permap X} $,\, because\, the\, $\succeq$\, part\, can\, be\,
  shown\, with\, the\, help\, of\, the\, substitution\, $\{ X\mapsto\allowbreak
    \swap{a}{c}\swap{a}{b}\permap X\}$.
\end{itemize}
\end{example}

\begin{definition}
A term-in-context $\np{\Gamma}{r}$ is called a \emph{generalization} of two terms-in-context $\np{\nabla_1}{t}$ and
$\np{\nabla_2}{s}$ if $\np{\Gamma}{r} \preceq \np{\nabla_1}{t}$
and $\np{\Gamma}{r} \preceq \np{\nabla_2}{s}$. It is the \emph{least
  general generalization,} (lgg in short) of $\np{\nabla_1}{t}$ and $\np{\nabla_2}{s}$ if
there is no generalization $\np{\Gamma'}{r'}$ of $\np{\nabla_1}{t}$
and $\np{\nabla_2}{s}$ which satisfies $\np{\Gamma}{r} \prec
\np{\Gamma'}{r'}$.
\end{definition}

Note that if we have infinite number of atoms in the language, the
relation $\prec$ is not well-founded: $\np{\emptyset}{X}\prec
\np{\{a\#X\}}{X}\prec \np{\{a\#X,b\#X\}}{X}\prec \cdots$. As a
consequence, two terms-in-context may not have an lgg and not even a
minimal complete set of generalizations:\footnote{Minimal complete
  sets of generalizations are defined in the standard way. For a
  precise definition, see, e.g.,
  \cite{DBLP:conf/lopstr/AlpuenteEMO08,DBLP:conf/rta/KutsiaLV11}.}

\begin{example}\label{exmp:nullary}
  Let $p_1=\np{\emptyset}{a_1}$ and $p_2=\np{\emptyset}{a_2}$ be two
  terms-in-context. Then in any complete set of generalizations of $p_1$
  and $p_2$ there is an infinite chain $\np{\emptyset}{X}\prec
  \np{\{a_3\#X\}}{X}\prec \np{\{a_3\#X,a_4\#X\}}{X}\prec \cdots$,
  where $\{a_1,a_2,a_3,\ldots\}$ is the set of all atoms
  of the language. Hence, $p_1$ and $p_2$ do not have a minimal
  complete set of generalizations.
\end{example} 

This example is a proof of the theorem, which characterizes the
generalization type of nominal
anti-unification:\footnote{Generalization types are defined
  analogously to unification types: unary, finitary,
    infinitary, and nullary, see~\cite{DBLP:conf/rta/KutsiaLV11}.}

\begin{theorem}
  \label{thm:generalization:type}
   The problem of anti-unification for terms-in-context is of nullary type.
\end{theorem}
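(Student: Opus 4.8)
The plan is to show that the problem is nullary by producing a single pair of terms-in-context for which no minimal complete set of generalizations exists, and then invoking the standard classification of generalization types. The witness is already provided: take $p_1=\np{\emptyset}{a_1}$ and $p_2=\np{\emptyset}{a_2}$ as in Example~\ref{exmp:nullary}. Since nullary is defined as the type where some problem admits no minimal complete set of generalizations, exhibiting one such problem suffices, and the bulk of the argument is therefore the verification carried out in the example.

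First I would fix notation: let $\{a_1,a_2,a_3,\ldots\}$ enumerate all atoms of the language, and consider the infinite descending chain
\[
\np{\emptyset}{X}\prec \np{\{a_3\#X\}}{X}\prec \np{\{a_3\#X,a_4\#X\}}{X}\prec \cdots
\]
I would first confirm each member of this chain is genuinely a generalization of both $p_1$ and $p_2$. For the $k$-th term-in-context $\np{\Gamma_k}{X}$ with $\Gamma_k=\{a_3\#X,\ldots,a_{k+2}\#X\}$, the substitution $\{X\mapsto a_1\}$ witnesses $\np{\Gamma_k}{X}\preceq p_1$ (and $\{X\mapsto a_2\}$ witnesses $\preceq p_2$), since $a_1,a_2$ are distinct from the atoms $a_3,a_4,\ldots$ appearing in $\Gamma_k$, so these substitutions respect $\Gamma_k$ and satisfy the defining conditions of $\preceq$. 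Next I would verify the chain is strictly descending: $\np{\Gamma_k}{X}\preceq\np{\Gamma_{k+1}}{X}$ via $\sigma=\ids$ because $\Gamma_k\subseteq\Gamma_{k+1}$, while the reverse fails — the argument mirrors the fourth bullet of the preceding example, where adding a freshness constraint $a\#X$ strictly increases specificity since no instance of the more constrained $X$ can be mapped back to recover the dropped freshness obligation without either introducing the forbidden atom or collapsing $X$ to a constant.

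The decisive step is to argue that this chain cannot be capped within any complete set of generalizations. Suppose $S$ is a complete set of generalizations of $p_1$ and $p_2$. Since $\np{\emptyset}{X}$ generalizes both, completeness forces some $g\in S$ with $\np{\emptyset}{X}\preceq g$. I would then show that for any generalization $g=\np{\Gamma}{r}$ of $p_1$ and $p_2$, there is a strictly more specific generalization still in the chain: because $r$ must unify via substitution with both $a_1$ and $a_2$ (distinct atoms), and because only finitely many atoms occur in $\Gamma$, one can always pick a fresh atom $a_m$ not in $\Gamma$ and form $\np{\Gamma\cup\{a_m\#X\}}{X}$ (after reducing $g$ to the chain form), which is strictly more specific yet still generalizes $p_1,p_2$. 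Hence no element of $S$ is minimal, contradicting minimality of $S$; so no minimal complete set exists.

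The main obstacle will be the strictness of the $\prec$ steps and the reduction of an arbitrary generalization to a comparable point in the chain — that is, arguing rigorously that every generalization of $p_1$ and $p_2$ is $\preceq$-above some non-minimal chain element. This requires the observation, used repeatedly, that $p_1$ and $p_2$ are atomic terms-in-context over \emph{distinct} atoms, so any common generalization $\np{\Gamma}{r}$ must have $r$ a variable-headed or atom-headed term admitting both $\{X\mapsto a_1\}$-style and $\{X\mapsto a_2\}$-style instances; combined with the fact that the set of atoms is infinite while $\atoms(\Gamma)$ is finite, there is always room to add one more freshness constraint. Once this local non-minimality is established, nullarity follows directly from the definition, and no further machinery from the paper is needed.
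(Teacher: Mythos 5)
Your proposal is correct and takes essentially the same route as the paper: the paper proves this theorem solely by pointing to Example~\ref{exmp:nullary}, i.e., the same witness pair $\np{\emptyset}{a_1}$, $\np{\emptyset}{a_2}$ and the same infinite strictly descending chain obtained by adding freshness constraints for ever more atoms. You simply make explicit the verifications the paper leaves implicit (that each chain element generalizes both inputs, that each $\prec$ is strict, and that every generalization can be strictly refined, so no complete set can be minimal).
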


However, if we restrict the set of atoms which can be used in the
generalizations to be fixed and finite, then the anti-unification problem
becomes unitary. (We do not prove this property here, it will follow
from the Theorems~\ref{thm:completeness} and~\ref{thm:uniqueness} in
Sect.~\ref{sect:properties}.)

\begin{definition}
We say that a term $ t$ (resp., a freshness context $\nabla$)
is \emph{based} on a set of atoms $A$ iff 
$\atoms({ t})\subseteq A$ (resp., $\atoms(\nabla)\subseteq
A$). A term-in-context $\np{\nabla}{t}$ is based on~$A$ if both
$ t$ and $\nabla$ are based on it. We extend the notion of $A$-basedness to permutations, calling $\ppi$ $A$-based if it contains only atoms from $A$. 
Such a permutation defines a bijection, in particular, from $A$ to $A$. 
If $p_1$ and $p_2$ are $A$-based
terms-in-context, then their \emph{$A$-based generalizations} are
terms-in-context which are generalizations of $p_1$ and $p_2$ and are
based on~$A$. An \emph{$A$-based lgg} of $A$-based
terms-in-context $p_1$ and $p_2$ is a term-in-context $p$, which is an
$A$-based generalization of $p_1$ and $p_2$ and there is no
$A$-based generalization $p'$ of $p_1$ and~$p_2$ which satisfies
$p\prec p'$. 
\end{definition}

The problem we would like to solve is the following: 
\vspace{0.2cm}
\begin{description}
  \item[\bf Given:] Two nominal terms $ t$ and $ s$ of the same
    sort, a freshness context $\nabla$, and a \emph{finite} set of atoms $A$
    such that $ t$, $ s$, and $\nabla$ are based on $A$.  
  \item[\bf Find:] A term $ r$ and a freshness context $\Gamma$, such
    that the term-in-context $\np{\Gamma}{r}$ is an $A$-based least
    general generalization of the terms-in-context $\np{\nabla}{t}$
    and $\np{\nabla}{s}$.
\end{description}
\vspace{0.2cm}

Our anti-unification problem is parametric on the set of atoms we
consider as the base, and finiteness of this set is essential to
ensure the existence of an lgg.  

\section{Motivation of Using a Direct Nominal Anti-Unification Algorithm}
\label{sect:motivation}

In~\cite{DBLP:journals/tocl/LevyV12}, relation between nominal
unification (NU) and higher-order pattern unification (HOPU) has been
studied. In particular, it was shown how to translate NU problems into
HOPU problems and how to obtain nominal unifiers back from
higher-order pattern unifiers. It is tempting to use the same
translation for nominal anti-unification (NAU), using the algorithm
from~\cite{DBLP:conf/rta/BaumgartnerKLV13} to solve higher-order
anti-unification problems over patterns (HOPAU), but it turns out that
the generalization computed in this way is not always based on the
same set of the atoms as the input:

\begin{example}
\label{exmp:translation}
We consider the following problem:
Let the set of atoms be $A_1=\{a,b\}$. The terms to be generalized are $a.b$ and $b.a$, and the freshness context is $\nabla=\emptyset$.
According to~\cite{DBLP:journals/tocl/LevyV12}, translation to higher-order patterns
gives the anti-unification problem $\lambda a,b,a.\,b \triangleq
\lambda a,b,b.\,a$, whose lgg is $\lambda a,b,c. X(a,b)$. However, we
can not translate this lgg back to an $A_1$-based term-in-context,
because it contains more bound variables than there are atoms in
$A_1$.

On the other hand, the translation would work for the 
set of atoms $A_2=\{a,b,c\}$:
Back-translating $\lambda
a,b,c. X(a,b)$ gives the $A_2$-based lgg $\np{\{c\#X\}}{c.X}$.
\end{example}

The reason why the translation-based approach does not work for
$A$-based NAU is that $A$ is finite, while in higher-order
anti-unification there is an infinite supply of fresh (bound)
variables. 
If we assumed $A$ to be infinite, there would still be a mismatch between
NAU and the corresponding HOPAU: NAU, as we saw, is nullary in this
case, while HOPAU is unitary. The reason of this contrast is that from
infinitely many nominal generalizations, there is only one which is a
well-typed higher-order generalization.

One might think that the translation-based approach would still work,
if one considers only nominal anti-unification problems where the set
of atoms is large enough for the input
terms-in-context.  However, there
is a reason that speaks against NAU-to-HOPAU translation:
complexity. The translation approach leads to a quadratic increase of
the input size (Lemma 5.6 in \cite{DBLP:journals/tocl/LevyV12}). The
HOPAU algorithm in \cite{DBLP:conf/rta/BaumgartnerKLV13} runs in cubic
time  with respect to the size of its
input. Hence, the translation-based approach leads to an algorithm
with runtime complexity $O(n^6)$. In
contrast, the algorithm developed in this paper has runtime complexity
$O(n^5)$, and requires no back and forth
translations.

\section{The Lattice of More General Terms-In-Context}
\label{sect:lattice}

The notion of \emph{more general term} defines an order relation
between classes of terms (modulo some notion of \emph{variable
  renaming}). In most cases, we have actually a meet-semilattice,
since, given two terms, there always exists a greatest lower bound
(meet) that corresponds to their anti-unifier. On the contrary, the
least upper bound (join) of two terms only exists if they are
unifiable. For instance, the two first-order terms $f(a,X_1)$ and
$f(X_2,b)$ have a meet $f(Y_1,Y_2)$, and, since they are unifiable,
also a join $f(a,b)$. Notice that unifiability and existence of a join
are equivalent if both terms do not share variables (for instance
$f(a,X)$ and $f(X,b)$ are both smaller than $f(a,b)$, hence joinable,
but they are not unifiable). With this restriction one do not loose
generality: The unification problem $t_1\approx^? t_2$ (sharing variables),
can be reduced to $f(t_1,t_2) \approx^?  f(X,X)$ (not sharing variables), where $f$
is some binary symbol and $X$ a fresh variable.  Therefore, in the
first-order case, the problem of searching a most general unifier is
equivalent to the search of the join of two terms, and the search of a
least general generalization 
to the search of the meet. Notice that meet
and join are unique up to some notion of \emph{variable renaming}.
For instance, the join of $f(a,X,X')$ and $f(Y,b,Y')$ is $f(a,b,Z)$
for any renaming of $Z$ by any variable.

In the nominal case, we consider the set of terms-in-context (modulo variable
renaming) with the more general relation. The following lemma
establishes a correspondence between joinability and unifiability.

\begin{restatable}{lemma}{joinable}
 \label{lemma:unif:join}
Given two terms-in-context $\np{\nabla_1}{t_1}$ and
$\np{\nabla_2}{t_2}$ with disjoint sets of variables,
$\np{\nabla_1}{t_1}$ and $\np{\nabla_2}{t_2}$ are joinable if, and
only if, $\{t_1\approx^? t_2\}\cup\nabla_1\cup\nabla_2$ has a solution
(is unifiable).
\end{restatable}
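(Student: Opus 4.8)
The plan is to prove the two implications separately, using as a bridge the observation that a solution of the nominal unification problem $\{t_1\approx^? t_2\}\cup\nabla_1\cup\nabla_2$ is a pair $(\Gamma,\sigma)$ satisfying $\Gamma\vdash t_1\sigma\approx t_2\sigma$ together with $\Gamma\vdash a\#X\sigma$ for every $a\#X\in\nabla_1\cup\nabla_2$. The key point is that the two conditions appearing in the definition of $\preceq$ — that $\sigma$ respects a context and that $\nabla\sigma\subseteq\Gamma$ — can be converted into such derivability statements (and back) by Theorem~\ref{thm:properties:of:FC} and Lemma~\ref{lemma:fc:respects:context}. Indeed, since by definition $\nabla\sigma=\fe(\{a\#X\sigma\mid a\#X\in\nabla\})$, Theorem~\ref{thm:properties:of:FC} gives that $\nabla\sigma\subseteq\Gamma$ holds exactly when $\Gamma\vdash a\#X\sigma$ for all $a\#X\in\nabla$, and in that case $\nabla\sigma\neq\bot$, so by Lemma~\ref{lemma:fc:respects:context} $\sigma$ respects $\nabla$.

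For the direction from unifiability to joinability, I would take a solution $(\Gamma,\sigma)$ and show that $\np{\Gamma}{t_1\sigma}$ is a common upper bound. To obtain $\np{\nabla_1}{t_1}\preceq\np{\Gamma}{t_1\sigma}$ I reuse $\sigma$ itself: $\Gamma\vdash a\#X\sigma$ for $a\#X\in\nabla_1$ yields both $\nabla_1\sigma\subseteq\Gamma$ and that $\sigma$ respects $\nabla_1$ by the bridge above, while $\Gamma\vdash t_1\sigma\approx t_1\sigma$ is reflexivity. The symmetric argument gives $\np{\nabla_2}{t_2}\preceq\np{\Gamma}{t_2\sigma}$, and since $\Gamma\vdash t_1\sigma\approx t_2\sigma$ these two upper bounds are equi-general. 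To upgrade this to the existence of the \emph{join}, I would instantiate the construction with a \emph{most general} solution, which exists because nominal unification is unitary; the associated term-in-context is then $\preceq$-least among all upper bounds, obtained by factoring an arbitrary solution through the most general one and checking that the factoring substitution witnesses $\preceq$ exactly as above.

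For the converse, suppose a common upper bound $\np{\Gamma}{r}$ exists, so that $\np{\nabla_1}{t_1}\preceq\np{\Gamma}{r}$ and $\np{\nabla_2}{t_2}\preceq\np{\Gamma}{r}$ are witnessed by substitutions $\sigma_1$ and $\sigma_2$ with $\nabla_i\sigma_i\subseteq\Gamma$ and $\Gamma\vdash t_i\sigma_i\approx r$. Here the disjointness hypothesis is used: since $\vars(t_1)\cup\vars(\nabla_1)$ and $\vars(t_2)\cup\vars(\nabla_2)$ are disjoint, I can glue $\sigma_1$ and $\sigma_2$ into a single $\sigma$ agreeing with $\sigma_1$ on the first variable set and with $\sigma_2$ on the second, so that $t_i\sigma=t_i\sigma_i$. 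Then $\Gamma\vdash t_1\sigma\approx r$ and $\Gamma\vdash t_2\sigma\approx r$ give $\Gamma\vdash t_1\sigma\approx t_2\sigma$ by symmetry and transitivity of $\approx$ under a fixed context, while $\nabla_i\sigma_i\subseteq\Gamma$ gives $\Gamma\vdash a\#X\sigma$ for each $a\#X\in\nabla_i$ via Theorem~\ref{thm:properties:of:FC}. Hence $(\Gamma,\sigma)$ solves the unification problem.

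I expect the freshness bookkeeping to go through routinely once the bridge lemmas are in place, and the substitution-gluing step to be immediate given the disjoint-variables hypothesis. The one genuinely delicate point is in the first direction: upgrading \emph{a common upper bound exists} to \emph{the least upper bound (join) exists} relies on unitarity of nominal unification and on verifying that the most general unifier induces a $\preceq$-minimal common instance. If one reads ``joinable'' as merely possessing a common upper bound, this subtlety disappears entirely and the lemma reduces to the two symmetric arguments above.
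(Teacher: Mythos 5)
Your proposal is correct and follows essentially the same route as the paper's proof: the joinable-to-unifiable direction glues the two witnessing substitutions using the disjoint-variables hypothesis, the converse reuses the unifier's substitution directly, and the upgrade from ``common upper bound'' to ``join'' is handled via the most general nominal unifier, exactly as in the paper. Your explicit use of Theorem~\ref{thm:properties:of:FC} and Lemma~\ref{lemma:fc:respects:context} as a bridge between $\nabla\sigma\subseteq\Gamma$ and the derivability statements $\Gamma\vdash a\#X\sigma$ is the same mechanism the paper invokes, just spelled out more carefully.
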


\begin{proof}
If they are joinable, there exist $\np{\Gamma}{s}$ such that
$\np{\nabla_i}{t_i} \preceq \np{\Gamma}{s}$. Hence, there exist
$\sigma_1$ and $\sigma_2$ such that $\sigma_i$ respects $\nabla_i$ and
$\nabla_i\sigma_i\subseteq\Gamma$ and $\Gamma\vdash t_i\sigma_i
\approx s$.  Now, since $\vars(t_1) \cap \vars(t_2)=\emptyset$, define
$\sigma(X) = \sigma_1(X)$, if $X\in\vars(t_1)$, and $\sigma(X) =
\sigma_2(X)$, if $X\in\vars(t_2)$. We have $\Gamma\vdash
t_1\sigma\approx s\approx t_2\sigma$ and
$(\nabla_1\cup\nabla_2)\sigma\subseteq\Gamma$.  Hence, according to
the properties of $\fe$ as stated in Theorem~\ref{thm:properties:of:FC}, we have $\Gamma\vdash a\#X\sigma$, for any $a\#X\in
\nabla_1\cup\nabla_2$.  Therefore, the pair
$\langle\Gamma,\sigma\rangle$ is a nominal unifier of $\{t_1\approx^?
t_2\}\cup\nabla_1\cup\nabla_2$.

Conversely, we can prove easily that if $\langle\Gamma,\sigma\rangle$
solves $\{t_1\approx^?  t_2\}\cup\nabla_1\cup\nabla_2$, then
$\np{\nabla_i}{t_i} \preceq \np{\Gamma}{s}$, where $s=t_1\sigma$.
Let's prove now that when $\langle\Gamma,\sigma\rangle$ is the most
general nominal unifier, then $\np{\Gamma}{s}$ is the join of
$\np{\nabla_1}{t_1}$ and $\np{\nabla_2}{t_2}$, i.e.  whenever
$\np{\nabla_i}{t_i}\preceq\np{\Gamma'}{s'}$, we have
$\np{\Gamma}{s}\preceq\np{\Gamma'}{s'}$:

From $\np{\nabla_i}{t_i}\preceq\np{\Gamma'}{s'}$ we have that there exists
$\sigma'$ such that $\langle\Gamma',\sigma'\rangle$ is a nominal
unifier of $\{t_1\approx^?  t_2\}\cup\nabla_1\cup\nabla_2$ and
$\Gamma'\vdash t_1\sigma'\approx s' \approx t_2\sigma'$. Since $\langle\Gamma,\sigma\rangle$ is most general, there exists a
substitution $\varphi$, which respects $\Gamma$, such that $\Gamma\varphi\subseteq\Gamma'$ and
$\Gamma'\vdash t_i\sigma' \approx t_j\sigma\varphi $, for all $i,j\in\{1,2\}$.  The existence of this
substitution $\varphi$ proves
$\np{\Gamma}{s}\preceq\np{\Gamma'}{s'}$.
\end{proof}

Like\, in\, first-order\, unification,\, the\, previous\, lemma\, allows\, us\, to\,
reduce any\, nominal\, unification\, problem\,
$P\,=\,\{a_1\#u_1,\,\dots,\,a_m\#u_m,\;t_1\approx s_1,\,\dots,$ $t_n\approx s_n\}$
into the joinability of the two terms-in-context
$\np{\emptyset}{f(X,X)}$ and
$\np{\fe(\allowbreak\{a_1\#u_1,\dots,\allowbreak a_m\#u_m\})}{f(g(t_1,\dots,t_n),g(s_1,\dots, \allowbreak s_n))}$
where $f$ and $g$ are any appropriate function symbols, and $X$ is a
fresh variable.

The nominal anti-unification problem is already stated in terms of
finding the meet of two terms-in-context, with the only proviso that
all terms and contexts must be based on some finite set of atoms.

\section{Nominal Anti-Unification Algorithm}
\label{sect:nau:alg}

The triple $X: t \triangleq s$, where $X,t,s$ have the same sort, is
called the \emph{anti-unification triple}, shortly AUT, and the
variable $X$ is called a \emph{generalization variable}. We say that
a set of AUTs $P$ is based on a finite set of atoms $A$, if for all ${
  X: t \triangleq s}\in P$, the terms $t$ and $s$ are based on $A$.

\begin{definition}
The nominal anti-unification algorithm is formulated in a rule-based
way working on tuples 
$P;\, S;\, \nablag;\,\sigma$ and two global parameters
  $\Atoms$ and $\nablai$, where
\begin{itemize}
 \item $P$ and $S$ are sets of AUTs such that if ${ X: t \triangleq s}
   \in P\cup S$, then this is the sole occurrence of $ X$ in $ P\cup
   S$;
 \item $P$ is the set of AUTs to be solved;
 \item $\Atoms$ is a finite set of atoms;
 \item The freshness context $ \nablai$ does not constrain generalization variables;
 \item $S$ is a set of already solved AUTs (the store);
 \item $\nablag$ is a freshness context (computed so far) which
   constrains generalization variables;
 \item $\sigma$ is a substitution (computed so far) mapping
   generalization variables to nominal terms;
 \item $P$, $S$, $\nablai$, and $\nablag$ are $\Atoms$-based.
\end{itemize}

We call such a tuple a \emph{state}. The rules below operate on
states.

\end{definition}

\infrule{Dec}{Decomposition}
    {\{X:  \nomh(t_1,\ldots,t_m) \triangleq \nomh(s_1,\ldots,s_m)\}\dotcup P;\; S;\; \nablag;\; \sigma \\ \Lra   
     \{Y_1:  t_1\triangleq s_1,\ldots, Y_m : t_m\triangleq s_m\}\cup P;\; S; \; \nablag ;\; 
      \sigma \{X\mapsto \nomh(Y_1,\ldots,Y_m)\}, }
[\noindent where $\nomh$ is a function symbol or an atom, 
  ${ Y_1,\ldots,Y_m}$ are fresh variables of the corresponding sorts, $m\ge 0$.]

\infrule{Abs}{Abstraction}
    {\{ X :  a.t \triangleq b.s \}\dotcup P;\; S ;\; \nablag;\;  \sigma \Lra   
     \{ Y: \swap{c}{a}\permef t \triangleq \swap{c}{b}\permef s \}\cup P; \; S ;\; 
    \nablag;\;  
     \sigma\{X\mapsto  c.Y\}, }
[\noindent where $ Y$ is fresh, $ c\in \Atoms$, $ \nablai \vdash c\# a.t$ and $ \nablai \vdash c\# b.s$.
]

\infrule{Sol}{Solving}
  {\{{ X:  t \triangleq s}\}\dotcup P;\; S;\; \nablag;\; \sigma \Lra 
     P;\;  S\cup\{{ X:  t \triangleq s}\};\;\Gamma \cup \Gamma';\; \sigma, }
[]
\noindent if none of the previous rules is applicable, 
i.e. one of the following conditions hold:
\begin{itemize}
\item[(a)] both terms have distinct heads: ${ \head(
  t)}\neq { \head(s)}$, or
\item[(b)] both terms are suspensions: $ t= \ppi_1 \permap
  Y_1$ and $ s=\ppi_2\permap Y_2$, where $\ppi_1,\ppi_2$ and $ Y_1,Y_2$ are
  not necessarily distinct, or
\item[(c)] both are abstractions and rule \textsf{Abs} is not
  applicable: $ t=a.t'$, $ s=b.s'$ and there is no atom $c \in \Atoms$
  satisfying $\nablai \vdash c\# a.t'$ and $\nablai \vdash c\#
  b.s'$.
\end{itemize}
The set $\Gamma'$ is defined as $ \Gamma':= \{ a\#X \mid a\in \Atoms\ \wedge\ 
  \nablai \vdash a\#t\ \wedge\ \nablai \vdash a\#s\}.$

\infrule{Mer}{Merging} {P;\; \{{ X : t_1 \triangleq s_1, Y : t_2
    \triangleq s_2}\}\dotcup S ;\; \nablag;\; \sigma \Lra \\
    P; \; \{{ X : t_1 \triangleq s_1}\}\cup S ;\; 
     \nablag{\{Y \mapsto \ppi\permap X\}};\; { \sigma\{ Y \mapsto \ppi\permap
    X\}}, } [\noindent where  $\ppi$ is an $\atoms(t_1,s_1,t_2,s_2)$-based
  permutation such that $\nablai\vdash \ppi\permef t_1\approx t_2$,
  and $\nablai\vdash \ppi\permef s_1\approx s_2$.]\vspace{0.3cm}

The rules transform states to states. One can easily observe this by inspecting the rules. 

Given a finite set of atoms $\Atoms$, two nominal $\Atoms$-based terms
$ t$ and $ s$, and an $\Atoms$-based freshness context $\nablai$, to
compute $\Atoms$-based generalizations for
$\np{\nablai}{t}$ and $\np{\nablai}{s}$, we start with $\{{ X : t
  \triangleq s}\}; \emptyset; \emptyset; \ids$, where $ X$ is a fresh
variable, and apply the rules as long as possible.  We denote this
procedure by $\frN$. A \emph{Derivation} is a sequence of state transformations by the rules. The state to which no rule applies has the form
$\emptyset; S; \nablag; \varphi$, where \textsf{Mer} does not apply to
$S$. We call it the \emph{final state.} When $\frN$ transforms $\{ X : t
\triangleq\penalty 10000 s\}; \emptyset; \emptyset; \ids$ into a final
state $\emptyset; S; { \nablag;} \varphi$, we say that the
\emph{result computed} by $\frN$ is $\np{\nablag}{X\varphi}$.

Note that the {\sf Dec} rule works also for the AUTs of the form
$X:a\triangleq a$. In the {\sf Abs} rule, it is important to have the
corresponding $c$ in $\Atoms$. If we take $\Atoms=A_2$ in
Example~\ref{exmp:translation}, then {\sf Abs} can transform the AUT
between $t$ and $s$ there, but if $\Atoms=A_1$ in the same example,
then {\sf Abs} is not applicable. In this case the {\sf Sol} rule
takes over, because the condition (c) of this rule is satisfied.

The condition (b) of {\sf Sol} helps to compute, e,g,
$\np{\emptyset}{X}$ for identical terms-in-context
$\np{\emptyset}{\swap{a}{b}\permap Y}$ and $\np{\emptyset}{\swap{a}{b}\permap
  Y}$. Although one might expect that computing
$\np{\emptyset}{\swap{a}{b}\permap Y}$ would be more natural, from the
generalization point of view it does not matter, because
$\np{\emptyset}{X}$ is as general as $\np{\emptyset}{\swap{a}{b}\permap Y}$.

\begin{example}\label{exmp:nau}
We illustrate $\frN$ with the help of some examples:
\begin{itemize}
\item Let $ t=f(a,b)$, $ s=f(b,c)$, $ \nablai =\emptyset$, and
$\Atoms=\{a,b,c,d\}$. Then $\frN$ performs the following
transformations:
  \begin{alignat*}{3}
  &\{ X: f(a,b) \triangleq f(b,c) \};  \emptyset;\emptyset; \ids \Lra_{\mathsf{Dec}}{} \\
  & \{ { Y: a \triangleq b, Z: b \triangleq  c }\};   \emptyset;\emptyset;\{ { X \mapsto f(Y,Z )}\} \Lra^2_{\mathsf{Sol}}{} \\
  & \emptyset;  \{ { Y: a \triangleq  b, Z: b \triangleq  c }\}; \{{ c\# Y, d\#Y, a\# Z, d\# Z}\};\; \{ X \mapsto f(Y,Z)\} \Lra_{\mathsf{Mer}}{} \\
  & \emptyset; \{ { Y: a \triangleq b}\};  \{{ c\# Y, d\# Y}\};\{ { X \mapsto f(Y,\swap{a}{b}\swap{b}{c}\permap Y)}\}
\end{alignat*}
Hence, $p=\np{\{c\# Y, d\#Y \}}{f(Y,\swap{a}{b}\swap{b}{c}\permap Y)}$ is the
computed result. It generalizes the input pairs: $p\{ Y \mapsto a\} \preceq
\np{\nablai}{t} $ and $p\{Y \mapsto
  b\} \preceq \np{\nablai}{s}$. The substitutions $\{ Y \mapsto a\}$ and $\{Y \mapsto
  b\}$ can be read from
the final store. Note that $\np{\{c\# Y\}}{f(Y,\swap{a}{b}\swap{b}{c}\permap
  Y)}$ would be also an $\Atoms$-based generalization of
$\np{\nablai}{t}$ and $\np{\nablai}{s}$, but it is strictly more
general than~$p$.

\item
Let $ t=f(b,a)$, $ s=f(Y,\swap{a}{b}\permap Y)$, $ \nablai =\{b\# Y\}$, and
$\Atoms=\{a,b\}$. Then $\frN$ computes the term-in-context
$\np{\emptyset}{f(Z,\swap{a}{b}\permap Z)}$. It generalizes the input
pairs.

\item
Let $ t=f(g(X),X)$, $ s=f(g(Y),Y)$, $ \nablai =\emptyset$, and
$\Atoms=\emptyset$. It is a first-order anti-unification problem. $\frN$ computes
$\np{\emptyset}{f(g(Z),Z)}$. It generalizes the input pairs.

\item
Let $ t = f(a.b,X)$, $ s= f(b.a,Y)$, $ \nablai =\{c\# X\}$,
$\Atoms=\{a,b,c,d\}$. Then $\frN$ computes the term-in-context
$p=\np{\{c\# Z_1,d\#Z_1\}}{f(c.Z_1,Z_2)}$. It generalizes the input
pairs: $p\{Z_1\mapsto b,Z_2\mapsto X\}=\np{\emptyset}{f(c.b,X)}\preceq \np{\nablai}{t}$ and
  $p\{Z_1\mapsto a,Z_2\mapsto Y\} =\np{\emptyset}{f(c.a,Y)}\preceq \np{\nablai}{s}$.  

\end{itemize}
\end{example}

\section{Properties of the Nominal Anti-Unification Algorithm}
\label{sect:properties}

The Soundness Theorem states that the result computed by $\frN$ is indeed an $A$-based generalization of the input terms-in-context:
\begin{restatable}[Soundness of $\frN$]{theorem}{soundnessN}
  \label{thm:soundness}
  Given terms $t$ and $s$ and a freshness context $\nablai$,
  all based on a finite set of atoms $\Atoms$, if  $\{X: t\triangleq s\};\, \emptyset;\, \emptyset;\, $
  $\varepsilon \Lra^{+} \emptyset;\,S;$ $ \nablag;\, \sigma$ is a derivation obtained by an execution of $\frN$,
  then $\np{\nablag}{X\sigma}$ is an $\Atoms$-based generalization of
  $\np{\nablai}{t}$ and $\np{\nablai}{s}$.
\end{restatable}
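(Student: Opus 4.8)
The plan is to prove, by induction on the length of the derivation, a strengthened invariant holding at every reachable state $P;S;\nablag;\sigma$. For such a state let $\vartheta_1$ (resp.\ $\vartheta_2$) be the substitution sending each generalization variable $Z$ with $Z\colon u\triangleq v\in P\cup S$ to its left component $u$ (resp.\ right component $v$); this is well defined since each such $Z$ occurs in a unique AUT. The invariant I would maintain is: (i) $\nablai\vdash X\sigma\vartheta_1\approx t$ and $\nablai\vdash X\sigma\vartheta_2\approx s$; (ii) $\nablag\vartheta_1\subseteq\nablai$ and $\nablag\vartheta_2\subseteq\nablai$ (so by Lemma~\ref{lemma:fc:respects:context} each $\vartheta_i$ respects $\nablag$, noting that $\nablag$ constrains only store variables); and (iii) $\nablag$ and $X\sigma$ are $\Atoms$-based. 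Granting the invariant, the theorem is immediate from the final state $\emptyset;S;\nablag;\sigma$: (i) gives $\nablai\vdash X\sigma\vartheta_1\approx t$ and $\nablai\vdash X\sigma\vartheta_2\approx s$, (ii) gives $\nablag\vartheta_i\subseteq\nablai$ with $\vartheta_i$ respecting $\nablag$, so $\vartheta_1$ and $\vartheta_2$ witness $\np{\nablag}{X\sigma}\preceq\np{\nablai}{t}$ and $\np{\nablag}{X\sigma}\preceq\np{\nablai}{s}$, and (iii) yields the $\Atoms$-basedness of $\np{\nablag}{X\sigma}$.

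The base case $\{X\colon t\triangleq s\};\emptyset;\emptyset;\ids$ is trivial: $\vartheta_1=\{X\mapsto t\}$, $\vartheta_2=\{X\mapsto s\}$, so $X\ids\vartheta_i$ equals $t$ resp.\ $s$ and $\approx$ is reflexive, while (ii) and (iii) hold vacuously. For the inductive step I would do a case analysis on the rule applied. \textsf{Dec} is easiest: replacing the unique occurrence of $X$ in $X\sigma$ by $\nomh(Y_1,\dots,Y_m)$ and defining $\vartheta_i$ on the fresh $Y_j$ as the components $t_j,s_j$ reproduces exactly the old values $\vartheta_1(X)=\nomh(t_1,\dots,t_m)$ and $\vartheta_2(X)=\nomh(s_1,\dots,s_m)$, so (i) is preserved verbatim; $\nablag$ is untouched and the new variables are unconstrained, giving (ii) and (iii). \textsf{Sol} changes neither $\sigma$ nor the term values of $\vartheta_i$, so (i) is immediate; writing the solved AUT as $Z\colon u\triangleq v$, the only task is to check that the freshly added constraints $\Gamma'=\{a\#Z\mid a\in\Atoms,\ \nablai\vdash a\#u,\ \nablai\vdash a\#v\}$ keep $\nablag\vartheta_i$ inside $\nablai$, which holds because $\vartheta_1(Z)=u$, $\vartheta_2(Z)=v$ and the defining conditions of $\Gamma'$ are exactly $\nablai\vdash a\#u$ and $\nablai\vdash a\#v$; the conclusion then follows from Theorem~\ref{thm:properties:of:FC}.

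The two interesting cases are \textsf{Abs} and \textsf{Mer}. In \textsf{Abs}, writing the AUT as $X\colon a.t'\triangleq b.s'$, after the update $X\mapsto c.Y$ with $\vartheta_1(Y)=\swap{c}{a}\permef t'$ the subterm of $X\sigma\vartheta_1$ at the old $X$-position becomes $c.(\swap{c}{a}\permef t')$, and I must show $\nablai\vdash a.t'\approx c.(\swap{c}{a}\permef t')$. If $a=c$ this is reflexivity; if $a\neq c$ I would apply ($\approx$-abs-2), whose premises reduce to $\nablai\vdash t'\approx\swap{a}{c}\permef\swap{c}{a}\permef t'$ (true because $\swap{a}{c}\swap{c}{a}$ acts as the identity) and $\nablai\vdash a\#\swap{c}{a}\permef t'$; the latter follows by equivariance of freshness from the side condition $\nablai\vdash c\#a.t'$, which gives $\nablai\vdash c\#t'$ and hence $\nablai\vdash(\swap{c}{a}\permef c)\#(\swap{c}{a}\permef t')$, i.e.\ $\nablai\vdash a\#\swap{c}{a}\permef t'$. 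The symmetric computation handles $\vartheta_2$, and $\Atoms$-basedness is preserved since $c\in\Atoms$. In \textsf{Mer}, the $Y$-occurrences in $X\sigma$ (which sit as leaves) are replaced by $\ppi\permap X$, so applying the new $\vartheta_1$ inserts $\ppi\permef t_1$ where the old $\vartheta_1$ had $t_2$; since the side condition is $\nablai\vdash\ppi\permef t_1\approx t_2$ and $\approx$ is a congruence (propagating through applications and abstractions by ($\approx$-application) and ($\approx$-abs-1), the context being identical on both sides), (i) is preserved, and symmetrically for $\vartheta_2$ using $\nablai\vdash\ppi\permef s_1\approx s_2$. For (ii) I must show $(\nablag\{Y\mapsto\ppi\permap X\})\vartheta_1\subseteq\nablai$; I would derive this from the inductive $\nablag\vartheta_1\subseteq\nablai$ by the instance-composition property~(b) stated before the definition of term-in-context, together with the fact that freshness is invariant under $\approx$ (so $\nablai\vdash a\#t_2$ and $\nablai\vdash\ppi\permef t_1\approx t_2$ yield $\nablai\vdash a\#\ppi\permef t_1$), using that $\{Y\mapsto\ppi\permap X\}$ respects every context because $\fatoms(\ppi\permap X)=\emptyset$; $\ppi$ being $\atoms(t_1,s_1,t_2,s_2)$-based keeps the state $\Atoms$-based.

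The hardest part will be the \textsf{Mer} bookkeeping for (ii): correctly relating the freshness-context instance $\nablag\{Y\mapsto\ppi\permap X\}$ evaluated under $\vartheta_1$ back to $\nablag\vartheta_1$, which couples the term part and the freshness part of the invariant and must use the two FC-instance properties, the $\approx$-invariance of freshness, and both equivariance side conditions $\nablai\vdash\ppi\permef t_1\approx t_2$ and $\nablai\vdash\ppi\permef s_1\approx s_2$ at once (for $\vartheta_1$ and $\vartheta_2$ respectively). By comparison, the \textsf{Abs} $\alpha$-renaming via ($\approx$-abs-2) is delicate but purely local, so I expect \textsf{Mer} to require the most care.
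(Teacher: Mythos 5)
Your proof is correct in substance but takes a genuinely different route from the paper's. The paper proves a generalized per-AUT statement --- for every $Z_0\colon t_0\triangleq s_0$ in $P_0\cup S_0$ of an arbitrary intermediate state, $\np{\Gamma_n\setminus\Gamma_0}{Z_0\vartheta_1\cdots\vartheta_n}\preceq\np{\nablai}{t_0}$ and likewise for $s_0$ --- by induction on the length of the \emph{remaining} derivation, peeling off the first rule and obtaining the witnessing substitutions existentially from the induction hypothesis. You instead run the induction forward over prefixes and carry explicit canonical witnesses $\vartheta_1,\vartheta_2$ (each AUT variable mapped to its left/right component) together with the condition $\nablag\vartheta_i\subseteq\nablai$. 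The per-rule content is essentially the same in both proofs (\textsf{Abs} via $\approx$-abs-2 and equivariance of $\#$, \textsf{Sol} via Theorem~\ref{thm:properties:of:FC}, \textsf{Mer} via the two equivariance side conditions plus invariance of $\#$ under $\approx$), but your forward invariant visibly simplifies the hardest case: the paper must first establish, by a separate forward sub-induction, that every store constraint $c\#Z_k$ reflects provable freshness of both components of its AUT, and must case-split on whether a merged AUT is itself later merged away; with your invariant both points are automatic because the witnesses are re-read from the current store at each step. What you leave tacit, and should state as part of the invariant, is the routine bookkeeping that makes the case analysis go through: the generalization variables of $X\sigma$ are exactly those of $P\cup S$, with $P$-variables occurring once and suspension-free (so \textsf{Dec}/\textsf{Abs} really perform a plain local replacement), whereas after earlier \textsf{Mer} steps a store variable $Y$ may occur in $X\sigma$ as $\prho\permap Y$ with $\prho$ nontrivial --- so in your \textsf{Mer} case the new $\vartheta_1$ inserts $\prho\ppi\permef t_1$ where the old one had $\prho\permef t_2$, which is closed by one extra appeal to equivariance of $\approx$; finally, propagating these local replacements to $\nablai\vdash X\sigma\vartheta_i\approx t$ uses transitivity and congruence of $\approx$, which the paper also uses without proof. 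None of these is a gap, only unstated standard lemmas.
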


\begin{proof} Since all atoms introduced by the rules
    of $\frN$ in $\nablag$ and $\sigma$ are from $\Atoms$,
    $\np{\nablag}{X\sigma}$ is $\Atoms$-based. To prove that
    $\np{\nablag}{X\sigma}$ generalizes both $\np{\nablai}{t}$ and
    $\np{\nablai}{s}$, we use well-founded induction on the length of
    derivations. In fact, we will prove a more general statement:

    Assume $P_0;S_0;\Gamma_0;\vartheta_0 \Lra^+ \emptyset;
    S_n;\Gamma_n;\vartheta_0\vartheta_1\cdots\vartheta_n$ is a
    derivation in $\frN$ (with $\nabla$ and $\Atoms$) with
      the following property: If $Z_0:t_0\triangleq s_0\in S_0$, then
      $a\# Z_0\in \nablag_0$ for an $a\in \Atoms$ iff $\nabla\vdash
      a\#t_0$ and $\nabla\vdash a\#s_0$. Notice that
      requiring this property does not imply a lose of generality: our
      algorithm starts with no equation in the store, and each time an
      equation is moved to the store the {\sf Sol} rule adds the
      required freshness constraints (by inspection of {\sf
        Sol}). Moreover, freshness constraints are only removed from
      the freshness context when {\sf Mer} removes the corresponding
      equation from the store (by inspection of {\sf Mer}).  Then for
    any $Z_0:t_0 \triangleq s_0\in P_0\cup S_0$ we have
    $\np{\Gamma_n\setminus \Gamma_0}{Z_0 \vartheta_1\cdots\vartheta_n}
    \preceq \np{\nabla}{t_0}$ and $\np{\Gamma_n\setminus \Gamma_0}{Z_0
      \vartheta_1\cdots\vartheta_n} \preceq \np{\nabla}{s_0}$.

Assume the statement is true for any derivation of the length $l < n$
and prove it for a derivation $P_0;S_0;\Gamma_0;\vartheta_0 \Lra^+
\emptyset; S_n;\Gamma_n;\vartheta_0\vartheta_1\cdots\vartheta_n$ of
the length $n$. 

Below the composition $\vartheta_i\vartheta_{i+1}\cdots\vartheta_k$ is abbreviated as $\vartheta_i^k$ with $ k\ge i$. 

Let $Z_0: t_0 \triangleq s_0$ be an AUT selected for transformation from $P_0\cup S_0$.  We consider each rule:\vspace{0.3cm}

{\sf Dec:} $Z_0=X$, $t_0=\nomh(t_1,\ldots,t_m)$,
$s_0=\nomh(s_1,\ldots,s_m)$, $\Gamma_1=\Gamma_0$, and
$\vartheta_1=\{X\mapsto \nomh(Y_1,\ldots,Y_m)\}$. By the induction
hypothesis (IH), $\np{\Gamma_n\setminus
  \Gamma_1}{Y_i\vartheta_2^n}\preceq \np{\nabla}{t_i}$ and
$\np{\Gamma_n\setminus \Gamma_1}{Y_i\vartheta_2^n}\preceq
\np{\nabla}{s_i}$ for all $1\le i \le m$.  
Hence by definition of $\preceq$ for terms-in-context,
  there exist substitutions $\sigma$ and $\varphi$ such that:
\begin{itemize}
\item $\vartheta_2^n\sigma$  and $\vartheta_2^n\varphi$ respect
  $\Gamma_n\setminus \Gamma_1$,
\item  $(\Gamma_n\setminus
  \Gamma_1)\sigma\subseteq\nabla$ and $(\Gamma_n\setminus
  \Gamma_1)\varphi\subseteq\nabla$, and
\item  $\nabla \vdash
  Y_i\vartheta_2^n\sigma \approx t_i$ and $\nabla \vdash
  Y_i\vartheta_2^n\varphi \approx s_i$ for all $1\le i \le m$.
\end{itemize}
 Finally, since $\vartheta_1=\{X\mapsto \nomh(Y_1,\ldots,Y_m)\}$ and 
$\Gamma_n\setminus \Gamma_0=\Gamma_n\setminus \Gamma_1$, we obtain
$\np{\Gamma_n\setminus \Gamma_0}{Z_0\vartheta_1^n}\preceq
\np{\nabla}{t_0}$ and $\np{\Gamma_n\setminus
  \Gamma_0}{Z_0\vartheta_1^n}\preceq \np{\nabla}{s_0}$. \vspace{0.3cm}

{\sf Abs:} $Z_0=X$, $t_0= a.t$, $s_0=b.s$, $\Gamma_1=\Gamma_0$, and
$\vartheta_1=\{X\mapsto c.Y\}$, where $ \nabla \vdash c\# a.t$ and $
\nabla \vdash c\# b.s$. $P_1$ contains the AUT $Y: \swap{c}{a}\permef t
\triangleq \swap{c}{b}\permef s$. By the IH, 
$\np{\Gamma_n\setminus
  \Gamma_1}{Y_i\vartheta_2^n}\preceq \np{\nabla}{\swap{c}{a}\permef t}$ and
$\np{\Gamma_n\setminus \Gamma_1}{Y_i\vartheta_2^n}\preceq
\np{\nabla}{\swap{c}{b}\permef s}$
 hence $\nabla \vdash
Y\vartheta_2^n\sigma \approx \swap{c}{a}\permef t$ and $\nabla \vdash
Y\vartheta_2^n\varphi \approx \swap{c}{b}\permef s$ for some $\sigma$ and
$\varphi$ that in addition satisfy the other properties (as above) for $\preceq$. Then, since we also have that $ \nabla \vdash c\# a.t$ and $
\nabla \vdash c\# b.s$ we can prove, with the $\approx$-abs rules, that $\nabla \vdash c.Y\vartheta_2^n\sigma \approx a.t$ and $\nabla
\vdash c.Y\vartheta_2^n\varphi \approx b.s$.
 Finally, since $\vartheta_1=\{X\mapsto c.Y\}$ and
$\Gamma_n\setminus \Gamma_0=\Gamma_n\setminus \Gamma_1$, we obtain
$\np{\Gamma_n\setminus \Gamma_0}{Z_0\vartheta_1^n}\preceq
\np{\nabla}{t_0}$ and $\np{\Gamma_n\setminus
  \Gamma_0}{Z_0\vartheta_1^n}\preceq \np{\nabla}{s_0}$.\vspace{0.3cm}

{\sf Sol:} $Z_0=X$, $t_0=t$, $s_0=s$, $\Gamma_1\setminus
\Gamma_0=\{a\#X \mid a\in \Atoms, \nabla \vdash a\#t,$ and $\nabla
\vdash a\#s \}$ and $\vartheta_1=\varepsilon$.  By the IH we have that
$(\Gamma_n\setminus \Gamma_1)\sigma \subseteq \nabla$,
$(\Gamma_n\setminus \Gamma_1)\varphi \subseteq \nabla$, $\nabla \vdash
X\vartheta_2^n\sigma \approx t$, and $\nabla \vdash
X\vartheta_2^n\varphi \approx s$ for some $\sigma$ and $\varphi$
respecting $\Gamma_n\setminus \Gamma_1$.

Since $\vartheta_1=\varepsilon$, from the IH we get $\nabla \vdash
X\vartheta_1^n\sigma \approx t$. To show that $(\Gamma_n\setminus
\Gamma_0)\sigma \subseteq \nabla$ take $a\# Y\in \Gamma_n\setminus
\Gamma_0$ for some $a$.

\begin{itemize}
\item  If $a\# Y\in\Gamma_n\setminus \Gamma_1$, then $\{a\# Y\}\sigma \subseteq \nabla$ by the IH,
\item otherwise, if $a\# Y\notin \Gamma_n\setminus
\Gamma_1$, then  $a\# Y\in \Gamma_1\setminus \Gamma_0$
with $X=Y$ and $X\vartheta_2^n =X$. By the IH, $\nabla
\vdash X\sigma \approx t$, besides, we know $\nabla \vdash
a\#t$. Therefore, we know $\nabla \vdash a\#X\sigma$, which by
Theorem~\ref{thm:properties:of:FC} implies $\{a\# X\}\sigma=\{a\#
Y\}\sigma \subseteq \nabla$. Thus, $(\Gamma_n\setminus \Gamma_0)\sigma
\subseteq \nabla$.
\end{itemize}

 Hence, we proved $\np{\Gamma_n\setminus
   \Gamma_0}{X\vartheta_1^n}\preceq \np{\nabla}{t}$, which is the same
 as $\np{\Gamma_n\setminus \Gamma_0}{Z_0\vartheta_1^n}\preceq
 \np{\nabla}{t_0}$.  $\np{\Gamma_n\setminus
   \Gamma_0}{Z_0\vartheta_1^n}\preceq \np{\nabla}{s_0}$ can be proved
 analogously. \vspace{0.3cm}

{\sf Mer:} First, we show that the following holds : For all $k\ge 0$, If $Z_k:t_k\triangleq s_k \in S_k$ and $c\#Z_k\in \Gamma_k$ for a $c\in A$, then $\nabla \vdash c\# t_k$ and $\nabla \vdash c\# s_k$. 

Proceed by induction on $k$. If $k=0$, then it follows from the
assumption on $P_0;S_0,\Gamma_0;\vartheta_0$. Assume it is true for
$k$ and show it for $k+1$. Take $Z_{k+1}:t_{k+1}\triangleq s_{k+1} \in
S_{k+1}$. We have two alternatives: Either $Z_{k+1}:t_{k+1}\triangleq
s_{k+1}$ has been a subject of the {\sf Mer} rule at this step, or
not. If not, then either it was already in $S_k$ or was introduced at
this step. In either case, by IH or because it has been
  introduced with {\sf Sol} rule, if $c\#Z_k\in \Gamma_k$ for a $c\in
A$, then $\nabla \vdash c\# t_k$ and $\nabla \vdash c\# s_k$. If
$Z_{k+1}:t_{k+1}\triangleq s_{k+1}$ was a subject of the {\sf Mer}
rule, then there exists some $U_k:r_k \triangleq q_k \in S_k$, such
that $\nabla \vdash \ppi_k\permef t_{k+1} \approx r_k$, $\nabla \vdash
\ppi_k\permef s_{k+1} \approx q_k$. Moreover, for all $d\# U_k \in S_k$ we now have
$\ppi_k^{-1}\permef d \# Z_{k+1} \in S_{k+1}$, and all $c\# Z_{k+1}
\in S_k$ are retained in $S_{k+1}$. For these $c$'s, since
$Z_{k+1}:t_{k+1}\triangleq s_{k+1} \in S_k$, by the induction
hypothesis we have $\nabla \vdash c\# t_{k+1}$ and $\nabla \vdash c\#
s_{k+1}$. As for $\ppi_k^{-1}\permef d \# Z_{k+1} \in S_{k+1}$, here
we need to show $\nabla \vdash \ppi_k^{-1}\permef d \# t_{k+1}$ and
$\nabla \vdash \ppi_k^{-1}\permef d \# s_{k+1}$. By the induction
hypothesis we know $\nabla\vdash d\# r_k$. Then $\nabla\vdash
\ppi_k^{-1}\permef d\# \ppi_k^{-1}\permef r_k$ and since $\nabla
\vdash \ppi_k\permef t_{k+1} \approx r_k$, we get $\nabla \vdash
\ppi_k^{-1}\permef d \# t_{k+1}$. $\nabla \vdash \ppi_k^{-1}\permef d
\# s_{k+1}$ can be shown similarly, using $\nabla\vdash d\# q_k$.

Now we turn to proving the {\sf Mer} case itself. In this case, there exist $X:t_1 \triangleq s_1\in S_0$, $Y:t_2
\triangleq s_2\in S_0$, and $\ppi$ such that $\nabla\vdash \ppi\permef
t_1 \approx t_2$ and $\nabla\vdash \ppi\permef s_1 \approx
s_2$. Moreover, by the construction of the derivation, $X:t_1
\triangleq s_1$ is either retained in $S_n$, or is removed from
there because there exist an AUT $Z:t_n\triangleq s_n\in S_n$ and a
permutation $\prho$ such that $\nabla \vdash \prho \permef t_n \approx
t_1$, $\nabla \vdash \prho \permef t_n \approx s_1$, and
$X\vartheta_1^n = \prho\permap Z $. We can turn these two cases into
one, permitting $Z=X$, $t_n=t_1$, $s_n=s_1$, and $\prho = \id$ to
cover also the first case.

Therefore, we can say that there exists a AUT $Z:t_n\triangleq s_n\in
S_n$ such that for some permutation $\prho$, $X\vartheta_1^n=
X\vartheta_2^n= \prho\permap Z$, $Y\vartheta_1^n= \ppi\permef
X\vartheta_2^n = \ppi \prho \permap Z$, $\Gamma_n\setminus
\Gamma_0=\{(\ppi \prho)^{-1}a\# Z \mid a\# Y\in \Gamma_0\}$,
$\nabla\vdash \ppi \prho \permef t_n\approx t_2$, $\nabla\vdash
\ppi\prho \permef s_n\approx s_2$, $\nabla \vdash \prho
\permef t_n \approx t_1$, and $\nabla \vdash \prho \permef s_n \approx
s_1$.

We want to prove $\np{\nablag_n\setminus
  \nablag_0}{Z_0\vartheta_1^n}\preceq \np{\nabla}{t_0}$. First, we
take $\sigma$ such that $Z\sigma=t_n$ and show $(\Gamma_n\setminus
\Gamma_0)\sigma\subseteq \nabla$. For this, we try to prove
$\{b\#U\}\sigma \subseteq \nabla$ for all $b\#U\in \Gamma_n\setminus
\Gamma_0$. By the IH, we have $\{b\#U\}\sigma \subseteq \nabla$ for
all $b\#U\in \Gamma_n\setminus \Gamma_1$. Note that $Z\sigma=t_n$ does
not restrict generality, because if $U=Z$, then by the proposition we proved at the beginning of the {\sf Mer} case
we have that $b\#U\in \Gamma_n\setminus \Gamma_1$ implies
$\nabla\vdash b\# t_n$. 

Therefore, $\{b\#U\}\sigma ={\fe}(\{b\#
Z\sigma\}) = {\fe}(\{b\# t_n\})$ and by
Theorem~\ref{thm:properties:of:FC} we indeed have $\{b\#U\}\sigma
\subseteq \nabla$. Now assume $b\#U\in (\Gamma_n \setminus \Gamma_0)
\setminus (\Gamma_n\setminus \Gamma_1)$. Then $b\# U\in \Gamma_n\cap
(\Gamma_1\setminus \Gamma_0)$. That means, $b\# U= \ppi^{-1}\permef a
\# X$, where $a\# Y \in \Gamma_0$. Moreover, the AUT $X:t_1\triangleq
s_1$ has been retained in $S_n$. From the latter we have, in fact,
$Z=X$, $t_n=t_1$, and $s_n=s_1$. Then $\{b\#U\}\sigma =
{\fe}(\{\ppi^{-1}\permef a \# X\sigma \mid a\# Y \in \Gamma_0\}) =
{\fe}(\{a \# \ppi \permef t_1 \mid a\# Y \in \Gamma_0\})$. On the
other hand, from the assumption on $P_0;S_0;\Gamma_0;\vartheta_0$ we know that for all $a \# Y\in
\Gamma_0$ we have $\nabla \vdash a \# t_2$,
from which by $\nabla
\vdash \ppi \permef t_1 \approx t_2$ we get $\nabla \vdash a \# \ppi
\permef t_1$. Hence, we can apply Theorem~\ref{thm:properties:of:FC}
to ${\fe}(\{a \# \ppi \permef t_1 \mid a\# Y \in \Gamma_0\})$,
obtaining $\{b\#U\}\sigma\subseteq \nabla$ also in this case. Hence,
$(\Gamma_n\setminus \Gamma_0)\sigma\subseteq \nabla$.

It remains to prove $\nabla \vdash Z_0\vartheta_1^n\sigma \approx
t_0$. First, assume $Z_0=X$, $t_0=t_1$, $s_0=s_1$. Then we have
$\nabla \vdash Z_0\vartheta_2^n\sigma \approx t_0$, because
$Z_0\vartheta_2^n\sigma = \prho\permef Z\sigma = \prho \permef t_n$
and we know that $\nabla \vdash \prho \permef t_n \approx t_1$. Since
$Z_0\vartheta_2^n=Z_0\vartheta_1^n$, we get $\nabla \vdash
Z_0\vartheta_1^n\sigma \approx t_0$. Hence, we proved
$\np{\nablag_n\setminus \nablag_0}{Z_0\vartheta_1^n}\preceq
\np{\nabla}{t_0}$ for this case. $\np{\nablag_n\setminus
  \nablag_0}{Z_0\vartheta_1^n}\preceq \np{\nabla}{s_0}$ can be proved
similarly.

Now let $Z_0=Y$, $t_0=t_2$, $s_0=s_2$ and prove again $\nabla \vdash
Z_0\vartheta_1^n\sigma \approx t_0$. Then $Z_0\vartheta_1^n\sigma =
\ppi \prho \permef Z \sigma = \ppi \prho \permef
t_n$. But we have already seen that $\nabla \vdash \ppi\prho
\permef t_n \approx t_2$. Hence, $\nabla \vdash Z_0\vartheta_1^n\sigma
\approx t_0$ is proved. It implies $\np{\nablag_n\setminus
  \nablag_0}{Z_0\vartheta_1^n}\preceq \np{\nabla}{t_0}$ for this
case. $\np{\nablag_n\setminus \nablag_0}{Z_0\vartheta_1^n}\preceq
\np{\nabla}{s_0}$ can be proved similarly.
\end{proof}

The Completeness Theorem states that for any given $A$-based generalization of two input terms-in-context, $\frN$ can compute one which is at most as general than the given one.
\begin{restatable}[Completeness of $\frN$]{theorem}{completenessN}
  \label{thm:completeness}
   Given terms $t$ and $s$ and freshness contexts $\nabla$ and~$\Gamma$,
   all based on a finite set of atoms $\Atoms$. If $\np{\Gamma}{r}$ is
   an $\Atoms$-based generalization of $\np{\nabla}{t}$ and
   $\np{\nabla}{s}$, then there exists a derivation $\{X: t\triangleq
   s\};\, \emptyset;\, \emptyset;\, $ $\varepsilon \Lra^+
   \emptyset;\,S;$ $ \Gamma';\, \sigma$ obtained by an execution of $\frN$, such that
   $\np{\Gamma}{r}\preceq \np{\Gamma'}{X\sigma}$.
\end{restatable}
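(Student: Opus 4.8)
The plan is to prove completeness by induction on the structure of a $\frN$-derivation, guided by the given generalization $\np{\Gamma}{r}$. Since $\np{\Gamma}{r}$ generalizes both inputs, there are substitutions $\sigma_t$ and $\sigma_s$, respecting $\Gamma$, with $\Gamma\sigma_t\subseteq\nabla$, $\Gamma\sigma_s\subseteq\nabla$, $\nabla\vdash r\sigma_t\approx t$, and $\nabla\vdash r\sigma_s\approx s$. The key idea is that these witnesses \emph{drive} the choice of rule at each step: I would prove a loop invariant stating that at every intermediate state $P;S;\Gamma';\sigma$ there is a way to extend the partial generalization $\np{\Gamma'}{X\sigma}$ (with holes at the generalization variables of $P\cup S$) so that it remains above $\np{\Gamma}{r}$ in the $\preceq$ order, i.e.\ $\np{\Gamma}{r}\preceq\np{\Gamma'}{X\sigma}$ holds for the fully-solved instance while each pending AUT $Z:t'\triangleq s'$ in $P$ still has $r$-image more general than both sides.

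\medskip\noindent\textbf{Step-by-step plan.} First I would set up the invariant precisely, tracking for each selected AUT $Z:t'\triangleq s'$ the corresponding subterm $r_Z$ of $r$ together with the restrictions of $\sigma_t,\sigma_s$ witnessing $\nabla\vdash r_Z\sigma_t\approx t'$ and $\nabla\vdash r_Z\sigma_s\approx s'$. Then I would argue that whatever $\head(r_Z)$ is, a matching rule is enabled, and applying it preserves the invariant:
\begin{itemize}
\item If $\head(r_Z)$ is a function symbol or atom $\nomh$, then $\nabla\vdash r_Z\sigma_t\approx t'$ forces $\head(t')=\head(s')=\nomh$, so \textsf{Dec} applies; the witnesses descend to the arguments.
\item If $r_Z$ is an abstraction $c.r'$, then \textsf{Abs} applies with this very $c\in\Atoms$, and I must check that $c$ satisfies the freshness side conditions $\nablai\vdash c\#t'$, $\nablai\vdash c\#s'$ using that $c\notin\fatoms$ of the images together with Theorem~\ref{thm:properties:of:FC} and the definition of $\approx$-abs.
\item If $r_Z$ is a suspension $\ppi\permap W$, then $\np{\Gamma}{r}$ is as general as possible at that node, so \textsf{Sol} is the right move; I then need $\Gamma'$, enlarged by the $\Gamma'$-set of \textsf{Sol}, to stay below $\Gamma$ under the witness, which follows from the explicit definition of that freshness set and Theorem~\ref{thm:properties:of:FC}.
\end{itemize}
When two solved AUTs have $r$-images that are suspensions of the \emph{same} variable $W$, the permutations relating them supply exactly an $\atoms$-based $\ppi$ enabling \textsf{Mer}; conversely, whenever \textsf{Mer} is still applicable on the final store it must be performed to reach a terminal state, and I would show the merge only makes $\np{\Gamma'}{X\sigma}$ more general, preserving $\np{\Gamma}{r}\preceq\np{\Gamma'}{X\sigma}$.

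\medskip\noindent\textbf{Termination and conclusion.} I would invoke termination of $\frN$ (each rule strictly shrinks a suitable measure on $P$, and \textsf{Mer} shrinks $S$) to guarantee that the invariant-respecting choices reach a final state $\emptyset;S;\Gamma';\sigma$. At that point every leaf of $X\sigma$ is a suspension, the store contains one representative per shared generalization variable of $r$, and the accumulated witness substitution composes into a single $\psi$ with $\nabla\vdash X\sigma\,\psi\approx r$ and $\Gamma'\psi\subseteq\Gamma$, which is precisely $\np{\Gamma}{r}\preceq\np{\Gamma'}{X\sigma}$.

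\medskip\noindent\textbf{Main obstacle.} The hard part will be the \textsf{Mer} step and its interaction with the freshness contexts. Matching the variable-sharing pattern of $r$ (which variables of $r$ coincide) against the store requires producing the correct equivariance permutation $\ppi$ and verifying it is $\atoms(t_1,s_1,t_2,s_2)$-based while simultaneously checking that the freshness constraints transported by $\Gamma'\{Y\mapsto\ppi\permap X\}$ remain compatible with $\Gamma$. Keeping the two witness substitutions $\sigma_t,\sigma_s$ coherent across a merge—so that a single $\psi$ works at the end—is the delicate bookkeeping, and this is where I expect the bulk of the technical effort to lie.
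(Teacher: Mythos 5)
Your overall strategy is essentially the paper's, just packaged differently: the paper performs a structural induction on $r$ and lets the shape of $r$ dictate which rule fires, whereas you phrase the same idea as a loop invariant over the derivation. The case analysis (function symbol or atom $\to$ \textsf{Dec}, abstraction $\to$ \textsf{Abs}, suspension $\to$ \textsf{Sol}, shared variables of $r$ $\to$ \textsf{Mer}) and the witnessing substitutions coincide with the paper's.

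There is, however, one concrete gap. The paper begins by assuming without loss of generality that $\np{\Gamma}{r}$ is a \emph{least general} generalization, and that assumption does real work precisely in your suspension case. You claim that if $r_Z$ is a suspension $\ppi\permap W$ then ``\textsf{Sol} is the right move,'' but \textsf{Sol} is only applicable when none of the preceding rules applies, i.e.\ under conditions (a)--(c) of that rule. If $r_Z=\ppi\permap W$ while $t'=f(a)$ and $s'=f(b)$, the algorithm \emph{must} apply \textsf{Dec}, not \textsf{Sol}, and your case analysis as written has no branch for this situation. The fix is either the paper's (normalize to an lgg first, so that a suspension can only occur in $r$ at a position where one of the \textsf{Sol} conditions genuinely holds) or an extra invariant case in which $r_Z$ is a suspension but the algorithm descends via \textsf{Dec}/\textsf{Abs} anyway, with $\np{\Gamma}{r}\preceq\np{\Gamma'}{X\sigma}$ recovered by transitivity since the computed term is strictly less general at that node. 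Either repair is routine, but as stated the step would fail. A smaller slip: you say a forced terminal \textsf{Mer} ``only makes $\np{\Gamma'}{X\sigma}$ more general''---it in fact makes it \emph{less} general, which is the direction you actually need in order to preserve $\np{\Gamma}{r}\preceq\np{\Gamma'}{X\sigma}$.
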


\begin{proof}
 By structural induction on $r$. We can assume without loss of generality that $\np{\Gamma}{r}$ is an lgg of $\np{\nabla}{t}$ and $\np{\nabla}{s}$.

Let $r$ be an atom $a$. Then $t=s=a$. Therefore, the {\sf Dec} rule gives $\np{\emptyset}{a}$ as the computed answer. To show that $\np{\Gamma}{a}\preceq \np{\emptyset}{a}$, it is enough to take a substitution $\sigma$ such that $X\sigma \neq b$ for each $b\# X\in \Gamma$. Note that it is not necessary $b\in \Atoms$.

Let $r$ be an abstraction $c.r'$. Then $t=a.t'$, $s=b.s'$, $c\in A$, $\nabla\vdash c \# t$, $\nabla\vdash c \# s$, and $\np{\Gamma}{r'}$ is an $\Atoms$-based generalization of $\np{\nabla}{t'}$ and $\np{\nabla}{s'}$. In this case, the {\sf Abs} rule can be applied, which gives $\{Y:(c\,a)\permef t'\triangleq (c\,b)\permef s'\};\emptyset;\emptyset;\sigma_1$, where $\sigma_1=\{X\mapsto c.Y\}$. By the induction hypothesis, we can compute $\Gamma'$ and $\sigma_2$ such that $\np{\Gamma}{r'}\preceq \np{\Gamma'}{Y\sigma_2}$. Let $\sigma=\sigma_1\sigma_2$. We get $\np{\Gamma}{r}=\np{\Gamma}{c.r'} \preceq \np{\Gamma'}{c.Y\sigma_2}= \np{\Gamma'}{X\sigma}$.

Let $r$ be a suspension $\ppi\permap Z$. Since $\np{\Gamma}{r}$ is an lgg of $\np{\nabla}{t}$ and $\np{\nabla}{s}$, the context $\Gamma$ contains all constraints $\ppi^{-1}\permef a\# Z$ such that $\nabla\vdash a\# t$ and $\nabla\vdash a\# s$, and the following alternatives are possible:
\begin{itemize}
\item[(a)] $t$ and $s$ have distinct heads: ${ \head(t)}\neq { \head(s)}$, or
\item[(b)] $t$ and $s$ are both suspensions: $ t= \ppi_1 \permap Y_1$ and $ s=\ppi_2\permap Y_2$, where $\ppi_1,\ppi_2$ and $ Y_1,Y_2$ are not necessarily distinct, or
\item[(c)] $t$ and $s$ are abstractions, but $\Atoms$ does not contain an appropriate fresh atom to uniformly rename the bound atoms in $t$ and $s$.
\end{itemize}

These alternatives give exactly the conditions of the {\sf Sol} rule. Hence, we can apply it, getting $\emptyset;\{X:t\triangleq s\};\Gamma';\sigma$, where $\Gamma'=\{a\# X\mid a\in \Atoms\wedge \nabla \vdash a\# t \wedge \nabla\vdash a\# s\}$ and $\sigma=\varepsilon$. Then $\np{\Gamma}{r}\preceq \np{\Gamma'}{X\sigma}$, which can be confirmed by the substitution $\{Z\mapsto \ppi^{-1}\permap X\}$. 

Let $r$ be a term $f(r_1,\ldots,r_n)$. Then $t=f(t_1,\ldots,t_n)$, $s=f(s_1,\ldots,s_n)$, and $\np{\Gamma}{r_i}$ is a generalization of $\np{\nabla}{t_i}$ and $\np{\nabla}{s_i}$. We proceed by the {\sf Dec} rule, obtaining $\{Y_i : t_i\triangleq s_i \mid 1\le i\le n\};\emptyset;\emptyset;\{X\mapsto f(Y_1,\ldots,Y_n)\}$. By the induction hypothesis, we can construct derivations $D_1,\ldots,D_n$ computing the substitutions $\sigma_1,\ldots,\sigma_n$, respectively, such that $\np{\Gamma}{r_i}\preceq \np{\Gamma'_i}{Y_i\sigma_i}$ for $1\le i \le n$. We combine these derivations, together with the initial \textsf{Dec} step, into one derivation of the form ${\sf D} = \{X:t \triangleq s \};S_0;\Gamma'_0;\sigma_0\Lra \{Y_i:t_i \triangleq s_i \mid 1\le i \le n\};S_1;\Gamma'_1;\sigma_0\sigma_1 \Lra^* \emptyset; S_n; \Gamma'_n; \sigma_0\sigma_1\cdots\sigma_n$, where $\Gamma'_0=\Gamma'_1=\emptyset$, $\sigma_0=\varepsilon$, and $\sigma_1= \{X\mapsto f(Y_1,\ldots,Y_n)\}$. If $r$ does not contain the same variable more than once, $\np{\Gamma}{r_i}\preceq \np{\Gamma'_i}{Y_i\sigma_i}$ for all $1\le i \le n$ imply $\np{\Gamma}{r} = \np{\Gamma}{f(r_1,\ldots,r_n)} \preceq \np{\Gamma'}{f(Y_1,\ldots,Y_n)} = \np{\Gamma'}{X\sigma}$. If $r$ contains the same variable at positions $\pos_1$ and $\pos_2$ (in subterms of the form $\ppi_1\permap Z$ and $\ppi_2\permap Z$), it indicates that
\begin{enumerate}[(a)]
  \item\label{a} the path to $\pos_1$ is the same (modulo bound atom renaming) in $t$ and $s$. It equals (modulo bound atom renaming) the path to $\pos_1$ in $r$, and
  \item\label{b} the path to $\pos_2$ is the same (modulo bound atom renaming) in $t$ and $s$. It equals (modulo bound atom renaming) the path to $\pos_2$ in $r$.
  \item\label{c} there exists a substitution $\vartheta_1$, which respects $\Gamma$, such that $\Gamma \vdash \ppi_1\permap Z\vartheta_1 \approx \ptau_1\permef t|_{\pos_1}$ and $\Gamma \vdash \ppi_2\permef Z\vartheta_1 \approx \ptau_2\permef t|_{\pos_2}$, where $\ptau_1$ and $\ptau_2$ are permutations which rename atoms bound in $t$ by fresh ones,
  \item\label{d} there exists a substitution $\vartheta_2$, which respects $\Gamma$, such that $\Gamma \vdash \ppi_1\permef Z\vartheta_2 \approx \prho_1\permef s|_{\pos_1}$ and $\Gamma \vdash \ppi_2\permef Z\vartheta_2 \approx \prho_2\permef s|_{\pos_2}$, where $\prho_1$ and $\prho_2$ are permutations which rename atoms bound in $s$ by fresh ones,
\end{enumerate}

Then, because of (\ref{a}) and (\ref{b}), we should have two AUTs in $S_n$: One, between (renamed variants of) $t|_{\pos_1}$ and $s|_{\pos_1}$, and the other one between (renamed variants of) $t|_{\pos_2}$ and $s|_{\pos_2}$.  The possible renaming of bound atoms is caused by the fact that \textsf{Abs} might have been applied to obtain the AUTs. From (\ref{c}) and (\ref{d}) we know that $\ptau_1,\ptau_2,\prho_1,\prho_2$ are the names of those renaming permutations. Let those AUTs be $Z_1: \ptau_1\permef t|_{\pos_1} \triangleq \prho_1\permef s|_{\pos_1}$ and $Z_2 : \ptau_2\permef t|_{\pos_2} \triangleq \prho_2\permef s|_{\pos_2}$.

From (\ref{c}) we get $\Gamma \vdash  Z\vartheta_1 \approx \ppi_1^{-1} \ptau_1\permef t|_{\pos_1}$ and $\Gamma \vdash Z\vartheta_1 \approx \ppi_2^{-1} \ptau_2\permef t|_{\pos_2}$, which imply $\Gamma \vdash  \ppi_1^{-1} \ptau_1\permef t|_{\pos_1} \approx \ppi_2^{-1} \ptau_2\permef t|_{\pos_2} $ and, finally, $\Gamma \vdash \ppi_2 \ppi_1^{-1}\permef \ptau_1\permef t|_{\pos_1} \approx \ptau_2\permef t|_{\pos_2} $. Similarly, from (\ref{d}) we get $\Gamma \vdash  \ppi_2 \ppi_1^{-1} \prho_1\permef s|_{\pos_1} \approx \prho_2\permef s|_{\pos_2} $.

That means, we can make the step with the \textsf{Mer} rule for $Z_1: \ptau_1\permef t|_{\pos_1} \triangleq \prho_1\permef s|_{\pos_1}$ and $Z_2 : \ptau_2\permef t|_{\pos_2} \triangleq \prho_2\permef s|_{\pos_2}$ with the substitution $\sigma'_1=\{Z_2\mapsto \ppi_2 \ppi_1^{-1}\permap Z_1\}$.
We can repeat this process for all duplicated vari\-ables in $r$, extending $\sf D$ to the derivation
$\{X :t \triangleq \penalty10000 s \};S_0;\allowbreak \Gamma'_0; \sigma_0\Lra \{Y_i :t_i \triangleq s_i \mid 1\le i \le n\};$ $S_1;\Gamma'_1;\sigma_0 \Lra^* \emptyset; S_n;\Gamma'_n;
\sigma_0\sigma_1\cdots\sigma_n \Lra^+ \emptyset; S_{n+m}; \Gamma'_{n+m}; \allowbreak\sigma_0\sigma_1\cdots\sigma_n\sigma'_1\cdots\sigma'_m$, where $\sigma'_1,\ldots, \sigma'_m$ are substitutions introduced by the applications of the \textsf{Mer} rule. Let $\sigma =\sigma_0\sigma_1\cdots\sigma_n\sigma'_1\cdots\sigma'_m $ and $\Gamma'=\Gamma'_{n+m}$. By this construction, we have $\np{\Gamma}{r} \preceq \np{\Gamma'}{X\sigma}$, which finishes the proof.

\end{proof}

Depending on the selection of AUTs to perform a step, there can be different derivations in $\frN$ starting from the same AUT, leading to different generalizations. The next theorem states that all those generalizations are the same modulo variable renaming and $\alpha$-equivalence.

\begin{restatable}[Uniqueness Modulo $\simeq$]{theorem}{uniquenessN}
  \label{thm:uniqueness}
  Let $t$ and $s$ be terms and $\nabla$ be a freshness context that are based on the same finite set of atoms.
  Let $\{X: t\triangleq s\};\, \emptyset;\, \emptyset;\, $
  $\varepsilon \Lra^+ \emptyset;\,S_1;$ $ \nablag_1;\, \sigma_1$ and
  $\{X: t\triangleq s\};\, \emptyset;\, \emptyset;\, $ $\varepsilon
  \Lra^+ \emptyset;\,S_2;$ $ \nablag_2;\, \sigma_2$ be two maximal
  derivations in $\frN$. Then $\np{\nablag_1}{X\sigma_1}\simeq
  \np{\nablag_2}{X\sigma_2}$.
\end{restatable}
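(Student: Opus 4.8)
The plan is to prove something slightly stronger than $\simeq$: that the two maximal derivations compute the \emph{same} term-in-context modulo renaming of the internal variables and $\alpha$-equivalence under the computed context, which immediately yields $\np{\nablag_1}{X\sigma_1}\simeq \np{\nablag_2}{X\sigma_2}$. By Soundness (Theorem~\ref{thm:soundness}) both results are already $\Atoms$-based generalizations of $\np{\nabla}{t}$ and $\np{\nabla}{s}$, so it suffices to analyze the sources of nondeterminism in $\frN$ and to show that none of them affects the result up to this equivalence. There are essentially four choices to control: (i) the order in which AUTs are selected from $P$; (ii) the fresh atom $c$ used in \textsf{Abs}; (iii) in the merging phase, which pair of stored AUTs is merged and which generalization variable is retained; and (iv) the permutation $\ppi$ chosen in a single \textsf{Mer} step.

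First I would treat the $P$-phase (rules \textsf{Dec}, \textsf{Abs}, \textsf{Sol}). Since every AUT in $P\cup S$ contains its generalization variable exactly once and each step introduces only fresh variables, steps acting on distinct AUTs are independent and commute; a standard strong-confluence argument then shows that the store reached when $P$ becomes empty is determined, modulo renaming of the internal variables, by the common skeleton of $t$ and $s$. The freshness set $\Gamma'$ added by \textsf{Sol} is a deterministic function of the selected AUT, $\nabla$, and $\Atoms$. The only genuine freedom is the atom $c$ in \textsf{Abs}: replacing an admissible $c$ by another admissible $c'$ (one with $\nabla\vdash c'\# a.t$ and $\nabla\vdash c'\# b.s$) amounts to applying the swapping $\swap{c}{c'}$ to the whole continuation, so the two subderivations build abstractions $c.u$ and $c'.u'$ with $u'=\swap{c}{c'}\permef u$; since $\nabla\vdash c\# a.t$ and $\nabla\vdash c\# b.s$ hold as well, these are $\alpha$-equivalent and the accompanying contexts match. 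Thus the $P$-phase is canonical modulo renaming and $\alpha$-equivalence.

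Next I would treat the \textsf{Mer} phase. The relation ``the stored AUTs $X\!:\!t_1\triangleq s_1$ and $Y\!:\!t_2\triangleq s_2$ are mergeable'', i.e.\ there is an $\Atoms$-based $\ppi$ with $\nabla\vdash\ppi\permef t_1\approx t_2$ and $\nabla\vdash\ppi\permef s_1\approx s_2$, is an equivalence relation: reflexivity uses $\id$, symmetry uses $\ppi^{-1}$ together with equivariance of $\approx$, and transitivity uses composition of permutations. Consequently the final partition of the store into mergeable classes is independent of the order of \textsf{Mer} steps, and collapsing a class onto a different representative merely renames the surviving generalization variable and composes the involved permutations, yielding an equi-general result. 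This disposes of choices (i), (iii), and the bookkeeping part of (iv).

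The hard part, and the main obstacle, is choice (iv) for a \emph{fixed} merge: showing that two admissible permutations $\ppi,\ppi'$ give equi-general results. I would argue that $\Gamma\vdash\ppi\permap X\approx\ppi'\permap X$ holds for the context $\Gamma$ built by \textsf{Sol}. From $\nabla\vdash\ppi\permef t_1\approx t_2$ and $\nabla\vdash\ppi'\permef t_1\approx t_2$ (and likewise for $s_1,s_2$) one obtains $\nabla\vdash t_1\approx \ppi^{-1}\ppi'\permef t_1$ and $\nabla\vdash s_1\approx \ppi^{-1}\ppi'\permef s_1$, from which every atom moved by $\ppi^{-1}\ppi'$ must be fresh for both $t_1$ and $s_1$; but these are exactly the atoms $a\in\Atoms$ for which $a\# X$ was placed in $\Gamma$ by \textsf{Sol}. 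Since the atoms on which $\ppi$ and $\ppi'$ disagree are precisely those moved by $\ppi^{-1}\ppi'$, the side condition of the ($\approx$-susp.) rule is met, giving $\Gamma\vdash\ppi\permap X\approx\ppi'\permap X$ and hence equi-generality of the two substitutions. Making the step ``atoms moved by $\ppi^{-1}\ppi'$ are fresh for $t_1$ and $s_1$'' precise through the $\approx/\#$ theory in the presence of abstractions is the delicate point. Once it is established, combining it with the canonicity of the $P$-phase and the \textsf{Mer}-partition argument shows that $\np{\nablag_1}{X\sigma_1}$ and $\np{\nablag_2}{X\sigma_2}$ coincide modulo variable renaming and $\alpha$-equivalence, whence $\np{\nablag_1}{X\sigma_1}\simeq\np{\nablag_2}{X\sigma_2}$.
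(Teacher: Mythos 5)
Your strategy is the same one the paper uses---a commutation argument for independent steps plus a case analysis of the genuinely nondeterministic choices---but the two proofs invest their detail in different places, and the comparison is worth making. The paper spends essentially all of its effort on one choice that you dispatch abstractly: which of the two variables survives a \textsf{Mer} step ($\{Y\mapsto\ppi\permap X\}$ versus $\{X\mapsto\ppi^{-1}\permap Y\}$), verified by a direct computation showing $\Gamma\varphi_1\varphi_2=\Gamma\varphi_2$. Conversely, you isolate two choices the paper passes over with the sentence ``Decomposition, Abstraction, and Solving rules transform the selected AUT in a unique way'': the admissible atom $c$ in \textsf{Abs}, and---most importantly---the permutation $\ppi$ in a fixed \textsf{Mer} step. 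Your treatment of the latter is sound and is a real addition: the ``delicate point'' you flag is the standard nominal lemma that $\nabla\vdash t\approx\prho\permef t$ forces $\nabla\vdash a\# t$ for every atom $a$ with $\prho\permef a\neq a$; since $\ppi,\ppi'$ are $\atoms(t_1,s_1,t_2,s_2)$-based, every disagreement atom lies in $\Atoms$, so \textsf{Sol} has already placed $a\#X$ in the context and ($\approx$-susp.) closes the argument. Two corrections, though. First, your opening claim that the two results coincide ``modulo renaming and $\alpha$-equivalence under the computed context'' is too strong: for $t=a.X$, $s=b.X$, $\nabla=\{c\#X,d\#X\}$, $\Atoms=\{a,b,c,d\}$, choosing $c$ versus $d$ in \textsf{Abs} yields $\np{\{d\#Y\}}{c.Y}$ versus $\np{\{c\#Y\}}{d.Y}$, which are \emph{not} $\alpha$-equivalent under either computed context (only under their union), yet are equi-general via $\{Y\mapsto\swap{c}{d}\permap Y\}$. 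So the \textsf{Abs} case must be stated and proved as an equi-generality claim (via an equivariance argument on the continuation of the derivation), not an $\alpha$-equivalence claim; the theorem's conclusion $\simeq$ is unaffected. Second, in your ``mergeability is an equivalence relation'' step, the composed permutation witnessing transitivity need not be $\atoms(t_1,s_1,t_3,s_3)$-based as the side condition of \textsf{Mer} requires; you need Theorem~\ref{thm:completeness:equiv} (completeness of $\msys$) to recover a suitably based witness before concluding that the final partition of the store is order-independent.
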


\begin{proof}
It is not hard to notice that if it is possible to change the order of
applications of rules (but sticking to the same selected AUTs for each
rule) then the result remains the same (modulo fresh variable and atom names): Let ${\sf D}_1$ and ${\sf D}_2$ be two two-step derivations ${\sf D}_1 =
P_1;S_1;\Gamma_1;\sigma_1 \Lra_{\mathsf{R}_1} P_2;S_2;\Gamma_2;\sigma_1\vartheta_1
\Lra_{\mathsf{R}_2} P_3; S_3; \Gamma_3; \sigma_1\vartheta_1\vartheta_2$ and
${\sf D}_2= P_1;S_1;\Gamma_1;\sigma_1 \Lra_{\mathsf{R}_2}
P'_2;S'_2;\Gamma'_2;\sigma_1\vartheta_2 \Lra_{\mathsf{R}_1} P'_3; S'_3;\allowbreak\Gamma'_3;
\sigma_1\vartheta_2\vartheta_1$, where
$\mathsf{R}_1$ and $\mathsf{R}_2$ are (not necessarily different) rules and
each of them transforms \emph{exactly the same AUT(s)} in both ${\sf D}_1$ and
${\sf D}_2$. Then these AUT(s) are already present in $P_1\cup S_1$: They are introduced neither by $\mathsf{R}_1$ nor by $\mathsf{R}_2$. Therefore, $\dom(\vartheta_2)\cap \ran(\vartheta_1) = \dom(\vartheta_1)\cap \ran(\vartheta_2) = \emptyset$. Moreover, if we assume that the fresh variables and atoms introduced by the rules are the same in both derivations, then $P_3=P'_3$, $S_3=S'_3$, $\Gamma_3=\Gamma'_3$, and $\sigma_1\vartheta_1\vartheta_2=\sigma_1\vartheta_2\vartheta_1$.

Decomposition, Abstraction, and Solving rules transform the selected AUT in a unique way. We show that it is irrelevant in which order we decide equivariance in the Merging rule.

Let $P ;\; \{Z : t_1 \triangleq s_1, Y : t_2 \triangleq s_2\}\dotcup S ;\; \Gamma;\; \sigma \Lra P ;\; \{Z : t_1 \triangleq s_1\}\dotcup S ;\allowbreak \Gamma \{Y\mapsto \ppi \permap Z\};\; \sigma\{Y \mapsto \ppi \permap Z\} $ be the merging step with $\nabla \vdash \ppi\permef t_1\approx t_2$ and $\nabla \vdash \ppi\permef s_1\approx s_2$. If we do it in the other way around, we would get the step $P ;\; \{Z : t_1 \triangleq s_1, Y : t_2 \triangleq s_2\}\dotcup S ;\; \Gamma;\; \sigma \Lra P ;\; \{Y : t_2 \triangleq s_2\}\dotcup S ;\; \Gamma\{Z\mapsto \ppi^{-1} \permap Y\};\; \sigma\{Z \mapsto \ppi^{-1} \permap Y\} $.

Let $\vartheta_1 = \sigma\varphi_1$ with $\varphi_1=\{Y \mapsto \ppi \permap Z\}$ and $\vartheta_2=\sigma\varphi_2$ with $\varphi_2=\{Z \mapsto \ppi^{-1}\permap Y \}$. Our goal is to prove that $\np{\Gamma\varphi_1}{X\vartheta_1} \simeq \np{\Gamma\varphi_2}{X\vartheta_2}$. For this, we need to prove both $\np{\Gamma\varphi_1}{X\vartheta_1} \preceq \np{\Gamma\varphi_2}{X\vartheta_2}$ and $\np{\Gamma\varphi_2}{X\vartheta_2} \preceq \np{\Gamma\varphi_1}{X\vartheta_1}$.

First, prove $\np{\Gamma\varphi_1}{X\vartheta_1} \preceq \np{\Gamma\varphi_2}{X\vartheta_2}$. We should find such a $\varphi$ that $\Gamma\varphi_1\varphi \subseteq \Gamma\varphi_2$ and $\Gamma\varphi_2 \vdash X\vartheta_1\varphi \approx X\vartheta_2$. 

Take $\varphi=\varphi_2$. Note that for any term $t$, we have $\Gamma\varphi_2 \vdash t \varphi_1\varphi_2 \approx t\varphi_2$, because $\varphi_1\varphi_2 =\{Z \mapsto \ppi \ppi^{-1}\permap Y\}$ and we have $\Gamma\varphi_2 \vdash \ppi \ppi^{-1}\permap Y \approx Y$.
 Therefore, $\Gamma\varphi_2 \vdash X\vartheta_1\varphi \approx X\sigma\varphi_1\varphi_2 \approx X\sigma\varphi_2 \approx X\vartheta_2$ holds.

As for $\Gamma\varphi_1\varphi \subseteq \Gamma\varphi_2$, note that $\varphi_2$ respects $\Gamma\varphi_1$, because it replaces a variable with a suspension and the {\fe} algorithm will have to apply only {\sf Sus-E} rule. We introduce notations $\Gamma_U$ and $\overline{\Gamma}_U$ for any freshness context $\Gamma$ and a variable $U$, denoting $ \Gamma_U:= \{a\# U \mid a\# U \in \Gamma\}$ and $\overline{\Gamma}_U := \Gamma \setminus \Gamma_U $. Then $\Gamma\varphi_1= \overline{\Gamma}_Y \cup \Gamma_Y\varphi_1$ and $\Gamma\varphi_2= \overline{\Gamma}_Z \cup \Gamma_Z\varphi_2$.

Under this notation, $\Gamma\varphi_1\varphi_2=\overline{\Gamma}_Y\varphi_2 \cup \Gamma_Y\varphi_1\varphi_2$. Take $\overline{\Gamma}_Y\varphi_2$. We have $\overline{\Gamma}_Y\varphi_2 =
(\overline{\Gamma}_Y \setminus (\overline{\Gamma}_Y)_Z) \cup ((\overline{\Gamma}_Y)_Z)\varphi_2 = (\overline{\Gamma}_Y \setminus \Gamma_Z) \cup \Gamma_Z\varphi_2$. Since $\Gamma_Z\cap \Gamma_Z\varphi_2=\emptyset$, the we obtain $(\overline{\Gamma}_Y \setminus \Gamma_Z) \cup \Gamma_Z\varphi_2=(\overline{\Gamma}_Y \cup \Gamma_Z\varphi_2) \setminus \Gamma_Z$. As for $\Gamma_Y\varphi_1\varphi_2$, it is easy to see that $\Gamma_Y\varphi_1\varphi_2=\Gamma_Y$.

Hence, we get $\Gamma\varphi_1\varphi_2= ((\overline{\Gamma}_Y \cup \Gamma_Z\varphi_2) \setminus \Gamma_Z) \cup \Gamma_Y$. Since $\Gamma_Z\cap \Gamma_Y=\emptyset$, we get
$((\overline{\Gamma}_Y \cup \Gamma_Z\varphi_2) \setminus \Gamma_Z) \cup \Gamma_Y=((\overline{\Gamma}_Y \cup \Gamma_Z\varphi_2) \cup \Gamma_Y) \setminus \Gamma_Z = (\Gamma \cup \Gamma_Z\varphi_2) \setminus \Gamma_Z$. Since $\Gamma_Z\varphi_2\cap \Gamma_Z=\emptyset$, we get $(\Gamma \cup \Gamma_Z\varphi_2) \setminus \Gamma_Z=(\Gamma \setminus \Gamma_Z) \cup \Gamma_Z\varphi_2$. Hence, $\Gamma\varphi_1\varphi_2= \overline{\Gamma}_Z \cup \Gamma_Z\varphi_2 = \Gamma\varphi_2$. 

We proved $\np{\Gamma\varphi_1}{X\vartheta_1} \preceq \np{\Gamma\varphi_2}{X\vartheta_2}$. With a similar reasoning we can show $\np{\Gamma\varphi_2}{X\vartheta_2} \preceq \np{\Gamma\varphi_1}{X\vartheta_1}$.
\end{proof}

Theorems~\ref{thm:soundness}, \ref{thm:completeness}, and \ref{thm:uniqueness} imply that nominal anti-unification is unitary: For any $A$-based $\nabla$, $t$, and $s$, there exists an $A$-based lgg of $\np{\nabla}{t}$ and $\np{\nabla}{s}$, which is unique modulo $\simeq$ and can be computed by the algorithm $\frN$. 

Now we study how lgg's of terms-in-context depend on the set of atoms the terms-in-context are based on. The following lemma states the precise dependence.

 \begin{lemma}
    \label{lemma:subset:gen}
    Let $A_1$ and $A_2$ be two finite sets of atoms with $A_1\subseteq A_2$
    such that the $A_1$-based terms-in-context $\np{\nabla}{t}$ and
    $\np{\nabla}{s}$ have an $A_1$-based lgg $\np{\Gamma_1}{r_1}$ and
    an $A_2$-based lgg $\np{\Gamma_2}{r_2}$. Then $\Gamma_2\vdash r_1\preceq r_2$.
  \end{lemma}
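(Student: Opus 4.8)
The plan is to exploit monotonicity of $A$-basedness in $A$ together with the unitarity of $A_2$-based anti-unification. Since $A_1\subseteq A_2$, every $A_1$-based object is $A_2$-based; in particular $\atoms(r_1)\subseteq A_1\subseteq A_2$ and $\atoms(\Gamma_1)\subseteq A_2$, so the term-in-context $\np{\Gamma_1}{r_1}$ is $A_2$-based. As it is by hypothesis a generalization of $\np{\nabla}{t}$ and $\np{\nabla}{s}$, it is in fact an $A_2$-based generalization of the two inputs.

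The crux is then to show that the $A_2$-based lgg $\np{\Gamma_2}{r_2}$ is not merely maximal but the $\preceq$-greatest element (modulo $\simeq$) among all $A_2$-based generalizations of $\np{\nabla}{t}$ and $\np{\nabla}{s}$; that is, that every $A_2$-based generalization $p$ satisfies $p\preceq\np{\Gamma_2}{r_2}$. This is obtained by combining Theorems~\ref{thm:soundness}, \ref{thm:completeness}, and~\ref{thm:uniqueness}: applying Completeness with the parameter set $A_2$ to $p$ yields a derivation of $\frN$ computing some $\np{\Gamma'}{X\sigma}$ with $p\preceq\np{\Gamma'}{X\sigma}$; by Soundness this result is again an $A_2$-based generalization, and by Uniqueness all maximal derivations of $\frN$ produce $\simeq$-equivalent results, which, by the unitarity remarked after Theorem~\ref{thm:uniqueness}, are exactly the unique $A_2$-based lgg $\np{\Gamma_2}{r_2}$ modulo $\simeq$. (If the derivation supplied by Completeness is not already maximal, extend it to one; this is a routine point I would verify separately.) Hence $p\preceq\np{\Gamma'}{X\sigma}\simeq\np{\Gamma_2}{r_2}$, so $p\preceq\np{\Gamma_2}{r_2}$. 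Instantiating $p:=\np{\Gamma_1}{r_1}$ gives $\np{\Gamma_1}{r_1}\preceq\np{\Gamma_2}{r_2}$.

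It then remains only to unfold the definitions. By the definition of $\preceq$ on terms-in-context, $\np{\Gamma_1}{r_1}\preceq\np{\Gamma_2}{r_2}$ supplies a substitution $\sigma$ (respecting $\Gamma_1$, with $\Gamma_1\sigma\subseteq\Gamma_2$) such that $\Gamma_2\vdash r_1\sigma\approx r_2$. Discarding the side conditions on $\sigma$, this same $\sigma$ witnesses $\Gamma_2\vdash r_1\preceq r_2$ by the definition of the relation $\nabla\vdash t_1\preceq t_2$, which is the desired conclusion.

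I expect the middle step to be the main obstacle: passing from ``$\np{\Gamma_2}{r_2}$ is a maximal $A_2$-based generalization'' (the defining property of an lgg) to ``it is the greatest such generalization''. Maximality alone does not force comparability with an arbitrary generalization such as $\np{\Gamma_1}{r_1}$; one genuinely needs both Completeness (every generalization lies below some computed result) and Uniqueness (all computed results collapse to a single $\simeq$-class), which is precisely where the already-established unitarity of $A$-based anti-unification is used.
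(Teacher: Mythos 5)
Your proof is correct, but it takes a genuinely different route from the paper's. The paper argues structurally: it compares the two $\frN$-derivations computing $\np{\Gamma_1}{r_1}$ (run with atom set $A_1$) and $\np{\Gamma_2}{r_2}$ (run with atom set $A_2$), observes that the only divergence is that a shortage of atoms in $A_1$ forces some \textsf{Abs} steps of the second derivation to become \textsf{Sol} steps in the first --- so that at certain positions $r_1$ has a suspension where $r_2$ has an abstraction, and elsewhere the terms agree up to renaming --- and assembles a substitution $\varphi$ with $\Gamma_2\vdash r_1\varphi\approx r_2$ position by position. You instead note that $\np{\Gamma_1}{r_1}$ is itself an $A_2$-based generalization and invoke the greatest-element property of the $A_2$-based lgg, obtained by combining Theorems~\ref{thm:soundness}, \ref{thm:completeness} and~\ref{thm:uniqueness} exactly as in the paper's unitarity remark. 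Your argument is shorter, leans only on already-proved results, and in fact establishes the stronger claim $\np{\Gamma_1}{r_1}\preceq\np{\Gamma_2}{r_2}$ (including the conditions on $\Gamma_1$ and $\sigma$), from which $\Gamma_2\vdash r_1\preceq r_2$ follows by discarding the side conditions; the paper's derivation comparison, by contrast, yields the structural picture of where $r_1$ and $r_2$ differ, which is reused immediately in Lemma~\ref{lemma:saturation}. The one point you defer --- extending the derivation supplied by Completeness to a maximal one --- is indeed routine: a \textsf{Mer} step replaces the current result $\np{\Gamma}{X\sigma}$ by $\np{\Gamma\varphi}{X\sigma\varphi}$ with $\varphi=\{Y\mapsto\ppi\permap X\}$, and $\varphi$ itself witnesses $\np{\Gamma}{X\sigma}\preceq\np{\Gamma\varphi}{X\sigma\varphi}$ (it respects every freshness context, since a suspension contributes no free atoms in the sense of $\fatoms$), so the $\preceq$-chain survives the extension and the appeal to Theorem~\ref{thm:uniqueness} goes through.
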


\begin{proof}
$\np{\Gamma_1}{r_1}$ and $\np{\Gamma_2}{r_2}$ are unique modulo $\simeq$. Let ${\sf D}_i$  be the derivation in $\frN$ that computes $\np{\Gamma_i}{r_i}$, $i=1,2$. The number of atoms in $A_1$ and $A_2$ makes a difference in the rule \textsf{Abs}: If there are not enough atoms in $A_1$, an \textsf{Abs} step in ${\sf D}_2$ is replaced by a \textsf{Sol} step in ${\sf D}_1$. It means that for all positions $\pos$ of $r_1$, $r_2|_\pos$ is also defined. Moreover, there might exist a subterm $r_1|_\pos$, which has a form of suspension, while $r_2|_\pos$ is an abstraction. For such positions, $r_1|_\pos\preceq r_2|_\pos$. For the other positions $\pos'$ of $r_1$, $r_1|_{\pos'}$ and $r_2|_{\pos'}$ may differ only by names of generalization variables or by names of bound atoms.

Another difference might be in the application of \textsf{Sol} in both derivations: It can happen that this rule produces a larger $\Gamma'$ in ${\sf D}_2$ than in ${\sf D}_1$, when transforming the same AUT. 

Hence, if there are positions $\pos_1,\ldots,\pos_n$ in $r_1$ such that $r_1|_{\pos_i}=\ppi_i\permap X$, then there exists a substitution $\varphi_X$ such that $\Gamma_2\vdash \ppi_i \permap X\varphi \approx r_2|_{\pos_i}$, $1\le i \le n $. Taking the union of all $\varphi_X$'s where $X\in \vars(r_1)$, we get $\varphi$ with the property $\Gamma_2 \vdash r_1\varphi \approx r_2$. 
\end{proof}

 Note that, in general, we can not replace $\Gamma_2\vdash r_1\preceq r_2$ with $\Gamma_2\vdash r_1\simeq r_2$ in Lemma~\ref{lemma:subset:gen}. The following example
 illustrates this:
 \begin{example}
  \label{exmp:subset:gen}
   Let $t = a.b$, $s= b.a$, $\nabla =\emptyset$, $A_1=\{a,b\}$, and
   $A_2=\{a,b,c\}$. Then for $\np{\nabla}{t}$ and $\np{\nabla}{s}$,
   $\np{\emptyset}{X}$ is an $A_1$-based lgg and $\np{\{c\#X\}}{c.X}$
   is an $A_2$-based lgg. Obviously, $\{c\#X\} \vdash X \preceq c.X$ but not $\{c\#X\} \vdash c.X \preceq X$.
 \end{example}

This example naturally leads to a question: Under which additional conditions can we have $\Gamma_2\vdash r_1\simeq r_2$ instead of $\Gamma_2\vdash r_1\preceq r_2$ in Lemma~\ref{lemma:subset:gen}? To formalize a possible answer to it, we need some notation.

Let the terms $t,s$ and the freshness context $\nabla$ be based on the same set of atoms $A$. The maximal subset of $A$, \emph{fresh} for $t,s$, and $\nabla$, denoted $\unused(A,t,s,\nabla)$, is defined as $A\setminus (\atoms(t,s) \cup \atoms(\nabla))$.

If $A_1\subseteq A_2$ are two sets of atoms such that $t,s,\nabla$ are at the same time based on both $A_1$ and $A_2$, then $\unused(A_1,t,s,\nabla)\subseteq \unused(A_2,t,s,\nabla)$.

Let $\abscard{t}$ stand for the number of abstraction occurrences in
$t$. $|A|$ stands for the cardinality of the set of atoms $A$. We say that a set of atoms $A$ is \emph{saturated} for $A$-based $t,s$ and $\nabla$, if $|\unused(A,t,s,\nabla)|\ge \min\{\abscard{t},\abscard{s}\}$.

The following lemma answers the question posed above:
\begin{lemma}
  \label{lemma:saturation}
  Under the conditions of Lemma~\ref{lemma:subset:gen}, if $A_1$ is
  saturated for $t,s,\nabla$, then $\Gamma_2\vdash r_1\simeq r_2$. 
\end{lemma}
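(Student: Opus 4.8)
The plan is to strengthen Lemma~\ref{lemma:subset:gen}, whose proof already yields $\Gamma_2\vdash r_1\preceq r_2$, by establishing the converse $\Gamma_2\vdash r_2\preceq r_1$ under the saturation hypothesis. Recall from that proof that the $A_1$-based and $A_2$-based lgg's are computed by derivations $\mathsf{D}_1$ and $\mathsf{D}_2$ of $\frN$ that agree step by step, except that where $\mathsf{D}_2$ performs an \textsf{Abs} step, $\mathsf{D}_1$ may be forced into a \textsf{Sol} step for lack of a fresh atom in $A_1$. I would show that saturation removes this gap entirely: I would prove that the $A_2$-based lgg can in fact be computed by a derivation using \emph{only} atoms from $A_1$, so that every \textsf{Abs} step of $\mathsf{D}_2$ is matched by an \textsf{Abs} step over $A_1$.

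The heart of the argument is a counting lemma for the \textsf{Abs} rule. First I would observe that the freshness requirement $\nabla\vdash c\# a.t'$ and $\nabla\vdash c\# b.s'$ of \textsf{Abs} is purely \emph{local}: it concerns only the two terms of the current AUT, which are renamed subterms of $t$ and $s$ together with the renaming atoms introduced by the \textsf{Abs} steps on the path from the root. In particular, a renaming atom introduced in one subtree never occurs in the AUTs of a sibling subtree (\textsf{Dec} separates the subterms, and \textsf{Mer} touches only $\Gamma$ and $\sigma$, not the bodies being processed), so such atoms may be reused across sibling branches. Consequently the number of \emph{pairwise distinct} renaming atoms that must be available simultaneously is bounded by the maximal number of nested \textsf{Abs} steps along a single branch. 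Since \textsf{Abs} fires only on AUTs whose two sides are both abstractions, each \textsf{Abs} on a branch consumes one abstraction from $t$ and one from $s$, so this nesting depth is at most $\min\{\abscard{t},\abscard{s}\}$.

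Next I would split the choice of the renaming atom $c$ into two cases. If the current abstraction bodies contain a free suspension $\ppi\permap X$, then $\nabla\vdash c\#\ppi\permap X$ forces $\ppi^{-1}\permef c\# X\in\nabla$, hence $\ppi^{-1}\permef c\in\atoms(\nabla)\subseteq A_1$; as the renaming atoms produced so far already lie in $A_1$, the permutation $\ppi$ is $A_1$-based and therefore $c\in A_1$. Thus in this case every atom usable by $\mathsf{D}_2$ is already available to $\mathsf{D}_1$, and no gap can arise. Otherwise the freshness condition reduces to requiring $c$ to differ from the finitely many free atoms of the two bodies; choosing $c$ from $\unused(A_1,t,s,\nabla)$ and avoiding the at most $\min\{\abscard{t},\abscard{s}\}-1$ renaming atoms already used on the current branch, the saturation hypothesis $|\unused(A_1,t,s,\nabla)|\ge\min\{\abscard{t},\abscard{s}\}$ guarantees that such a $c\in A_1$ still exists.

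Combining these, every \textsf{Abs} step of $\mathsf{D}_2$ can be reproduced over $A_1$, so there is an $A_1$-based derivation computing the $A_2$-based lgg. By Theorem~\ref{thm:uniqueness} this derivation yields a result equivalent to the $A_1$-based lgg, whence $\np{\Gamma_2}{r_2}\simeq\np{\Gamma_1}{r_1}$; in particular $\Gamma_2\vdash r_2\preceq r_1$, which together with $\Gamma_2\vdash r_1\preceq r_2$ from Lemma~\ref{lemma:subset:gen} gives $\Gamma_2\vdash r_1\simeq r_2$. I expect the main obstacle to be the counting lemma itself, specifically justifying the reuse of renaming atoms across sibling subtrees and verifying that the governing quantity is the per-branch nesting depth rather than the total abstraction count, so that the bound $\min\{\abscard{t},\abscard{s}\}$ (and hence the exact saturation hypothesis) is precisely what is required; Example~\ref{exmp:subset:gen} shows that this bound is tight.
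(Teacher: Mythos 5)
Your overall strategy is the paper's: align the $A_1$-based and $A_2$-based derivations of $\frN$ and show that under saturation every \textsf{Abs} step available to $\mathsf{D}_2$ is matched by an \textsf{Abs} step over $A_1$, so the two runs produce the same term. But the counting lemma you single out as the main obstacle is more work than is needed. Each \textsf{Abs} step consumes one abstraction occurrence of (a renamed copy of) $t$ and one of $s$, so the \emph{total} number of \textsf{Abs} steps in the whole derivation --- not merely the nesting depth along one branch --- is already bounded by $\min\{\abscard{t},\abscard{s}\}$. Saturation therefore lets you assign a pairwise distinct atom of $\unused(A_1,t,s,\nabla)$ to every \textsf{Abs} step, and the reuse-across-siblings argument, together with its justification via \textsf{Dec} and \textsf{Mer}, can be dropped; this simpler total bound is exactly what the paper uses. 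Your case split for bodies containing suspensions is a reasonable extra precaution on a point the paper passes over: when the freshness condition must be discharged through a suspension, the witnessing atom is forced into $A_1$, and when no atom works at all, both derivations fall through to \textsf{Sol}, so no discrepancy arises either way.

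The one genuine misstep is the final paragraph. The claim $\np{\Gamma_2}{r_2}\simeq\np{\Gamma_1}{r_1}$ is false in general, and the appeal to Theorem~\ref{thm:uniqueness} is not licit: your reproduction of $\mathsf{D}_2$ over $A_1$ is a run of $\frN$ with global parameter $\Atoms=A_1$, while $\mathsf{D}_2$ runs with $\Atoms=A_2$, and Theorem~\ref{thm:uniqueness} only compares maximal derivations with the same $\Atoms$. Concretely, \textsf{Sol} inserts $\{a\#X\mid a\in\Atoms\wedge\cdots\}$, so $\Gamma_2$ can be strictly larger than $\Gamma_1$ even when the terms coincide: for $t=a$, $s=b$, $\nabla=\emptyset$, $A_1=\{a,b\}$ (saturated, since $\min\{\abscard{t},\abscard{s}\}=0$) and $A_2=\{a,b,c\}$, the respective lgg's are $\np{\emptyset}{X}$ and $\np{\{c\#X\}}{X}$, which are not $\simeq$ as terms-in-context. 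This is why the lemma asserts only the term-level relation $\Gamma_2\vdash r_1\simeq r_2$ --- and that is what your construction actually delivers: once the \textsf{Abs} steps agree, the aligned derivations differ only in the freshness contexts produced by \textsf{Sol}, so the substitutions, and hence $r_1$ and $r_2$, coincide modulo the agreed choice of fresh variables and atoms, and $\Gamma_2\vdash r_1\simeq r_2$ follows with the identity substitution, with no appeal to Theorem~\ref{thm:uniqueness}. With that last step repaired, your proof is correct and is essentially the one in the paper.
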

\begin{proof}
  Let ${\sf D}_i$ be the derivation in $\frN$ that computes
  $\np{\Gamma_i}{r_i}$, $i=1,2$. Note that in each of these
  derivations, the number of \textsf{Abs} steps does not exceed
  $\min\{\abscard{t},\abscard{s}\}$. Since $A_1$ is saturated for
  $t,s,\nabla$ and $A_1\subseteq A_2$, $A_2$ is also saturated for
  $t,s,\nabla$. Hence, whenever an AUT between two abstractions is
  encountered in the derivation ${\sf D}_i$, there is always $c\in
  A_1$ available which satisfies the condition of the \textsf{Abs}
  rule. Therefore, such AU-E's are never transformed by
  \textsf{Sol}. We can assume without loss of generality that the
  sequence of steps in ${\sf D}_1$ and ${\sf D}_2$ are the same. we
  may also assume that we take the same fresh variables, and the same
  atoms from $\unused(A_1,t,s,\nabla)$ in the corresponding steps in
  ${\sf D}_1$ and ${\sf D}_2$. Then the only difference between these
  derivations is in the $\Gamma$'s, caused by the \textsf{Sol} rule
  which might eventually make $\Gamma_2$ larger than $\Gamma_1$. The
  $\sigma$'s computed by the derivations are the same and, therefore,
  $r_1$ and $r_2$ are the same (modulo the assumptions on the variable
  and fresh atom names). Hence, $\Gamma_2 \vdash r_1 \simeq r_2$.
\end{proof}

In other words, this lemma answers the following pragmatic question:
Given $t$, $s$ and $\nabla$, how to choose a set of atoms $A$ so that
(a) $t$, $s$, $\nabla$ are $A$-based and (b) in the $A$-based lgg
$\np{\Gamma}{r}$ of $\np{\nabla}{t}$ and $\np{\nabla}{s}$, the term
$r$ generalizes $s$ and $t$ in the ``best way'', maximally preserving
similarities and uniformly abstracting differences between $s$ and
$t$. The answer is: Besides all the atoms occurring in $t$, $s$, or
$\nabla$, $A$ should contain at least $m$ more atoms, where
$m=\min\{\abscard{t},\abscard{s}\}$.

Besides that, the lemma also gives the condition when the NAU-to-HOPAU translation can be used for solving NAU problems: The set of permitted atoms should be saturated.

\section{Deciding Equivariance}
\label{sect:permuting:matcher}

Computation of $\ppi$ in the condition of the rule {\sf Mer} above
requires an algorithm that solves the following problem: Given nominal
terms $t, s$ and a freshness context $\nabla$, find an
$\atoms(t,s)$-based permutation $\ppi$ such that $\nabla\vdash
\ppi\permef t \approx s$. This is the problem of deciding whether $t$
and $s$ are equivariant with respect to $\nabla$. 
In this Section we describe a rule-based algorithm for this problem,
called $\msys$.  

Note that our problem differs from the problem of equivariant
unification considered in~\cite{DBLP:journals/jar/Cheney10}: We do not
solve unification problems, since we do not allow variable
substitution. We only look for permutations to \emph{decide
  equivariance constructively} and provide a dedicated algorithm for
that.

The algorithm $\msys$ works on tuples of the form $\mproblem ;\, \mnabla
;\, \mran;\, \ppi$ (also called states). $\mproblem$ is a set of
equivariance equations of the form $t\perm s$ where $t,s$ are nominal
terms, $\mnabla$ is a freshness context, and $\mran$ is a finite set
of atoms which are available for computing $\ppi$. The latter holds
the permutation to be returned in case of success.

The algorithm is split into two phases. The first one is a
simplification phase where function applications, abstractions and
suspensions are decomposed as long as possible. The second phase is
the permutation computation, where given a set of equivariance
equations between atoms of the form $a\perm b$ we compute the
permutation which will be returned in case of success.
The rules of the first phase are the following:

\infrule{Dec-E}{Decomposition}
    {\{f(t_1,\ldots,t_m) \perm f(s_1,\ldots,s_m)\}\dotcup \mproblem ;\,
     \mnabla ;\, \mran ;\, \idp \Lra
      \{t_1\perm s_1,\ldots, t_m\perm s_m\}\cup \mproblem ; \mnabla ; \mran ;\idp. }

\infrule{Alp-E}{Alpha Equivalence}
    {\{a.t \perm b.s\}\dotcup \mproblem ;\; \mnabla ;\; \mran ;\; \idp \Lra
    \{\swap{\fresh{c}\,}{a}\permef t \perm \swap{\fresh{c}\,}{b}\permef s \}\cup \mproblem ;\; \mnabla ;\; \mran ;\; \idp, }
        [\noindent where $\fresh{c}$ is a fresh atom of the same sort as $a$ and $b$. ] 

\infrule{Sus-E}{Suspension}
    {\{\msusp_1\permap  X \perm \msusp_2\permap  X\}\dotcup \mproblem ;\; \mnabla ;\; \mran ;\; \idp \Lra
     \{\msusp_1\permef a\perm \msusp_2\permef a\mid a\in\mran\wedge a\#X\not\in\mnabla \}\allowbreak \cup \mproblem ;\; \mnabla ;\;\mran ;\;\idp. }
\par\bigbreak

\noindent The rules of the second phase are the following:

\infrule{Rem-E}{Remove}
    { \{a \perm b\} \dotcup \mproblem ;\; \mnabla ;\; \mran ;\;\ppi \Lra 
     \mproblem ;\; \mnabla ;\; \mran\setminus\{b\} ;\; \ppi, }
    [\noindent if $\ppi\permef a=b$.] 

\infrule{Sol-E}{Solve}
    { \{a \perm b\} \dotcup \mproblem ;\; \mnabla ;\; \mran ;\;\ppi \Lra
     \mproblem ;\; \mnabla ;\; \mran\setminus\{b\} ;\; \swap{\ppi\permef a\;}{b}\ppi ,}
    [\noindent if $\ppi\permef a,\, b \in \mran$ and $\ppi\permef a\neq b$.] 

\par\bigbreak

Note that in {\sf Alp-E}, $\fresh{c}$ is fresh means that
$\fresh{c}\notin \mran$ and, therefore, $\fresh{c}$ will not appear in
$\ppi$. These atoms are an auxiliary means which play a role during
the computation but do not appear in the final result.  

Given nominal terms $t, s$, freshness context $\nabla$, we construct a
state $\{t \perm s\};\;\nabla;\allowbreak\; \atoms(t,\allowbreak s);\;\idp$. We will prove
that when the rules transform this state into
$\emptyset;\;\mnabla;\;A;\;\ppi$, then $\ppi$ is an
$\atoms(t,s)$-based permutation such that $\nabla\vdash \ppi\permef t
\approx s$. When no rule is applicable, and the set of equations is
not empty, we will also prove that there is no solution, hence we fail
and return $\bot$.

\begin{example}\label{exmp:equiv}
  We illustrate the algorithm $\msys$ on examples:
\begin{itemize}
\item
  Consider the equivariance problem $E= \{a \perm a, $ $a.\swap{a}{b}(c\,d)\permap X\perm 
\allowbreak b.X\}$ and $\nablai=\{a\#X\}$:
 \begin{alignat*}{3}
   &  \{a \perm a, a.\swap{a}{b}(c\;d)\permap X\perm\allowbreak b.X\} ; \{a\#X\} ;\; \{a,b,c,d\} ;\;\idp  \Lra_{\text{ Alp-E}}{} \\
  &  \{a \perm a, \swap{\fresh{e}\,}{a}\swap{a}{b}(c\;d)\permap X\perm\allowbreak \swap{\fresh{e}\,}{b}\permap X\} ;\{a\#X\} ;\; \{a,b,c,d\} ;\; \idp \Lra_{\text{ Sus-E}} \\
  & \{a \perm a,\fresh{e}\perm\fresh{e},c\perm d, d\perm c \} ;\{a\#X\} ;\; \{a,b,c,d\} ;\; \idp \Lra_{\text{ Rem-E}}{}\\
  & \{\fresh{e}\perm\fresh{e},c\perm d, d\perm c \}; \{a\#X\} ;\; \{b,c,d\} ;\;\idp \Lra_{\text{ Rem-E}}{}\\
  & \{c\perm d, d\perm c \};\{a\#X\} ;\; \{b,c,d\} ;\;\idp \Lra_{\text{ Sol-E}}\\
  & \{d\perm c \} ;\; \{a\#X\} ;\; \{b,c\} ;\; \swap{c}{d} \Lra_{\text{ Rem-E}}\\
  &  \emptyset ;\; \{a\#X\} ;\; \{b\} ;\;\swap{c}{d}.
\end{alignat*}
  \item For $E=\{a.f(b,X) \perm b.f(a,X)\}$ and $\nablai=\{a\#X\}$, $\msys$ returns $\bot$.
  \item For $E=\{a.f(b,\swap{a}{b}\permap X) \perm b.f(a,X)\} $ and $\nablai=\{a\#X\}$, $\msys$ returns $(b\,a)$.
  \item For $E=\{a.b.\swap{a}{b}\swap{a}{c}\permap X =  b.a. \swap{a}{c}\permap X\}$ and $\nabla=\emptyset$, $\msys$ returns $\idp$.
  \item For $E=\{a.b.\swap{a}{b}\swap{a}{c}\permap X =  a.b. (b\, c)\permap X\}$ and $\nabla=\emptyset$, $\msys$ returns $\bot$.
  \end{itemize}
\end{example}

The Soundness Theorem for $\msys$ states that, indeed, the permutation
the algorithm computes shows that the input terms are equivariant:

\begin{restatable}[Soundness of $\msys$]{theorem}{soundnessE} \label{thm:soundness:equiv}
Let $ \{t\perm s\} ;\,\allowbreak \nablai ;\,\allowbreak \Atoms
;\,\allowbreak \idp \Lra^* \emptyset ;\, \nablai;\, B ;\, \ppi$ be a
derivation in $\msys$, then $\ppi$ is an $\Atoms$-based permutation
such that $\nablai\vdash \ppi\permef t\approx s$.
\end{restatable}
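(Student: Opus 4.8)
The plan is to prove soundness of $\msys$ by induction on the length of the derivation, but the natural induction hypothesis is not quite the statement of the theorem: the permutation $\ppi$ is built up incrementally during the second phase, and the set of equations $\mproblem$ is nonempty at intermediate states. So I would first generalize the claim to an invariant that holds at every state of the derivation, and then specialize it to the final state $\emptyset;\nablai;B;\ppi$.

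The invariant I would carry along is the following. Suppose the original problem is $\{t\perm s\};\nablai;\Atoms;\idp$, and a derivation reaches an intermediate state $\mproblem;\nablai;\mran;\prho$. Then $\prho$ is $\Atoms$-based, and $\prho$ together with the remaining equations in $\mproblem$ can be extended to a full solution of the original problem; more precisely, whenever $\mproblem$ eventually reduces to $\emptyset$ producing a final permutation $\ppi$ (which must extend $\prho$ in the sense that the swappings of $\prho$ are a suffix of those of $\ppi$), we have $\nablai\vdash \ppi\permef t\approx s$. The cleanest way to make this precise is to track, for each intermediate state, that $\nablai\vdash \ppi\permef t\approx s$ is \emph{equivalent} to $\nablai$ proving all the equations currently in $\mproblem$ under the permutation being assembled. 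I would phrase the key lemma as: if $\mproblem;\nablai;\mran;\prho \Lra^* \emptyset;\nablai;B;\ppi$, then $\ppi\permef a = b$ for every original decomposed atom-pair obligation, and $\nablai\vdash \ppi\permef u\approx v$ for every $u\perm v$ that had been decomposed away to reach the current state.

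The induction would proceed rule by rule, verifying that each rule preserves the invariant. For \textsf{Dec-E}, this is the congruence rule \textbf{($\approx$-application)}, read backwards: $\nablai\vdash \ppi\permef f(t_1,\dots,t_m)\approx f(s_1,\dots,s_m)$ holds iff $\nablai\vdash\ppi\permef t_i\approx s_i$ for each $i$. For \textsf{Alp-E}, the justification is the abstraction rules \textbf{($\approx$-abs-1)} and \textbf{($\approx$-abs-2)}: renaming both bound atoms to a common fresh $\fresh{c}$ that appears in neither term reduces $\ppi\permef(a.t)\approx b.s$ to $\ppi\permef(\swap{\fresh c}{a}\permef t)\approx \swap{\fresh c}{b}\permef s$, using that $\fresh c$ is fresh so the freshness side-conditions are automatically met and $\ppi$ does not touch $\fresh c$. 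For \textsf{Sus-E}, the justification is \textbf{($\approx$-susp.)}: $\nablai\vdash \ppi\permef(\msusp_1\permap X)\approx \msusp_2\permap X$ holds iff $\ppi\,\msusp_1$ and $\msusp_2$ agree on every atom $a$ with $a\#X\notin\nablai$, which is exactly the set of equations $\{\msusp_1\permef a\perm\msusp_2\permef a\}$ generated. For \textsf{Rem-E} and \textsf{Sol-E} I would show the running permutation $\ppi$ maintains $\ppi\permef a = b$ for every atom equation already consumed: \textsf{Rem-E} simply discards an equation already satisfied, while \textsf{Sol-E} composes with the swapping $\swap{\ppi\permef a}{b}$, which forces $a\mapsto b$ while preserving all previously fixed mappings because the removed atom $b$ is taken out of $\mran$ and hence is never the source or target of a later obligation.

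The step I expect to be the main obstacle is \textsf{Sol-E}: I must argue that prepending the swapping $\swap{\ppi\permef a}{b}$ does not disturb atom mappings already committed by earlier \textsf{Sol-E}/\textsf{Rem-E} steps. The key bookkeeping fact is that each of these rules removes $b$ from $\mran$, and the \textsf{Sol-E} side-condition requires both $\ppi\permef a$ and $b$ to lie in the current $\mran$; so the targets of already-fixed atoms have been removed from $\mran$ and cannot coincide with $b$, and the swapping therefore acts as the identity on them. Making this precise requires an auxiliary invariant relating the \emph{domain already determined} by $\ppi$ to the complement of $\mran$ within $\Atoms$, namely that the atoms removed from $\mran$ are exactly those whose image under $\ppi$ is already fixed and will not change. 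Once that invariant is established, $\Atoms$-basedness of $\ppi$ follows since every swapping added by \textsf{Sol-E} uses atoms drawn from $\mran\subseteq\Atoms$, and the final specialization to the theorem statement is immediate: at $\emptyset;\nablai;B;\ppi$ there are no remaining equations, so the invariant yields $\nablai\vdash \ppi\permef t\approx s$ directly.
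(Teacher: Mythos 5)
Your overall strategy is the paper's: induction along the derivation with a rule-by-rule case analysis, justifying \textsf{Dec-E}, \textsf{Alp-E}, \textsf{Sus-E} via the corresponding $\approx$-rules, and handling the second phase by the bookkeeping fact that every atom removed from $\mran$ is an already-committed image of $\ppi$ which no later swapping can touch (the paper states this as: the suffix permutation $\ppi'$ accumulated after a state with atom set $\mran$ only moves atoms of $\mran$, hence $\ppi'\permef b=b$ once $b$ is removed). That part of your sketch, including the $\Atoms$-basedness argument, is sound and matches the paper.

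There is, however, one genuine gap: the invariant you propose for \textsf{Sus-E} is false as stated. You claim that $\nablai\vdash \ppi\permef(\msusp_1\permap X)\approx \msusp_2\permap X$ is equivalent to the set of atom equations generated by \textsf{Sus-E}, but that rule only ranges over $a\in\mran$ with $a\# X\notin\mnabla$, whereas $\approx$-susp also imposes a condition on the \emph{fresh} atoms $\fresh{c}$ introduced by \textsf{Alp-E}: these occur in the suspended permutations $\msusp_1,\msusp_2$ (e.g.\ after \textsf{Alp-E} turns $a.X\perm b.X$ into $\swap{\fresh{c}\,}{a}\permap X\perm\swap{\fresh{c}\,}{b}\permap X$), the two sides may genuinely disagree on $\fresh{c}$, and $\fresh{c}\# X$ is not in $\nablai$. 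Concretely, for $f(a.X,X)\perm f(b.X,X)$ with $\nablai=\{a\# X,b\# X\}$ the algorithm succeeds with $\ppi=\idp$, yet $\nablai\vdash\swap{\fresh{c}\,}{a}\permap X\approx\swap{\fresh{c}\,}{b}\permap X$ does \emph{not} hold, so your intermediate equivalence breaks even though the theorem's conclusion is true. The paper repairs exactly this by carrying an auxiliary context $\Gamma=\{\fresh{c}\# X\mid \fresh{c}\text{ fresh},\ X\in\vars(t,s)\}$, proving the invariant for $\nablai\cup\Gamma$ throughout, and observing at the very end that $\Gamma$ is irrelevant to $\nablai\vdash\ppi\permef t\approx s$ because the original $t$ and $s$ contain no fresh atoms. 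You need this (or an equivalent device) for your \textsf{Sus-E} and \textsf{Alp-E} cases to go through; with it, the rest of your argument assembles into the paper's proof.
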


\begin{proof}
We assume the success state with $\ppi$ being the computed
permutation.  Since {\sf Sol-E} is the only rule which adds a new
swapping to $\ppi$ and the swapped atoms are required to be from
$\Atoms$, $\ppi$ is $\Atoms$-based.

The proof is by induction on the length of the derivation, and then,
by case analysis on the applied rule. Let $\Gamma$ be the freshness
environment containing all statements $\fresh{c}\#X$ form by a fresh atom
$\fresh{c}$ introduced along all the derivation and a variable $X$ of the
initial equation.

For any transformation step $ E ;\,\allowbreak \nablai ;\,\allowbreak
\Atoms ;\,\allowbreak \ppi \Lra E' ;\, \nablai;\, \Atoms' ;\, \ppi'$
we will prove that if $\nablai\cup\Gamma\vdash\ppi'\permef t_i'\approx
s_i'$, for any $t_i'\perm s_i'\in E'$, then
$\nablai\cup\Gamma\vdash\ppi\permef t_i\approx s_i$ for any $t_i\perm
s_i\in E$, for any possible applied rule. By induction, we will have
$\nablai\cup\Gamma\vdash\ppi\permef t\approx s$ for the initial
equivariance equation $t\perm s$.  Since $\Gamma$ is not relevant to
prove $t\approx s$, we have also $\nablai\vdash\ppi\permef t\approx
s$.

Soundness of {\sf Dec-E}: From $\nablai\vdash
\pi\permef t_1\approx s_1,\dots, \nablai\vdash \pi\permef t_n\approx
s_n$, follows directly, by the theory of alpha-equivalence
$\nablai\vdash f(\pi\permef t_1,\cdots,\pi\permef t_n) \approx
f(s_1,\cdots,s_n)$ and by the rule of swapping application
$\pi\permef f(t_1,\cdots,t_n)=f(\pi\permef t_1,\cdots,\pi\permef
t_n)$, that $\nablai\vdash \pi\permef f(t_1,\cdots,t_n) \approx
f(s_1,\cdots,s_n)$. In this case the permutation, the set of atoms and
the freshness context are not transformed by the rule.

Soundness of {\sf Alp-E}:  Let $\nabla$ be a
freshness context containing $\Gamma$, in particular $\fresh{c}\#X$
for any variable $X\in\vars(t,s)$. Assume $\nabla\vdash \pi(a\,
\fresh{c})\permef t \approx (b\, \fresh{c})\permef s$ by induction
hypothesis. From this, using $\approx$-abs-1 and the fact that $\pi$
does not affect to $\fresh{c}$, we can deduce $\nabla\vdash\pi\permef
c.(a\, \fresh{c})\permef t \approx c.(b\, \fresh{c})\permef s$.  We
can also construct a proof for $\nabla\vdash \fresh{c}\# t$ and
$\nabla\vdash\fresh{c}\# s$.  Therefore, using $\approx$-abs-2, we can
deduce $\nabla\vdash \fresh{c}.(a\, \fresh{c})\permef t \approx a.t$
and $\nabla\vdash \fresh{c}.(b\, \fresh{c})\permef s \approx b.s$.
Now using the lemmas about the transitivity of $\approx$ and additivity
of permutation application:
$$
\begin{array}{l}
\mbox{\it If $\nabla\vdash t\approx s$ and $\nabla\vdash s\approx u$ then $\nabla\vdash t\approx u$}\\
\mbox{\it If $\nabla\vdash t\approx s$ then $\nabla\vdash \pi\permef t\approx \pi\permef s$}
\end{array}
$$ 
we can deduce $\nabla\vdash \pi\permef (a.t) \approx b.s$.  This
proof proves the soundness of {\sf Alp-E}. Notice that $\pi$ does not
change in this rule.

Soundness of {\sf Sus-E}:  By induction hypothesis, assume
$\nabla\vdash\pi\,\pi_1\permef a \approx \pi_2\permef a$, for any atom
$a$ such that $a\in\Atoms$ and $a\#X\not\in\nabla$.  Assume also
$\Gamma\subset\nabla$, hence, for all fresh atoms, we have
$\fresh{c}\#X\in\nabla$. The rest of atoms $b$ are not fresh and
satisfy $b\not\in\Atoms$ and $b\#X\not\in\nabla$. Since $\pi_1$ and
$\pi_2$ only affect to atoms from $A$ or fresh\footnote{Notice that
  $\pi_1$ and $\pi_2$ can only contain swappings of the original
  equation (i.e. $\Atoms$-based) and swappings introduced by {\sf
    Alp-E}.}, and $\pi$ is $\Atoms$-based, we have $\pi\,\pi_1\permef
b = \pi_2\permef b = b$.  Therefore, $\nabla\vdash\pi\,\pi_1\permef a
\approx \pi_2\permef a$, for any atom $a\#X\not\in\nabla$, and by
$\approx$-susp we deduce $\nabla\vdash \pi\,\pi_1\permap X \approx
\pi_2\permap X$.

In the second phase we have to take into account that, in all
derivations of the form $\mproblem ;\; \mnabla ;\; A ;\; \ppi \Lra^*
\emptyset;\; \mnabla ;\; B ;\;\ppi'\ppi$, permutation $\ppi'$ only
affects to atoms from $A$. This can be proved by inspection of the
rules.

Soundness of {\sf Rem-E}: Let be the complete derivation as follows
$$
\begin{array}{l}
\{a \perm b\} \dotcup \mproblem ;\; \mnabla ;\; \mran ;\;\ppi \Lra\\
\mproblem ;\; \mnabla ;\; \mran\setminus\{b\} ;\; \ppi \Lra^*\\
\emptyset;\; \mnabla ;\; B ;\;\ppi'\ppi
\end{array}
$$ 
By induction hypothesis, $\ppi'\ppi$ solves $\mproblem$.  Since the
rule has been applied we also have $\ppi\permef a = b$.  Now, the
property above proves $\ppi'\permef b = b$, since
$b\not\in\mran\setminus\{b\}$. Therefore $\ppi'\ppi\permef a = b$.

Soundness of {\sf Sol-E}: let the derivation be:
$$
\begin{array}{l}
\{a \perm b\} \dotcup \mproblem ;\; \mnabla ;\; \mran ;\;\ppi \Lra\\
\mproblem ;\; \mnabla ;\; \mran\setminus\{b\} ;\; (\ppi\permef a\; b)\ppi \Lra^*\\
\emptyset;\; \mnabla ;\; B ;\;\ppi'(\ppi\permef a\; b)\ppi
\end{array}
$$ By induction hypothesis, $\ppi'(\ppi\permef a\; b)\ppi$ solves
$\mproblem$. Since $b\not\in\mran\setminus\{b\}$, we have $\ppi'\permef b
= b$. Hence $\ppi'(\ppi\permef a\; b)\ppi \permef a = \ppi'\permef b =
b$, and the computed permutation also solves the equivariance equation
$a \perm b$.
\end{proof}

We now prove an invariant lemma that is used in the proof of
completeness Theorem~\ref{thm:completeness:equiv}.

\begin{lemma}[Invariant Lemma]
  \label{lem:invariant}
  Let $A$ be a finite set of atoms, $E_1$ be a set of equivariance
  equations for terms based on $A$, $\ppi_1$ be an $A$-based
  permutation and $A_1\subseteq A$.  Let $E_1;\nabla;A_1;\ppi_1\Lra
  E_2;\nabla;A_2;\ppi_2$ be any step performed by a rule in $\msys$.
  Let $\Gamma = \{\fresh{c}\#X\mid X\in\vars(E_1),\, \mbox{$\fresh{c}$ is
    a fresh variable}\}$.  Let $\pmu$ be an $A$-based permutation such
  that $\nabla\cup\Gamma \vdash \pmu\permef t \approx s$, for all $t\perm s\in
  E_1$. Then
\begin{enumerate}
  \item\label{item:inv:1} $\nabla\cup\Gamma  \vdash \pmu \permef t'
    \approx s' $, for all $t'\perm
    s'\in E_2$.
  \item\label{item:inv:2} If $\pmu^{-1} \permef b = \ppi_1^{-1}
    \permef b$, for all $b \in A \setminus A_1$, then $\pmu^{-1}
    \permef b = \ppi_2^{-1} \permef b$, for all $b \in A \setminus
    A_2$.
\end{enumerate}
\end{lemma}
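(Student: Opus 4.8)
The plan is to proceed by case analysis on the rule of $\msys$ that realizes the step $E_1;\nabla;A_1;\ppi_1 \Lra E_2;\nabla;A_2;\ppi_2$, exploiting the fact that the two claims split cleanly along the two phases of the algorithm. The three first-phase rules (\textsf{Dec-E}, \textsf{Alp-E}, \textsf{Sus-E}) are written with $\idp$ in both components, so they only fire when $\ppi_1=\ppi_2=\idp$, and they leave the atom set untouched, i.e.\ $A_1=A_2$; hence for them claim~\ref{item:inv:2} is vacuous and all the work is in claim~\ref{item:inv:1}. The two second-phase rules (\textsf{Rem-E}, \textsf{Sol-E}) only delete the processed equation, so $E_2\subseteq E_1$ and claim~\ref{item:inv:1} is immediate; there all the work is in claim~\ref{item:inv:2}.

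For claim~\ref{item:inv:1} the easy first-phase rules are \textsf{Dec-E} and \textsf{Sus-E}. In \textsf{Dec-E} I would use $\pmu\permef f(t_1,\dots,t_m)=f(\pmu\permef t_1,\dots,\pmu\permef t_m)$ and invert $\approx$-application on $\nabla\cup\Gamma\vdash \pmu\permef f(t_1,\dots,t_m)\approx f(s_1,\dots,s_m)$ to get $\nabla\cup\Gamma\vdash \pmu\permef t_i\approx s_i$ for every $i$, the rest of $E_1$ being unchanged. In \textsf{Sus-E} I would invert $\approx$-susp on $\nabla\cup\Gamma\vdash \pmu\,\msusp_1\permap X\approx \msusp_2\permap X$, obtaining $(\pmu\,\msusp_1)\permef a=\msusp_2\permef a$ for every atom $a$ with $a\#X\notin\nabla\cup\Gamma$. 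Since each atom $a$ produced by the rule lies in $A_1\subseteq A$ and is therefore not one of the fresh atoms recorded in $\Gamma$, the rule side-condition $a\#X\notin\nabla$ already gives $a\#X\notin\nabla\cup\Gamma$, so $\pmu\permef(\msusp_1\permef a)=\msusp_2\permef a$, which is exactly solvability of the produced equation.

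The delicate case, which I expect to be the main obstacle, is \textsf{Alp-E}: here I must pass from $\nabla\cup\Gamma\vdash \pmu\permef(a.t)\approx b.s$ to $\nabla\cup\Gamma\vdash \pmu\permef(\swap{\fresh c}{a}\permef t)\approx \swap{\fresh c}{b}\permef s$. The key algebraic step is a conjugation identity: as $\pmu$ is $A$-based and $\fresh c\notin A$ we have $\pmu\permef\fresh c=\fresh c$, whence $\pmu\,\swap{\fresh c}{a}=\swap{\fresh c}{\pmu\permef a}\,\pmu$ and the left-hand side rewrites to $\swap{\fresh c}{a'}\permef u$, writing $a'=\pmu\permef a$ and $u=\pmu\permef t$. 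The hypothesis now reads $\nabla\cup\Gamma\vdash a'.u\approx b.s$ and the goal is $\nabla\cup\Gamma\vdash \swap{\fresh c}{a'}\permef u\approx \swap{\fresh c}{b}\permef s$, which is precisely the fresh-atom reformulation of abstraction $\alpha$-equivalence; its side hypotheses $\nabla\cup\Gamma\vdash\fresh c\# u$ and $\nabla\cup\Gamma\vdash\fresh c\# s$ hold because $\Gamma$ contains $\fresh c\#X$ for every $X$ and $\fresh c\notin A$. Concretely I would split on whether $a'=b$, handled by $\approx$-abs-1 and equivariance of $\approx$, or $a'\neq b$, where $\approx$-abs-2 yields $\nabla\cup\Gamma\vdash u\approx\swap{a'}{b}\permef s$ together with $\nabla\cup\Gamma\vdash a'\# s$; then applying equivariance by $\swap{\fresh c}{a'}$ and observing that $\swap{\fresh c}{a'}\swap{a'}{b}$ and $\swap{\fresh c}{b}$ differ by the single swapping $\swap{\fresh c}{a'}$ of two atoms fresh for $s$ (which therefore fixes $s$ modulo $\approx$) closes the gap.

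For claim~\ref{item:inv:2} both second-phase rules set $A_2=A_1\setminus\{b\}$, so $A\setminus A_2=(A\setminus A_1)\cup\{b\}$, and I would verify $\pmu^{-1}\permef b'=\ppi_2^{-1}\permef b'$ on the old atoms $A\setminus A_1$ and on the new atom $b$ separately, using crucially that the processed equation $a\perm b\in E_1$ forces $\pmu\permef a=b$ (two atoms that are $\approx$ must be equal), i.e.\ $\pmu^{-1}\permef b=a$. In \textsf{Rem-E}, $\ppi_2=\ppi_1$, so the old part follows verbatim from the premise, while for $b$ the rule condition $\ppi_1\permef a=b$ gives $\ppi_1^{-1}\permef b=a=\pmu^{-1}\permef b$. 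In \textsf{Sol-E}, $\ppi_2=\swap{\ppi_1\permef a}{b}\,\ppi_1$, hence $\ppi_2^{-1}=\ppi_1^{-1}\,\swap{\ppi_1\permef a}{b}$; since the rule requires $\ppi_1\permef a,b\in A_1$, any $b'\in A\setminus A_1$ is moved by neither endpoint of the swapping, so $\ppi_2^{-1}\permef b'=\ppi_1^{-1}\permef b'=\pmu^{-1}\permef b'$, whereas $\ppi_2^{-1}\permef b=\ppi_1^{-1}\permef(\ppi_1\permef a)=a=\pmu^{-1}\permef b$. This completes the plan; the only genuinely nontrivial ingredient is the \textsf{Alp-E} manipulation of $\alpha$-equivalence under a fresh atom.
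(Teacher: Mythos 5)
Your proof is correct and follows the same route as the paper's: a case analysis on the applied rule, with claim~\ref{item:inv:1} carrying the weight in the first phase (where $A_1=A_2$ and $\ppi_1=\ppi_2=\idp$) and claim~\ref{item:inv:2} in the second. The paper's version is far terser --- it dismisses \textsf{Dec-E} as obvious and \textsf{Alp-E} as ``following from the definitions'' --- so your detailed \textsf{Alp-E} argument (the conjugation identity $\pmu\,\swap{\fresh{c}}{a}=\swap{\fresh{c}}{\pmu\permef a}\,\pmu$ plus the fact that permutations disagreeing only on atoms fresh for $s$ act on $s$ identically modulo $\approx$) and your explicit check that the \textsf{Sol-E} swapping does not move atoms of $A\setminus A_1$ are correct elaborations of steps the paper leaves implicit, not a different proof.
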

\begin{proof} 
  By case distinction on the applied rule.

  {\sf Dec-E}: The proposition is obvious.

  {\sf Alp-E}: In this case it follows from the definitions of
  $\approx$ and permutation application.

  {\sf Sus-E}: In this case $t=\ptau_1\permap X$, $s=\ptau_2\permap
  X$, and by the assumption we have $\nabla \vdash \pmu \ptau_1\permap
  X \approx \ptau_2\permap X $. By the definition of $\approx$, it
  means that we have $a\#X\in\nabla$, for all atoms $a$ such that
  $\pmu \ptau_1 \permef a \neq \ptau_2 \permef a$. Hence, for all
  $a\in\Atoms$ with $a\#X\notin\nabla$ we have $\nabla \vdash \pmu
  \ptau_1 \permef a \approx \ptau_2 \permef a$. This implies that
  $\pmu$ also solves the equations in $\mproblem_2$, hence
  item~\ref{item:inv:1} of the lemma.

  Item~\ref{item:inv:2} of the lemma is trivial for these three
  rules, since $A_1=A_2$ and $\ppi_1=\ppi_2=\idp$.

  {\sf Rem-E}: The item~\ref{item:inv:1} is trivial. To prove the
  item~\ref{item:inv:2}, note that $t=a$, $s= b$, $\ppi_1=\ppi_2$ and
  we only need to show $\pmu^{-1}\permef b = \ppi_2^{-1} \permef
  b$. By the assumption we have $\nabla \vdash \pmu \permef a
  \approx b $. Since $a$ and $b$ are atoms, the latter simply means
  that $\pmu\permef a = b$. From the rule condition we also know that
  $\ppi_1\permef a = b$. From these two equalities we get $\pmu^{-1}
  \permef b = a = \ppi_2^{-1} b$.

  {\sf Sol-E}: The item~\ref{item:inv:1} is trivial also in this case.
  To prove the item~\ref{item:inv:2}, note that $t=a$, $s= b$,
  $\ppi_2=(\ppi_1\permef a\; b) \ppi_1$ and we only need to show
  $\pmu^{-1}\permef b = \ppi_2^{-1} \permef b$. By the assumption we
  have $\nabla \vdash \pmu \permef a \approx b $, which means that
  $\pmu \permef a = b$ and, hence, $a = \pmu^{-1}\permef b$. As for
  $\ppi_2^{-1}\permef b$, we have $\ppi_2^{-1} \permef b =
  \ppi_1^{-1}(\ppi_1\permef a\;b) \permef b =
  \ppi_1^{-1} \permef (\ppi_1\permef a) = a$. Hence, we get $\pmu^{-1}\permef b = a = \ppi_2^{-1}\permef b$.
\end{proof}

\begin{restatable}[Completeness of $\msys$]{theorem}{completenessE}
\label{thm:completeness:equiv}
   Let $\Atoms$ be a finite set of atoms, $t,s$ be $\Atoms$-based
   terms, and $\nabla$ be a freshness context. If
   $\nablai\vdash \pmu \permef t\approx s$ holds for some
   $\Atoms$-based permutation $\pmu$, then there exists a derivation $
   \{t\perm s\} ;\,\allowbreak \nablai ;\,\allowbreak
   \Atoms;\,\allowbreak \idp \Lra^* \emptyset ;\, \nablag;\, B ;\,
   \ppi $, obtained by an execution of $\msys$, such that $\ppi\permef
   a = \pmu\permef a$ for any atom $a\in\fatom(t)$.
\end{restatable}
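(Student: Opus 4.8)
The plan is to run $\msys$ from the initial state $\{t\perm s\};\nabla;\Atoms;\idp$ along a \emph{phased} derivation — first-phase rules (plus trivial \textsf{Rem-E}) applied exhaustively until only atomic equations remain, then \textsf{Sol-E}/\textsf{Rem-E} — and to show that, when $\pmu$ is a solution, this derivation cannot get stuck, so it reaches a success state $\emptyset;\nabla;B;\ppi$. The required agreement is then read off from the invariant maintained at that final state. Throughout I carry the auxiliary context $\Gamma=\{\fresh{c}\#X \mid X\in\vars(t)\cup\vars(s),\ \fresh{c}\text{ fresh}\}$ of Lemma~\ref{lem:invariant}; since these constraints mention only fresh atoms, the hypothesis $\nabla\vdash\pmu\permef t\approx s$ upgrades to $\nabla\cup\Gamma\vdash\pmu\permef t\approx s$, so $\pmu$ solves the single initial equation.

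\textbf{Termination and the global invariants.} First I would note that every $\msys$-derivation is finite: a lexicographic measure (the multiset of structural sizes of the non-atomic equations, then the number of atomic equations) strictly decreases under every rule, since \textsf{Dec-E}, \textsf{Alp-E}, \textsf{Sus-E} replace a structural equation by strictly simpler ones and \textsf{Rem-E}/\textsf{Sol-E} delete one atomic equation. Next, repeatedly applying Lemma~\ref{lem:invariant}(\ref{item:inv:1}) (induction on derivation length) gives that in every reachable state $\pmu$ still solves all equations w.r.t.\ $\nabla\cup\Gamma$; and Lemma~\ref{lem:invariant}(\ref{item:inv:2}), whose hypothesis holds vacuously at the start (as $\Atoms\setminus\Atoms=\emptyset$), yields by the same induction the invariant $\pmu^{-1}\permef b=\ppi^{-1}\permef b$ for every $b\in\Atoms\setminus A'$, at each state with current range $A'$.

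\textbf{No failure.} The key step is that no reachable state with a nonempty equation set is stuck, \emph{provided} the phased strategy is followed. In the first phase $\ppi=\idp$, so while a structural equation is present the matching first-phase rule fires; here I use that derivability of $\approx$ is syntax-directed on heads (rules $\approx$-atom, $\approx$-abs-*, $\approx$-susp., $\approx$-application force both sides to be of the same kind, with the same symbol or same suspension variable), and that $\pmu\permef t'$ has the same kind/head/variable as $t'$, so $s'$ must match: application gives \textsf{Dec-E}, abstraction gives \textsf{Alp-E}, and $\ptau_1\permap X$ forces $s'=\ptau_2\permap X$ giving \textsf{Sus-E}. Once only atomic equations remain, for $a\perm b$ the rule $\approx$-atom gives $\pmu\permef a=b$; if $\ppi\permef a=b$ use \textsf{Rem-E}, and otherwise a short case analysis (on whether $b\in A'$ and whether $\ppi\permef a\in A'$) derives a contradiction with $\pmu\permef a=b$ from the invariant together with the $\Atoms$-basedness of $\pmu$ and $\ppi$, leaving exactly $\ppi\permef a,b\in A'$ and $\ppi\permef a\neq b$, so \textsf{Sol-E} applies. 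With termination, the derivation reaches $\emptyset;\nabla;B;\ppi$, and $\ppi$ is $\Atoms$-based by Theorem~\ref{thm:soundness:equiv}.

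\textbf{Agreement on $\fatom(t)$ (the main obstacle).} At the success state the invariant reads $\pmu^{-1}\permef b=\ppi^{-1}\permef b$ for all $b\in\Atoms\setminus B$; setting $b=\pmu\permef a$ this gives $\ppi\permef a=\pmu\permef a$ for every atom $a$ whose $\pmu$-image has been removed from the range. I would then check that each $a\in\fatom(t)$ is of this form by tracking its free occurrence: a free leaf atom survives the bound-atom renamings of \textsf{Alp-E} and becomes the left side of an atomic equation $a\perm\pmu\permef a$ whose resolution removes $\pmu\permef a$, while an atom lying in a suspension permutation is carried by \textsf{Sus-E} into an atomic equation of the form $\ptau_1\permef a'\perm\ptau_2\permef a'$. \emph{The hard part} is the residual case of a free atom occurring only inside a suspension permutation whose preimage under the (renamed) permutation is excluded by a freshness constraint in \textsf{Sus-E}; its $\pmu$-image need not be removed from the range — the atom $b$ in Example~\ref{exmp:equiv} is exactly such a case. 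For these atoms the invariant does not apply directly, and one must instead argue that any $\Atoms$-based solution is forced (by bijectivity together with the $\approx$-susp.\ side conditions on the \emph{non}-excluded atoms) to take on $\fatom(t)$ the very value that $\ppi$ assigns, since $\ppi$ introduces swappings only for the generated, non-excluded equations. Establishing this determinacy of $\pmu|_{\fatom(t)}$, matched by the minimality of the swappings the algorithm performs, is the technically delicate step that completes the proof.
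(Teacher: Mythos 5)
Your overall route coincides with the paper's: carry the auxiliary context $\Gamma$ of fresh-atom constraints, maintain the two invariants of Lemma~\ref{lem:invariant} along the derivation, rule out stuck states by a case analysis on atomic equations $a\perm b$ (head clashes are impossible because $\pmu$ solves every equation; the subcases $\ppi'\permef a\notin A'$ and $b\notin A'$ each contradict invariant~(\ref{item:inv:2}) together with $\Atoms$-basedness), and then read the agreement with $\pmu$ off the final state. Your termination and no-failure blocks are sound and match the paper's argument almost step for step.

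The gap is exactly where you place it, and you do not close it. For a free atom $a$ of $t$ that occurs only inside a suspension permutation and whose generated equation is suppressed by the side condition $a\#X\notin\nabla$ of {\sf Sus-E} (after the {\sf Alp-E} renamings), nothing forces $\pmu\permef a$ ever to leave the range set, so invariant~(\ref{item:inv:2}) says nothing about $\ppi\permef a$. You propose to repair this by proving that every $\Atoms$-based solution is \emph{determined} on $\fatom(t)$; that determinacy claim is false. Take $t=s=\swap{a}{b}\permap X$, $\nabla=\{a\#X,\,b\#X\}$, $\Atoms=\{a,b\}$: then $\fatom(t)=\atoms(\swap{a}{b})=\{a,b\}$, yet both $\pmu=\idp$ and $\pmu=\swap{a}{b}$ satisfy $\nabla\vdash\pmu\permef t\approx s$, while {\sf Sus-E} generates no atomic equations and the unique derivation returns $\ppi=\idp$, which disagrees with $\pmu=\swap{a}{b}$ on all of $\fatom(t)$. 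So the route ``any $\Atoms$-based solution must equal $\ppi$ on $\fatom(t)$'' cannot succeed. For what it is worth, the paper's own proof passes over the same point by asserting that $\fatom(t)$ is contained in the set of left-hand sides of the atomic equations entering the second phase --- an assertion that fails in the example above for the same reason --- so you have in fact located a genuine soft spot of the original argument; but as it stands your proposal announces the decisive step rather than carrying it out, and the lemma you would need is not true in the generality in which you state it.
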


\begin{proof}
  First show that under the conditions of the theorem, if $ \{t\perm
  s\} ;\,\allowbreak \nablai ;\,\allowbreak \Atoms ;\,\allowbreak \idp
  \Lra^* \emptyset ;\, \nablag;\, B ;\, \ppi $ is a derivation
  obtained by an execution of $\msys$, then $\ppi\permef a =
  \pmu\permef a$, for any atom $a\in\fatom(t)$. Afterwards we prove
  that (under the conditions of the theorem) there is no failing
  derivation with the rules of $\msys$ starting from $\{t\perm s\}
  ;\,\allowbreak \nablai ;\,\allowbreak \Atoms ;\,\allowbreak
  \idp$. Since all derivations are finite, it will imply the existence
  of $ \{t\perm s\} ;\,\allowbreak \nablai ;\,\allowbreak \Atoms
  ;\,\allowbreak \idp \Lra^* \emptyset ;\, \nablag;\, B ;\, \ppi $.

  Let $\{t\perm s\} ;\allowbreak \nablai ;\allowbreak \Atoms
  ;\allowbreak \idp \Lra^* E'; \nablag'; B' ;\allowbreak \ppi' \Lra^*
  \emptyset ; \nablag; B ;\ppi $ be a derivation, where $E' ;\,
  \nablag';\, B' ;\, \ppi'$ is the first state in the second phase of
  the algorithm. It means that $E'$ contains equations between atoms
  only, and the atoms of $t$ (except, maybe, some bound ones which
  disappear after the application of the {\sf Alp-E} rule) appear in
  the left hand sides of equations in $E$. By
  Lemma~\ref{lem:invariant}, $\nablag' \vdash \pmu \permef a_1 \approx
  a_2 $, for all $a_1\perm a_2\in E'$. By
  Theorem~\ref{thm:soundness:equiv} and Lemma~\ref{lem:invariant} the
  same is true for $\ppi$. Therefore, $\nablag' \vdash \pmu \permef
  a_1 \approx \ppi \permef a_1 $, for all $a_1\perm a_2\in E'$. For
  atoms, $\nabla \vdash a\approx b$ iff $a=b$. Hence, we get
  $\pmu\permef a_1 = \ppi\permef a_1$, for all $a_1\in S$, where
  $\fatom(t)\subseteq S \subseteq \atoms(t)$. It proves $\ppi\permef a=
  \pmu\permef a$, for all $a\in\fatom(t)$, when the desired successful
  derivation exists.

  Now we show that no derivation with the rules of $\msys$ starting
  from $ \{t\perm s\} ;\,\allowbreak \nablai
  ;\,\allowbreak \Atoms ;\,\allowbreak \idp$ fails. Assume by
  contradiction that there exists such a failing derivation. Let
  $E';\nabla';A';\ppi'$ be the final state in it, to which no rule
  applies. Analyzing the rules in $\msys$, one can easily conclude
  that it can be caused by one of the following two cases:
\begin{enumerate}
  \item $E'$ contains an equivariance equation of the form
    $f(t_1,\ldots,t_n)\perm g(s_1,\ldots,\allowbreak s_m)$, where $f\neq g$.
  \item $E'$ contains an equivariance equation of the form $a\perm b$,
    where $\ppi'\permef a \neq b$, such that $\ppi'\permef
    a\notin A'$ or $b \notin A'$.
\end{enumerate} 
   In the first case, by Lemma~\ref{lem:invariant} $\nabla \vdash \pmu
   \permef f(t_1,\ldots,t_n) \approx g(s_1,\ldots,s_m)$ should hold,
   but $f\neq g$ forbids it. Hence, this case is impossible.

   Now we analyze the second case. Consider each condition.

   Condition 1: $\ppi'\permef a\notin A'$. Then either
   $\ppi'\permef a$ is a fresh atom, or $\ppi'\permef a
   \in A\setminus A'$.
\begin{itemize}
  \item \emph{$\ppi'\permef a$ is a fresh atom:} Since $\ppi'$ does
    not affect fresh atoms, we get $a \neq b$. On the other hand, we
    have $\nablag' \vdash \pmu \permef a \approx b$ and, hence, $\pmu
    \permef a = b$, because $\pmu \permef a$ and $b$ are atoms.  Since
    $\pmu$ is $A$-based, $b\notin A$ implies $a=b$. A contradiction.
  \item \emph{$\ppi'\permef a \in A\setminus A'$:} By
    Lemma~\ref{lem:invariant} we get $\pmu^{-1}\ppi' \permef a =
    \ppi'^{-1}\ppi' \permef a = a$. Therefore, $\ppi'\permef a
    =\pmu\permef a$ and we get $\pmu \permef a \neq b$, which
    contradicts $\nablag' \vdash \pmu \permef a \approx b$, because
    $\pmu\permef a$ and $b$ are atoms.
\end{itemize}

Condition 2: $b \notin A'$. Then either $b$ is a fresh atom, or $b \in A\setminus A'$.
\begin{itemize}
  \item \emph{$b$ is a fresh atom:} We obtain a contradiction by a
    reasoning similar to the case when $\ppi'\permef a$ is a
    fresh atom.
  \item \emph{$b \in A\setminus A'$:} The atom $b$ has been removed
    from the set of atoms in the derivation earlier either at {\sf
      Sus-E}, {\sf Rem-E}, or {\sf Sol-E} step, which indicates that
    there is $c\in A\setminus A'$ such that $c=\ppi'^{-1}\permef
    b$. Moreover, $c\neq a$. From Lemma~\ref{lem:invariant} we get
    $c=\pmu^{-1} \permef b$ which, together with $c\neq a$, implies
    $\pmu \permef a \neq b$. But it contradicts $\nablag' \vdash
    \pmu\permef a \approx b$.
\end{itemize}

   The obtained contradiction proves that no derivation with the rules
   of $\msys$ starting from $ \{t\perm s\}
   ;\,\allowbreak \nablai ;\,\allowbreak \Atoms ;\,\allowbreak \idp$
   fails.
\end{proof}

\section{Complexity Analysis}

We represent a permutations $\ppi$ as two hash tables. One for the
permutation itself, we call it $T_\ppi$, and one for the inverse of
the permutation, called $T_{\ppi^{-1}}$. The key of a hash tables is
an atom and we associate another atom, the mapping, with it. For
instance the permutation $\ppi = \swap{a}{b}\swap{a}{c}$ is
represented as $T_\ppi=\{a\mapsto c, b\mapsto a, c\mapsto b\}$ and
$T_{\ppi^{-1}}=\{a\mapsto b, b\mapsto c, c\mapsto a\}$.  We write
$T_\ppi(a)$ to obtain from the hash table $T_\ppi$ the atom which is
associated with the key $a$. If no atom is associated with the key $a$
then $T_\ppi(a)$ returns $a$. We write $T_\ppi(a\mapsto b)$, to set
the mapping such that $T_\ppi(a) = b$.  As the set of atoms is small,
we can assume a perfect hash function. It follows, that both defined
operations are done in constant time, leading to constant time
application of a permutation. Swapping application to a permutation
$\swap{a}{b}\ppi$ is also done in constant time in the following way:
Obtain $c=T_{\ppi^{-1}}(a)$ and $d=T_{\ppi^{-1}}(b)$ and perform the
following updates:
\begin{itemize}
\item[(a)]
			$T_\ppi(c\mapsto b)$ and
			$T_\ppi(d\mapsto a)$,
\item[(b)]
			$T_{\ppi^{-1}}(b\mapsto c)$ and
			$T_{\ppi^{-1}}(a\mapsto d)$.
\end{itemize}

We also represent set membership of atoms to a set of atoms~$A$ with a
hash table $\in_A$ from atoms to Booleans such that $\in_A(a)=true$
iff $a\in A$. We also have a list $L_A$ of the atoms representing the
entries of the table such that $\in_A(a)=true$ to easily know all
atoms in $A$.

Finally we also represent set membership of freshness constraints to a freshness environment $\nabla$ with a hash table~$\in_\nabla$. 
\begin{theorem}\label{thm:match:complexity}
 Given a set of equivariance equations $E$, and a freshness context $\mnabla$.
Let $m$ be the size of $\mnabla$, and let $n$ be the size of $E$. The algorithm $\msys$ has $O(n^2 + m)$ time complexity.
\end{theorem}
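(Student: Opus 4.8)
The plan is to account separately for the preprocessing, the simplification phase ($\{${\sf Dec-E}, {\sf Alp-E}, {\sf Sus-E}$\}$), and the permutation-computation phase ($\{${\sf Rem-E}, {\sf Sol-E}$\}$), using the constant-time primitives guaranteed by the chosen representation as the backbone of the whole argument. First I would record the one-time costs. Since no rule of $\msys$ ever modifies $\mnabla$, the membership table $\in_\nabla$ is built once, in $O(m)$. Computing $\mran=\atoms(t,s)$ together with its tables $\in_A$, $L_A$, and converting each permutation occurring inside a suspension of $E$ into a hash-table pair, all cost $O(n)$ (the total length of those permutations is bounded by $n$). With these structures, applying a permutation to an atom, prepending a swapping to a permutation as described before the theorem, and testing $a\in\mran$ or $a\#X\in\mnabla$ are each $O(1)$, so the remaining task reduces to counting rule applications and the sizes of the sets they touch.

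For the simplification phase I would bound each rule in turn. Each {\sf Dec-E} step removes a matching pair of function nodes and costs time linear in the arity, so all {\sf Dec-E} steps together cost $O(n)$. For {\sf Alp-E} the key observation is that, because prepending the renaming swapping to a suspension is $O(1)$, one step costs time linear in the size of the abstraction body it rewrites; since the number of {\sf Alp-E} steps is bounded by the number of abstraction nodes and each node is re-swapped at most once for each of its (at most $O(n)$) abstraction ancestors, the total is $O(n^2)$. For {\sf Sus-E} I would argue that the number of applications is $O(n)$: equations of the form $\ppi_1\permap X\perm\ppi_2\permap X$ arise only from decomposition, which neither creates suspension leaves nor increases their number, and there are $O(n)$ such leaves to begin with. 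Each such step scans every atom of $\mran$ once (filtering by $\in_A$ and $a\#X\notin\mnabla$, both $O(1)$) and emits at most $|\mran|=O(n)$ atom equations, so {\sf Sus-E} costs $O(n^2)$ overall and produces $O(n^2)$ atom equations.

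The second phase is then immediate: each atom equation $a\perm b$ is treated exactly once, by {\sf Rem-E}, {\sf Sol-E}, or by detecting a stuck (failing) state, and in every case the work is a constant number of $O(1)$ look-ups, swapping prepends, and deletions from $\mran$. Since the atom equations number $O(n^2)$ (those emitted by {\sf Sus-E} plus the $O(n)$ coming directly from atom leaves), the whole second phase is $O(n^2)$. Summing the preprocessing $O(n+m)$ with the two phases $O(n^2)$ yields the claimed $O(n^2+m)$.

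The step I expect to be the main obstacle is turning the rule-by-rule bounds into a single $O(n^2)$, in particular justifying the two factors of $n$ in the {\sf Sus-E} analysis at once (that only $O(n)$ suspension equations ever form, and that $|\mran|$ remains $O(n)$), and confirming that no primitive secretly exceeds $O(1)$. The delicate point is the permutations stored inside suspensions: they may grow as {\sf Alp-E} prepends swappings, so the argument genuinely requires keeping them as hash-table pairs, which guarantees $O(1)$ evaluation in {\sf Sus-E}. If they were kept as raw swapping sequences, {\sf Sus-E} alone could reach $O(n^3)$ and the bound would break.
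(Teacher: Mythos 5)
Your proposal is correct and follows essentially the same decomposition as the paper's proof: $O(n+m)$ preprocessing, $O(n)$ rule applications in the first phase with the swappings accumulated by {\sf Alp-E} inflating the suspension (and hence atom-equation) size to $O(n^2)$, and an $O(n^2)$ second phase with $O(1)$ work per atom equation. The only difference is an implementation detail: you maintain the suspension permutations as hash-table pairs throughout so that {\sf Alp-E} prepends in $O(1)$, whereas the paper builds $T_{\ppi_1},T_{\ppi_2}$ at each {\sf Sus-E} step in $O(n)$ time; both choices give the same $O(n^2+m)$ bound.
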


\begin{proof}
Collecting the atoms from $E$ in a separate set $A$ does not affect the space complexity and can be done in time $O(n)$. The freshness environment $\nabla$ will not be modified by rule applications and membership test in the rule {\sf Sus-E} can be done in constant time.
We only have to construct the corresponding hash tables in time $O(m)$.
We analyze complexity of both phases. 

For the first phase, notice that all rules can be applied only $O(n)$
many times, since {\sf Dec-E} removes two function symbols and {\sf
  Alp-E} two abstraction, and {\sf Sus-E} two suspensions. The
resulting equations after this phase only contain atoms. However,
notice that the size of these equations is not necessarily linear.
Every time we apply {\sf Alp-E} a new swapping is applied to both
subterms.  This swappings may increase the size of suspensions
occurring bellow the abstraction. Since there are $O(n)$ many
suspensions and $O(n)$ many abstractions, the final size of
suspensions is $O(n^2)$. This is the size of the atom equations at the
beginning of the second phase. We can see that the application of
{\sf Dec-E} rule has $O(1)$ time complexity (with the appropriate
representation of equations). 

The application of {\sf Alp-E} rule requires to find a fresh atom not in $A$, this can be done in constant time. Later, a swapping 
has to be applied twice.
Swapping application requires traversing the term hence has $O(n)$
time complexity. The application of {\sf Sus-E} requires to traverse
$L_A$ ($O(n)$) and check for freshness membership in $\in_\nabla$
($O(1)$). Finally it has to add equations like $(\pi_1\permef a\approx
\pi_2)$, this requires to build $T_{\pi_1}$ and $T_{\pi_2}$ that can
be done in $O(n)$ time complexity and allow us to build each equation
in $O(1)$ time.
Summing up, this phase has $O(n^2)$ time complexity.

For the second phase, notice that both rules {\sf Rem-E} and {\sf
  Sol-E} remove an equation and do not introduce any other one. Hence,
potentially having $O(n^2)$ many equations in this phase, these
equations can be applied $O(n^2)$ may times.  We construct a hash
table $T_\ppi$ for $\ppi$ that will be maintained and used by both
rules. Each application has time complexity $O(1)$.
{\sf Rem-E} uses $T_\ppi$ to check for applicability and if it is
applied, it only removes $b$ from $A$, hence updating $\in_A$ (notice that
we do not care about $L_A$ in this second phase of the algorithm).
{\sf Sol-E} uses $\in_A$ and $T_\ppi$ to check for applicability 
and if it is applied, it only removes $b$ from $A$ (hence updating
$\in_A$), and updates $T_\ppi$.
Summing up, this phase maintains the overall $O(n^2)$ time complexity.%
\end{proof}

\begin{theorem}
  \label{thm:complexity}
 The nominal anti-unification algorithm $\frN$ has $O(n^5)$ time complexity and $O(n^4)$ space complexity, where $n$ is the input size.
\end{theorem}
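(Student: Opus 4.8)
The plan is to analyse $\frN$ through the two phases visible in its rules: a \emph{decomposition phase} (rules \textsf{Dec}, \textsf{Abs}, \textsf{Sol}) that breaks the initial AUT $X: t\triangleq s$ down into a store $S$ of leaf AUTs, and a \emph{merging phase} (rule \textsf{Mer}) that repeatedly calls the equivariance algorithm $\msys$ to collapse equivariant store AUTs. Writing $n$ for the input size $\len{t}+\len{s}+|\nablai|+|A|$, I would first establish the two structural bounds that drive everything else. (i) The number of AUTs ever placed in the store is $O(n)$, since each one is a distinct leaf of the decomposition of $t$ and $s$, and \textsf{Mer} only ever removes store AUTs. (ii) The total size of all store AUTs is $O(n^2)$. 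Bound (ii) is the crux of the size analysis: each application of \textsf{Abs} prepends a swapping to every suspension lying below the abstraction (exactly the phenomenon analysed for $\msys$), so a suspension with $O(n)$ abstractions above it carries a permutation of length $O(n)$; with $O(n)$ suspensions and the $O(n)$ function symbols and atoms, the swapping-propagated terms have total size $O(n^2)$. Since decomposition partitions the terms, the store AUTs hold disjoint portions of these propagated terms, so their total size is $O(n^2)$ and any single store AUT has size at most $O(n^2)$.

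Next I would dispatch the decomposition phase and show it is dominated by the merging phase. \textsf{Dec} is applied $O(n)$ times at $O(1)$ cost each; \textsf{Abs} is applied $O(n)$ times, each applying two swappings to a subterm of size $O(n^2)$, for $O(n^3)$ in total; and \textsf{Sol} is applied $O(n)$ times, each computing $\Gamma'$ by testing (via $\fe$) freshness of each of the $O(n)$ atoms of $A$ against two terms of size $O(n^2)$, giving $O(n^3)$ per step and $O(n^4)$ overall. Thus the whole decomposition phase is $O(n^4)$, below the target.

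The heart of the argument is the merging phase. Equivariance is an equivalence relation on store AUTs, so saturating $S$ under \textsf{Mer} amounts to partitioning its $O(n)$ AUTs into classes by comparing each AUT against the current class representatives; in the worst case this needs $O(n^2)$ pairwise tests, each realized by one call to $\msys$. For store AUTs of sizes $\sigma_i$ and $\sigma_j$, the call runs on an equivariance problem of size $\sigma_i+\sigma_j$ with context of size $m=O(n)$, hence by Theorem~\ref{thm:match:complexity} costs $O((\sigma_i+\sigma_j)^2+m)$. Using $(\sigma_i+\sigma_j)^2\le 2(\sigma_i^2+\sigma_j^2)$ and summing over all pairs,
\[
\sum_{i,j}\bigl((\sigma_i+\sigma_j)^2+m\bigr)\;\le\;4k\sum_i \sigma_i^2 + O(k^2 m),
\]
where $k=O(n)$ is the number of store AUTs. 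By bound (ii), $\sum_i \sigma_i^2\le\bigl(\sum_i\sigma_i\bigr)^2=O(n^4)$, so the first term is $O(n)\cdot O(n^4)=O(n^5)$ and the second is $O(n^3)$. Hence the merging phase, and with it all of $\frN$, runs in $O(n^5)$ time.

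For space, I would observe that the persistent data (the store, the context $\nablag$, and the substitution $\sigma$ from which $X\sigma$ is read off) occupy $O(n^2)$, while the peak usage arises inside a single $\msys$ call: on a store AUT of size $N=O(n^2)$ its first phase can inflate the suspensions to total size $O(N^2)=O(n^4)$. Since the calls are sequential and their workspace is reclaimed between them, the overall space is $O(n^4)$. The main obstacle I expect is not any individual rule but the bookkeeping of the merging phase: justifying that $O(n^2)$ tests suffice, and then combining the per-call estimate of Theorem~\ref{thm:match:complexity} with bound (ii) through the sum-of-squares step, so that the few individually-large (size-$O(n^2)$) AUTs do not push the total past $O(n^5)$ — a naive ``$O(n^2)$ calls times $O(n^4)$ each'' count would wrongly give $O(n^6)$.
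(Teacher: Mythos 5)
Your proposal is correct and follows essentially the same route as the paper's proof: bound the store at $O(n)$ AUTs of total size $O(n^2)$ (via swapping propagation under \textsf{Abs}), charge the merging phase as $O(n^2)$ pairwise calls to $\msys$ at cost $O((\sigma_i+\sigma_j)^2+m)$ each, and close with the sum-of-squares estimate $k\sum_i\sigma_i^2\le k\bigl(\sum_i\sigma_i\bigr)^2=O(n^5)$, with the $O(n^4)$ space coming from a single equivariance call on an input of size $O(n^2)$. Your explicit accounting of the decomposition phase and the warning against the naive $O(n^2)\cdot O(n^4)$ count are additions the paper leaves implicit, but the argument is the same.
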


\begin{proof}
By design of the rules and
theorem \ref{thm:uniqueness} we can arrange a maximal derivation like
$\{X_0: t_0 \triangleq s_0 \};\emptyset;\emptyset;\ids
\Lra^*_{\textsf{Dec,Abs,Sol}} \emptyset;S_l;\Gamma_l;\sigma_l
\Lra^*_{\textsf{Mer}} \emptyset;S_m;\Gamma_m; \sigma_m$, postponing
the application of \textsf{Mer} until the end.  Rules {\sf Dec},
{\sf Abs} and {\sf Sol} can be applied $O(n)$ many times. 
However, notice that every application of {\sf Abs} may increase the size of every suspension below.  
Hence, the
size of the store $S_l$ is $O(n^2)$, although it only contain $O(n)$ equations, after an exhaustive derivation
$\{X_0: t_0 \triangleq s_0 \};\emptyset;\emptyset;\ids
\Lra^*_{\textsf{Dec,Abs,Sol}} \emptyset;S_l;\Gamma_l;\sigma_l$.

Now we turn to analyzing the transformation phase
$\emptyset;S_l;\Gamma_l;\sigma_l \Lra^*_{\textsf{Mer}}
\emptyset;S_m;\allowbreak \Gamma_m; \sigma_m$. 
Let $S_l=\{X_1:t_1\triangleq
s_1,\dots,X_k:t_k\triangleq s_k\}$ and $n_i$ be the size of
$X_i:t_i\triangleq s_i$, $1\leq i\leq k$, then
$\sum_{i=1}^{k}n_i=O(n^2)$ and $k=O(n)$. From theorem
\ref{thm:match:complexity} we know that solving the equivariance
problem for two AUPs $X_i:t_i\triangleq s_i$ and $X_j:t_j\triangleq
s_j$ and an arbitrary freshness context $\nabla$ requires $O((n_i+n_j)^2+m)$ time
and space, where $m$ is the size of $\nabla$ with $m=O(n)$.

Merging requires to solve this problem for each pair of AUPs. This
leads to the time complexity
$\sum_{i=1}^{k}\sum_{j=i+1}^{k}O((n_i+n_j)^2+m)\leq
O(\sum_{i=1}^{k}\sum_{j=1}^{k}(n_i+n_j)^2)+O(\sum_{i=1}^{k}\sum_{j=1}^{k}m)$. The
second sum is $\sum_{i=1}^{k}\sum_{j=1}^{k}m=k^2m=O(n^3)$. Now we
estimate an upper bound for the sum
$\sum_{i=1}^{k}\sum_{j=1}^{k}(n_i+n_j)^2=\sum_{i=1}^{k}\sum_{j=1}^{k}n_i^2+\allowbreak\sum_{i=1}^{k}\sum_{j=1}^{k}2n_i
n_j\;+\;\sum_{i=1}^{k}\sum_{j=1}^{k}n_j^2\leq \sum_{i=1}^{k}k
n_i^2\;+\; 2\left(\sum_{i=1}^{k}n_i\right)\left(\sum_{j=1}^{k}n_j\right)+\allowbreak\sum_{i=1}^{k}(\sum_{j=1}^{k}n_j)^2\leq
k(\sum_{i=1}^{k}n_i)^2+2O(n^2)
O(n^2)\;+ \sum_{i=1}^{k}O(n^2)=k O(n^2)^2+2 O(n^2)^2+k O(n^2)^2=O(n^5)$,
resulting into the stated bounds.

The space is bounded by the space required by a single call to the equivariance algorithm
with an imput of size $O(n^2)$, hence $O(n^4)$.
\end{proof}

\section{Conclusion}
\label{sect:conclusion}

The problem of anti-unification for nominal terms-in-context is
sensitive to the set of atoms permitted in generalizations: If this
set is infinite, there is no least general generalization. Otherwise
there exists a unique lgg. If this set is finite and satisfies the
notion of being saturated, defined in the paper, then the lgg retains
the common structure of the input nominal terms maximally.

We illustrated that, similar to some other theories where unification,
generalization, and the subsumption relation are defined, the nominal
terms-in-contexts form a join-meet lattice with respect to the
subsumption relation, where the existence of join is unifiability, and
the meet corresponds to least general generalization.

We designed an anti-unification algorithm for nominal
terms-in-context. It contains a subalgorithm that constructively
decides whether two terms are equivariant with respect to the given
freshness context. We proved termination, soundness, and completeness
of these algorithms, investigated their complexities, and implemented
them. Given a fixed set of atoms $A$, the nominal anti-unification
algorithm computes a least general $A$-based term-in-context
generalization of the given $A$-based terms-in-context, and requires
$O(n^5)$ time and $O(n^4)$ space for that, where $n$ is the size of
the input. The computed lgg is unique modulo $\alpha$-equivalence and
variable renaming.

\section*{Acknowledgment}

This\, research\, has\, been\, partially\, supported\, by\, the\, Spanish\, project\, HeLo
(TIN2012-33042), by the Austrian
Science Fund (FWF) with the project SToUT (P 24087-N18), and by the strategic program ``Innovatives O\"{O} 2010plus'' by the Upper Austrian Government.

\bibliographystyle{abbrv}
\bibliography{anti-unif}

\end{document}